\newcommand{\ihbar}{\imath \hbar}
\renewcommand{\S}{{\mathcal S}}
\newcommand{\E}{{\mathcal E}}
\newcommand{\Ted}{\underset{\rightarrow}{\mathbb{T}e}}
\newcommand{\Teg}{\underset{\leftarrow}{\mathbb{T}e}}
\newcommand{\llangle}{\langle \hspace{-0.2em} \langle}
\newcommand{\rrangle}{\rangle \hspace{-0.2em} \rangle}
\newcommand{\dist}{\mathrm{dist}}
\newcommand{\Ad}{\mathfrak{Ad}}
\newcommand{\id}{\mathrm{id}}
\newtheorem{theo}{Theorem}
\newtheorem{propo}{Proposition}
\newtheorem{cor}{Corollary}
\newenvironment{proof}{\noindent \textit{Proof:}}{\hfill $\Box$ \\}
\begin{document}

\title{Adiabatic theorem for bipartite quantum systems in weak coupling limit}

\author{David Viennot \& Lucile Aubourg}
\address{Institut UTINAM (CNRS UMR 6213, Universit\'e de Franche-Comt\'e), 41bis Avenue de l'Observatoire, BP1615, 25010 Besan\c con cedex, France.}

\begin{abstract}
We study the adiabatic approximation of the dynamics of a bipartite quantum system with respect to one of the components, when the coupling between its two components is perturbative. We show that the density matrix of the considered component is described by adiabatic transport formulae exhibiting operator-valued geometric and dynamical phases. The present results can be used to study the quantum control of the dynamics of qubits and of open quantum systems where the two components are the system and its environment. We treat two examples, the control of an atomic qubit interacting with another one and the control of a spin in the middle of a Heisenberg spin chain.
\end{abstract}

\pacs{03.65.Vf, 02.30.Mv, 03.65.Ud}

\section{Introduction}
A bipartite quantum system consists of two quantum subsystems denoted by $\S$ and $\E$ and described by the Hilbert space $\mathcal H_\S \otimes \mathcal H_\E$. We are interested only by the behaviour of the component $\S$ with a loss of the informations concerning the component $\E$. If $\psi \in \mathcal H_\S \otimes \mathcal H_\E$ is the state of the bipartite system, the subsystem $\S$ is described by the density matrix (the mixed state) $\rho = \tr_\E |\psi \rrangle \llangle \psi|$ ($\llangle .|. \rrangle$ denotes the inner product of $\mathcal H_\S \otimes \mathcal H_\E$) where the partial trace $\tr_\E$ on $\mathcal H_\E$ suppresses the informations concerning $\E$. $\rho$ takes into account the entanglement of $\S$ with $\E$. An interesting problem in the dynamics of a bipartite quantum system is the control of the component $\S$ ``hampered'' by $\E$. Quantum control has potentially a lot of applications in nanoscience and in quantum computing. A problem of quantum control consists to find how acting on $\S$ (by laser fields, magnetic fields, etc) in order to $\S$ evolves from its initial state $\rho_0$ to a predetermined target state $\rho_{target}$ (the goal of the control). The presence of $\E$ can considerably modify the control problem with regard to the control of a pure state in $\mathcal H_\S$ for $\S$ alone. Moreover the control can affect $\E$ directly or indirectly by the coupling between $\E$ and $\S$. The understanding of the dynamics of $\S$ in contact with $\E$ is crucial.\\
Such a situation occurs in quantum information theory where $\S$ and $\E$ are two qubits of a quantum computer or two ensembles of qubits. In this case the control consists to perform a logical gate or a quantum algorithm on $\S$ in the presence of other qubits ($\E$) of the quantum computer. In this example, $\S$ and $\E$ have similar sizes and the role of $\S$ and $\E$ can be exchanged. But dynamics of bipartite quantum systems also occur in the control of open quantum systems where $\S$ is a small subsystem and $\E$ is a large environment responsible for decoherence effects on $\S$. In this case the loss of information models the observer unknowledge concerning $\E$ due to its large size and its complexity.\\
An adiabatic approach \cite{Messiah} can be used to describe quantum control since the variations of the control parameters are often slow. Quantum control schemes based on adiabatic approximation have been proposed for different closed systems \cite{Zanardi,Lucarelli,Santoro,Guerin}. For open quantum systems, adiabatic approaches based on non-hermitian Hamiltonians have been studied \cite{Fleischer,Sarandy1,Sarandy2}. For example, the Lindblad equation (a Markovian approximation of the dynamics of $\S$ in the environment \cite{Breuer}) is considered as a non-hermitian Schr\"odinger equation in the Hilbert-Schmidt space (the so-called Liouville space describing the square trace class operators of $\mathcal H_\S$, i.e. $\tr_\S(A^\dagger A) < \infty$). In these case the adiabatic approximation is based on adiabatic theorems for non-self-adjoint Hamiltonians \cite{Nenciu1,Salem,Joye}. But these works do not focus on bipartite aspect of quantum dynamics, but which has been studied by Sj\"oqvist {\it etal} with the viewpoint of non-adiabatic geometric phases \cite{Sjoqvist,Tong,Dajka}. Recently operator-valued geometric phases have been proposed as generalizations of the Sj\"oqvist geometric phases, in the context of cyclic (non-adiabatic) evolutions \cite{Andersson} and of adiabatic evolutions \cite{Viennot1,Viennot2}. Nevertheless a rigorous study of the adiabatic regime of a bipartite quantum system exhibiting operator valued geometric phases has never been realized.\\
The goal of this paper is to show that under common assumptions concerning the evolution of the bipartite system, the evolution of the density matrix satisfies adiabatic transport formulae exhibiting operator-valued geometric (and dynamical) phases. Section II establishes adiabatic theorems with regard to a discussion concerning the different time scales involved in the dynamics of a bipartite quantum system. In particular we consider two adiabatic regimes. These results are based on the Nenciu adiabatic theorem \cite{Nenciu2} which considers spectral components and not only one eigenvalue. Adiabatic transport formulae for the density matrix are obtained in section III. We show that these formulae (for the weak adiabatic regime) exhibit operator valued geometric phases similar to the ones introduced in the previous works \cite{Sjoqvist,Tong,Dajka,Andersson,Viennot1,Viennot2}. The generator of the operator valued dynamical phase also exhibited by these formulae can appear as an effective Hamiltonian of $\S$ dressed by $\E$. The adiabatic transport of the density matrix for the second order perturbative expansion, satisfies a kind of effective Lindblad equation. We discuss also in this section the specific case of the open quantum systems where $\E$ is a thermal bath. Throughout this paper we consider the case of a weak coupling between $\S$ and $\E$. Indeed to enlighten the individual behaviour of the component $\S$ inside the bipartite system through the viewpoint of the adiabatic approximation, it needs to explain the relation between the eigenvectors of the bipartite system and the eigenvectors of its component $\S$. This requires a perturbative analysis. Finally section IV presents two examples: a qubit realized as a two level atom in a laser field in the rotating wave approximation with a perturbative interaction with another atom; and a spin controlled by a magnetic field in the middle of a ferromagnetic spin chain with Heisenberg coupling between the nearest neighbours.

\section{Strong and weak adiabatic theorems}
\subsection{Preliminary discussion}
\label{21}
We consider a bipartite quantum system of Hilbert space $\mathcal H_\S \otimes \mathcal H_\E$ governed by the time-dependent Hamiltonian
\begin{equation}
H(t) = H_\S(t) \otimes 1_\E + 1_\S \otimes H_\E(t) + \epsilon V(t)
\end{equation}
where $H_\S \in \mathcal L(\mathcal H_\S)$ is the selfadjoint Hamiltonian of the component $\S$, $H_\E \in \mathcal L(\mathcal H_\E)$ is the selfadjoint Hamiltonian of the component $\E$, $V \in \mathcal L(\mathcal H_\S \otimes \mathcal H_\E)$ is the coupling operator between $\S$ and $\E$, and $\epsilon \in \mathcal V(0)$ is a perturbative parameter ($\mathcal V(0)$ denotes the neighbourhood of $0$). Let $\{\mu_b\}_b$ and $\{\nu_\beta\}_\beta$ be the pure point spectra of $H_\S$ and $H_\E$ (for the sake of simplicity we suppose that all these eigenvalues are not globally denegenerate, e.g. these eigenvalues are non degenerate for all $t>0$ except possibly for a finite number of isolated moments $t_*$) and $\{\zeta_b\}_b$ and $\{\xi_\beta\}_\beta$ be the associated normalized eigenvectors. The eigenvalues are supposed at least $\mathcal C^0$ and the eigenvectors are supposed at least $\mathcal C^1$ with respect to $t$.
\begin{eqnarray}
H_\S(t) \zeta_b(t) & = & \mu_b(t) \zeta_b(t) \qquad \zeta_b \in \mathcal H_\S, \mu_b \in \mathbb R \\
H_\E(t) \xi_\beta(t) & = & \nu_\beta(t) \xi_\beta(t) \qquad \xi_\beta \in \mathcal H_\E, \nu_\beta \in \mathbb R
\end{eqnarray}
Let $\{\lambda_{b\beta}\}_{b,\beta}$ be the perturbed pure point spectrum of $H$ and $\{\phi_{b \beta}\}_{b,\beta}$ be the associated eigenvectors.
\begin{equation}
H(t) \phi_{b\beta}(t) = \lambda_{b\beta}(t) \phi_{b\beta}(t) \qquad \phi_{b\beta} \in \mathcal H_\S \otimes \mathcal H_\E, \lambda_{b \beta} \in \mathbb R
\end{equation}
\begin{eqnarray}
\lim_{\epsilon \to 0} \phi_{b \beta}(t) & = & \zeta_b(t) \otimes \xi_\beta(t) \\ \lim_{\epsilon \to 0} \lambda_{b \beta}(t) & = & \mu_b(t) + \nu_\beta(t) 
\end{eqnarray}
(with the quantum state limit defined with the norm topology associated with $\llangle .|. \rrangle$). The first order approximations (by using the Rayleigh-Schr\"odinger perturbation method) are
\begin{eqnarray}
\lambda_{b \beta} & = & \mu_b + \nu_\beta + \epsilon V_{b\beta,b\beta} + \mathcal O(\epsilon^2) \\
\phi_{b \beta} & = & \zeta_b \otimes \xi_\beta + \epsilon \sum_{(c\gamma)\not=(b\beta)} \frac{V_{c\gamma,b\beta}}{\mu_b-\mu_c+\nu_\beta-\nu_\gamma} \zeta_c \otimes \xi_\gamma + \mathcal O(\epsilon^2)
\end{eqnarray}
where $V_{c\gamma,b\beta} = \llangle \zeta_c \otimes \xi_\gamma|V \zeta_b \otimes \xi_\beta \rrangle$. We consider the dynamics of the bipartite system starting from $\phi_{a\alpha}$, i.e.
\begin{equation}
\ihbar \frac{d\psi(t)}{dt} = H(t) \psi(t) \qquad \psi(0) = \phi_{a\alpha}(0)
\end{equation}
From the viewpoint of the control of $\S$, there are three time scales:
\begin{itemize}
\item $T$ the total duration of the evolution (the duration of the control);
\item $\tau_\S$ the quantum proper time characterizing the transition of $\S$ from $\zeta_a$ to another eigenvector, induced by the control (the Rabi period of the first transition involving $\zeta_a$, e.g. $\tau_\S = \sup_{t\in[0,T]} \max_{b \not= a} \frac{\hbar}{|\mu_b - \mu_a|}$);
\item $\theta^\epsilon$ the time characterizing the perturbation of $\S$ by $\E$.
\end{itemize}
We remark that the non-selfadjoint models \cite{Nenciu1,Salem,Joye} exhibit also three time scales (the duration of the evolution, the time characterizing the quantum transitions, and the time characterizing the dissipation -- the inverse of the resonance width --).\\
There are then three adiabatic regimes:
\begin{itemize}
\item $\tau_\S \ll \theta^\epsilon$ and $\tau_\S \ll T$ (very strong adiabatic regime);
\item $\tau_\S \sim \theta^\epsilon$ and $\theta^\epsilon \ll T$ (strong adiabatic regime);
\item $\tau_\S \sim \theta^\epsilon$ and $\theta^\epsilon \sim T$, the evolution of $\E$ being assumed to be very strongly adiabatic (weak adiabatic regime).
\end{itemize}
More precisely, consider the non-adiabatic couplings (for $(b\beta) \not= (a\alpha)$):
\begin{equation}
\llangle \phi_{a\alpha} | \dot \phi_{b\beta} \rrangle = \frac{\hbar}{T(\lambda_{b\beta} - \lambda_{a\alpha})} \llangle \phi_{a\alpha} | \hbar^{-1} H'| \phi_{b \beta} \rrangle
\end{equation}
where a dot denotes the derivative with respect to $t$ and a prime denotes the derivative with respect to the reduced time $s = t/T$. If $\forall b\not=a$, $\inf_{t \in [0,T]} |\mu_b - \mu_a| = \mathcal O(1)$ ($\iff \tau_\S \ll \theta^\epsilon$, $\mathcal O(1)$ means a gap condition very large with respect to $\epsilon$ [a value of zero order in $\epsilon$]) then
\begin{eqnarray}
\frac{\hbar}{T(\lambda_{b\beta} - \lambda_{a\alpha})} & = & \frac{\hbar}{T (\mu_b-\mu_a) \left(1 + \frac{\nu_\beta-\nu_\alpha}{\mu_b-\mu_a} \right)} \nonumber \\
& & \quad - \frac{\hbar \epsilon (V_{b\beta,b\beta}-V_{a\alpha,a\alpha})}{T(\mu_b-\mu_a)^2 \left(1 + \frac{\nu_\beta-\nu_\alpha}{\mu_b-\mu_a} \right)^2} + \mathcal O(\epsilon^2)
\end{eqnarray}
By assuming that $\Delta = \inf_{t\in[0,T]} \min_{\beta\not=\alpha} \max_{b \not=a} \left|1 + \frac{\nu_\beta-\nu_\alpha}{\mu_b-\mu_a} \right| = \mathcal O(1)$ (no resonance between a transition of $\S$ from $\zeta_a$ and a transition of $\E$ from $\xi_\alpha$) we have
\begin{equation}
\left|\frac{\hbar}{T(\lambda_{b\beta} - \lambda_{a\alpha})} \right| \leq \frac{\tau_\S}{T \Delta} + \frac{\tau_\S^2}{T \theta^\epsilon \Delta^2} + \mathcal O(\epsilon^2)
\end{equation} 
with $\tau_S = \sup_{t \in [0,1]} \max_{b \not=a} \frac{\hbar}{|\mu_b-\mu_a|}$ and $\theta^\epsilon = \inf_{t \in [0,T]} \min_{(b\beta) \not= (a\alpha)} \frac{\hbar}{\epsilon|V_{b\beta,b\beta}-V_{a\alpha,a\alpha}|}$. If $T$ is chosen like $\tau_\S \ll T$, then $\forall (b\beta) \not= (a\alpha)$, $|\llangle \phi_{a\alpha} | \dot \phi_{b\beta} \rrangle| \ll 1$. All non-adiabatic couplings being negligible, we can think that the system remains projected only onto $\phi_{a\alpha}(t)$ during the whole dynamics. This is the very strong adiabatic regime which corresponds to an adiabatic evolution of the whole bipartite system.\\
Now if $\inf_{t\in[0,T]} \min_{b \not=a} |\mu_b - \mu_a| = \mathcal O(\epsilon)$ ($\iff \tau_\S \sim \theta^\epsilon$) we have
\begin{itemize}
\item if $\alpha \not= \beta$ and $\inf_{t \in [0,T]} |\nu_\beta-\nu_\alpha| = \mathcal O(1)$:
\begin{eqnarray}
\frac{\hbar}{T(\lambda_{b\beta}-\lambda_{a\alpha})} & = & \frac{\hbar}{T(\nu_\beta-\nu_\alpha)} \nonumber \\
& & \quad - \frac{\hbar \epsilon(\tilde \mu_b - \tilde \mu_a + V_{b\beta,b\beta}-V_{a\alpha,a\alpha})}{T(\nu_\beta - \nu_\alpha)^2} + \mathcal O(\epsilon^2)
\end{eqnarray}
where $\tilde \mu_b = \frac{\mu_b}{\epsilon}$ ($\inf_{t\in[0,T]} \min_{b \not=a} |\tilde \mu_b - \tilde \mu_a| = \mathcal O(1)$).
\item if $\alpha = \beta$ or $\exists t_*$ such that $\nu_\beta(t_*) = \nu_\alpha(t_*)$: since $\lambda_{b\beta} = \epsilon (\tilde \mu_b + V_{b\beta,b\beta}) + \nu_\beta$ we have
\begin{equation}
\frac{\hbar}{T(\lambda_{b\beta}-\lambda_{a\alpha})} = \frac{\hbar}{T\epsilon (\tilde \mu_b - \tilde \mu_a + V_{b\beta,b\beta} - V_{a\alpha,a\alpha})}
\end{equation}
for all $t$ if $\alpha=\beta$ or only at $t=t_*$.
\end{itemize}
 We have
\begin{itemize}
\item if $\alpha \not= \beta$ and $\inf_{t \in [0,T]} |\nu_\beta-\nu_\alpha| = \mathcal O(1)$:
\begin{equation}
\left|\frac{\hbar}{T(\lambda_{b\beta}-\lambda_{a\alpha})} \right| \leq \frac{\tau_\E}{T} + \frac{\tau_\E^2}{T \theta^\epsilon} + \mathcal O(\epsilon^2)
\end{equation}
\item if $\alpha = \beta$ or $\exists t_*$ such that $\nu_\beta(t_*) = \nu_\alpha(t_*)$:
\begin{equation}
\left| \frac{\hbar}{T(\lambda_{b\beta}-\lambda_{a\alpha})} \right| \leq \frac{\tau_\S^\epsilon}{T}
\end{equation}
for all $t$ if $\alpha=\beta$ or only at $t=t_*$. 
\end{itemize}
with $\tau_\E = \sup_{t\in[0,T]} \max_{\beta\not=\alpha} \frac{\hbar}{|\nu_\beta-\nu_\alpha|}$, $\theta^\epsilon = \inf_{t \in [0,T]} \min_{(b\beta)\not=(a\alpha)} \frac{\hbar}{\epsilon(\tilde \mu_b - \tilde \mu_a + V_{b\beta,b\beta} - V_{a\alpha,a\alpha})}$ and $\tau_\S^\epsilon = \sup_{t \in [0,T]} \max_{\beta s.t. \nu_\beta(t_*)= \nu_\alpha(t_*)} \frac{\hbar}{\epsilon(\tilde \mu_b - \tilde \mu_a + V_{b\beta,b\beta} - V_{a\alpha,a\alpha})}$ ($\tau_\S^\epsilon$ is the time characterizing the transition of $\S$ from $\zeta_a$ to another eigenvector induced by the action of $\E$ on $\S$). If $T$ and $\epsilon$ are chosen such that $\theta^\epsilon \sim \tau^\epsilon_\S \ll T$, then $\forall (b\beta) \not=(a\alpha)$, $|\llangle \phi_{a\alpha}|\dot \phi_{b\beta} \rrangle| \ll 1$. In this strong adiabatic regime, as in the very strong adiabatic regime, the system remains projected only onto $\phi_{a\alpha}(t)$ during the whole dynamics. In contrast, if $T$ and $\epsilon$ are chosen such that $\theta^\epsilon \sim \tau^\epsilon_\S \sim T$, then $\forall \beta \not= \alpha$, $|\llangle \phi_{a\alpha}|\dot \phi_{b\beta} \rrangle| \ll 1$ if we assume that $\tau_\E \ll T$, but $|\llangle \phi_{a\alpha}|\dot \phi_{b\alpha} \rrangle| \not\ll 1$. In this weak adiabatic regime, the system remains projected onto the space spanned by the eigenvectors related to $\xi_\alpha$, but transitions between the eigenstate related to $\zeta_a$ to an eigenstate related to another $\zeta_b$ are possible due to non-adiabatic transitions induced by $\E$ on $\S$ (and not directly by the control).\\

In the strong and the very strong adiabatic regimes $\S$ and $\E$ evolve adiabatically with regard to the control (and $\S$ evolves adiabatically with regard to $\E$ in the very strong adiabatic regime). In the weak adiabatic regime, only $\E$ evolves adiabatically with regard to the control, then the evolution of $\S$ can be richer and it is in this case that the adiabatic transport of the density matrix can potentially exhibit operator-valued phases. We note that this weak adiabatic regime is the more interesting from the viewpoint of the quantum control. Indeed, in general, quantum control problems are characterized by the condition $H(T) = H(0)$ since we start and we end with control system off. This induces that $\phi_{a\alpha}(T) = \phi_{a\alpha}(0)$ and in the strong and the very strong adiabatic regimes we have $\rho(T) = \rho(0)$. In contrast due to the possible transitions in the weak adiabatic regime, which are characterized by an operator-valued phase $U \in \mathcal U(\mathcal H_\S)$, we can have $\rho(T) = U \rho(0) U^\dagger$ ($\mathcal U(\mathcal H_\S)$ denotes the set of unitary operators of $\mathcal H_\S$). The answer of the control problem consists then to find the time dependent modulation of the control system such that $U$ transforms $\rho(0)$ to $\rho_{target}$ (or at least such that $\|U \rho(0) U^\dagger - \rho_{target}\|$ to be minimal). The assumption, stating that the evolution of $\E$ must be adiabatic, is natural in this context since it corresponds to require that transitions in $\E$ do not hamper the adiabatic control by generating kinematic decoherence (see section \ref{crossing} and ref. \cite{Viennot3}).\\

The discussion presented here is heuristic, the following section presents rigorous results.

\subsection{Adiabatic theorems}

\begin{theo}[Strong adiabatic theorem for bipartite quantum systems] \label{strongth}
Let \\$[0,1] \ni s \mapsto H(s) = H_\S(s) \otimes 1_\E + 1_\S \otimes H_\E(s) + \epsilon V(s)$ be a family of selfadjoint Hamiltonians of a bipartite quantum system such that $\forall T >0$, $\ihbar \psi'(s) = T H(s) \psi(s)$ has continuous solutions in the norm topology, and such that $V$ is $(H_S \otimes 1_\E + 1_\S \otimes H_\E)$-bounded. Let $\{\mu_b\}_b$ and $\{\nu_\beta\}_\beta$ be the pure point spectra of $H_\S$ and $H_\E$ and $\{\zeta_b\}_b$ and $\{\xi_\beta\}_\beta$ be the associated normalized eigenvectors. Let $\{\phi_{b \beta}\}_{b\beta}$ be the normalized eigenvectors of $H$ continuously linked to $\{\zeta_b \otimes \xi_\beta\}_{b \beta}$ when $\epsilon \to 0$ (in the norm topology). We consider the case where $\psi(0) = \phi_{a\alpha}(0)$. For the sake of simplicity we suppose that each eigenvalue is non degenerate and that $H_\S$ and $H_\E$ do not have continuous spectrum.  We assume the following conditions:
\begin{enumerate}
\item $\forall b,\beta$, $s \mapsto \mu_b(s)$ and $s \mapsto \nu_\beta(s)$ are $\mathcal C^1$;  $s \mapsto \zeta_b(s)$ and $s \mapsto \xi_\beta(s)$ are $\mathcal C^2$ in the norm topology.
\item No resonance between transitions of $\S$ and $\E$ involving $\zeta_a \otimes \xi_\alpha$ occurs, i.e. $\forall s \in [0,1]$, $\forall (b\beta) \not= (a\alpha)$, $\mu_b(s) + \nu_\beta(s) + \epsilon V_{b\beta}(s) \not= \mu_a(s) + \nu_\alpha(s) + \epsilon V_{a\alpha}(s) $.
\item The perturbed energies of $\S$ satisfy a gap condition of order $\epsilon$ with $\mu_a$:
\begin{equation}
\inf_{s \in [0,1]} \min_{(b\beta)\not=(a\alpha)} | \mu_b(s) + \epsilon V_{b\beta,b\beta}(s) - \mu_a(s) - \epsilon V_{a\alpha,a\alpha}(s)| = \mathcal O(\epsilon)
\end{equation} 
\end{enumerate}
Then we have
\begin{equation}
\forall s \in [0,1], \quad P_{a\alpha}(s) \psi(s) = \psi(s) + \mathcal O(\frac{1}{T\epsilon})
\end{equation}
with $P_{a\alpha}(s) = |\phi_{a\alpha}(s)\rrangle \llangle \phi_{a\alpha}(s)|$ the orthogonal projection onto $\phi_{a\alpha}$.
\end{theo}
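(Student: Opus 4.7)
The plan is to apply the single-projection Kato--Nenciu adiabatic theorem to the perturbed eigenvalue $\lambda_{a\alpha}(s)$ of the total Hamiltonian $H(s)$; the only non-standard point is to track how the $\mathcal{O}(\epsilon)$ spectral gap dictates the scaling of the error. First I would verify that $\lambda_{a\alpha}(s)$ is isolated from the rest of $\sigma(H(s))$ by a gap $\delta(s) \geq c\,\epsilon$ uniformly on $[0,1]$: for $\beta = \alpha$ this is precisely hypothesis (iii) combined with the Rayleigh--Schr\"odinger expansion of $\lambda_{b\alpha}-\lambda_{a\alpha}$; for $\beta \neq \alpha$, hypothesis (ii) together with $\lambda_{b\beta}-\lambda_{a\alpha} = \mu_b-\mu_a + \nu_\beta - \nu_\alpha + \mathcal{O}(\epsilon)$ yields a gap bounded below by a constant (hence \emph{a fortiori} by $\mathcal{O}(\epsilon)$). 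Consequently one can select for each $s$ a smooth contour $\Gamma(s) \subset \mathbb{C}$ of length $\ell(s) = \mathcal{O}(\epsilon)$ enclosing $\lambda_{a\alpha}(s)$ and no other eigenvalue, on which $\|R(s,z)\| := \|(H(s)-z)^{-1}\| = \mathcal{O}(\epsilon^{-1})$.

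Next, I would set up the standard Kato machinery: the projection $P_{a\alpha}(s) = -\frac{1}{2\pi\imath}\oint_{\Gamma(s)} R(s,z)\, dz$ (which is $\mathcal{C}^2$ by hypothesis (i)), the adiabatic generator $K(s) = \ihbar [P'_{a\alpha}(s), P_{a\alpha}(s)]$, and the adiabatic propagator $W_A$ defined by $\ihbar W_A'(s) = K(s) W_A(s)$ with $W_A(0)=\id$. The intertwining identity $W_A(s) P_{a\alpha}(0) = P_{a\alpha}(s) W_A(s)$ follows as in Kato's paper and is independent of $T$. A direct resolvent calculation using $\dot R = -R\dot H R$ and the commutator identity $[H,P'_{a\alpha}] = [\dot H, P_{a\alpha}]$ gives the key bounds $\|P'_{a\alpha}\|,\|K\| = \mathcal{O}(\epsilon^{-1})$.

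The heart of the argument is the comparison of $W_A(s)$ with the exact propagator $U_T(s)$ (solution of $\ihbar U'_T = T H U_T$). Writing $U_T(s) - W_A(s) P_{a\alpha}(0)$ as a Duhamel integral and applying the standard Nenciu integration-by-parts trick on the resolvent representation of the off-diagonal part yields
\begin{equation*}
\| (U_T(s) - W_A(s))\, P_{a\alpha}(0) \| \leq \frac{C}{T} \sup_{s,\, z \in \Gamma(s)} \bigl( \|R(s,z)\|^2 \|\dot H(s)\| + \|R(s,z)\|\, \|\ddot H(s)\| \bigr)\, \ell(s).
\end{equation*}
The potentially dangerous $\epsilon^{-2}$ coming from $\|R\|^2$ is compensated by the single factor $\ell(s) = \mathcal{O}(\epsilon)$, so the whole right-hand side is $\mathcal{O}((T\epsilon)^{-1})$. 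Since $\psi(0)=\phi_{a\alpha}(0)$ satisfies $P_{a\alpha}(0)\psi(0) = \psi(0)$ we have $W_A(s)\psi(0) = P_{a\alpha}(s) W_A(s)\psi(0)$, hence $(1-P_{a\alpha}(s))\psi(s) = (1-P_{a\alpha}(s))(U_T(s)-W_A(s))\psi(0) = \mathcal{O}((T\epsilon)^{-1})$, which rearranges into the announced bound.

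The main obstacle will be ensuring the cancellation of negative powers of $\epsilon$ in this final estimate: naively, each differentiation of $R$ costs a factor $\epsilon^{-1}$, and without the explicit compensation by $\ell(\Gamma) = \mathcal{O}(\epsilon)$ together with the commutator identity for $P'_{a\alpha}$ one ends up with a spurious $\epsilon^{-2}$. A secondary (minor) difficulty is the possible finite set of crossing instants $t_*$ mentioned when the spectra of $H_\S$ and $H_\E$ are introduced; one must verify that the uniform gap asserted in (ii)--(iii) effectively rules them out at the level of the full Hamiltonian $H$, or otherwise treat them by a local rescaling argument.
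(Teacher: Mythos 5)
Your strategy (Riesz projection, Kato intertwining propagator, Duhamel formula plus Nenciu-type integration by parts on the resolvent) is a genuinely different route from the paper, which instead expands $\psi$ on the instantaneous eigenbasis $\{\phi_{d\delta}\}$, integrates the oscillatory phases by parts and closes with a Volterra/Dyson argument. But your key quantitative claim has a genuine gap. The displayed inequality bounding $\|(U_T-W_A)P_{a\alpha}(0)\|$ by $\frac{C}{T}\sup\bigl(\|R\|^2\|\dot H\|+\|R\|\,\|\ddot H\|\bigr)\ell(s)$ is not the standard estimate and is not derived; the standard gap-dependent bound is, schematically, $\frac{C}{T}\bigl[\|\dot H\|g^{-2}\big|_{\mathrm{boundary}}+\int(\|\ddot H\|g^{-2}+\|\dot H\|^2 g^{-3})\bigr]$, which with $g=\mathcal O(\epsilon)$ and $\|\dot H\|,\|\ddot H\|=\mathcal O(1)$ yields only $\mathcal O(\frac{1}{T\epsilon^2})$--$\mathcal O(\frac{1}{T\epsilon^3})$, not the stated $\mathcal O(\frac{1}{T\epsilon})$. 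The compensation you invoke from $\ell(\Gamma)=\mathcal O(\epsilon)$ is already consumed in producing $P_{a\alpha}'$ itself; the solution $Y$ of the commutator equation (needed to write the off-diagonal part as $[H,Y]$ before integrating by parts) carries two further resolvents, and its derivative brings $R'=-R\dot H R$, each costing another factor $\epsilon^{-1}$ that nothing in your argument cancels. A sanity check that your inequality cannot be correct as stated: its second term $\|R\|\,\|\ddot H\|\,\ell=\mathcal O(1)$ would give an error $\mathcal O(1/T)$ uniformly in the gap, which is known to be impossible for a generic gap-$\epsilon$ problem.

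What actually rescues the $\frac{1}{T\epsilon}$ rate --- and what the paper's proof uses, implicitly, through hypothesis (i) --- is structural rather than generic: the small gap comes from near-degeneracy of the decoupled levels $\mu_b+\nu_\beta+\epsilon V_{b\beta,b\beta}$ whose eigenvectors $\zeta_b\otimes\xi_\beta$ are assumed $\mathcal C^2$ with bounded derivatives, so the non-adiabatic couplings $\llangle \phi_{b\beta}|\phi_{a\alpha}'\rrangle$ remain $\mathcal O(1)$ uniformly in $\epsilon$ (equivalently, $\langle \zeta_b|H_\S'\,\zeta_d\rangle=(\mu_d-\mu_b)\langle\zeta_b|\zeta_d'\rangle=\mathcal O(\epsilon)$ inside the $\epsilon$-cluster, i.e.\ $\dot H$ is almost diagonal there). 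Consequently only one explicit small denominator $(\lambda_{a\alpha}-\lambda_{d\delta})^{-1}=\mathcal O(\epsilon^{-1})$ appears, through a single integration by parts, which is how the paper reaches $\mathcal O(\frac{1}{T\epsilon})$. Your proposal never uses this input --- indeed your bound $\|P_{a\alpha}'\|=\mathcal O(\epsilon^{-1})$ is an overestimate of what the hypotheses actually give, namely $\mathcal O(1)$ --- and with only the gap, $\|R\|=\mathcal O(\epsilon^{-1})$ and $\|\dot H\|,\|\ddot H\|=\mathcal O(1)$ as inputs the projection/resolvent route cannot reach the stated remainder. To repair your argument you would need to decompose $\dot H$ (and $\ddot H$) with respect to the near-degenerate cluster and prove the $\mathcal O(\epsilon)$ smallness of their off-diagonal elements there before estimating $Y$ and $Y'$; at that point the computation essentially reduces to the paper's eigenbasis argument.
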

We remark that we can write also $P_{a\alpha}(s) \psi(s) = \psi(s) + \mathcal O(\frac{\theta^\epsilon}{T})$.\\
\begin{proof}
$\forall s \in [0,1]$, $(\phi_{d\delta}(s))_{d,\delta}$ being a complete basis of the domain of $H(s)$ we can write:
\begin{equation}
\psi(s) = \sum_{d\delta} c_{d\delta}(s) e^{-\ihbar^{-1} T\int_0^s \lambda_{d\delta}(\sigma)d\sigma} \phi_{d\delta}(s)
\end{equation}
for some $c_{d\delta}(s) \in \mathbb C$. By injecting this expression in the Schr\"odinger equation $\ihbar \psi' = TH\psi$ and projecting the result onto $\phi_{b\beta}(s)$, we find
\begin{equation}
c'_{b\beta}(s) = - \sum_{d\delta} c_{d\delta}(s) e^{\ihbar^{-1} T \int_0^s (\lambda_{b\beta}(\sigma)-\lambda_{d\delta}(\sigma))d\sigma} \llangle \phi_{b\beta}(s)|\phi'_{d\delta}(s) \rrangle
\end{equation}
By an integration of this expression with respect to $s$, we find
\begin{eqnarray}
& & c_{b\beta}(s) =  c_{b\beta}(0) \nonumber \\
& & \qquad  - \sum_{d\delta} \int_0^s c_{d\delta}(\sigma) e^{\ihbar^{-1} T \int_0^\sigma (\lambda_{b\beta}(\varsigma)-\lambda_{d\delta}(\varsigma))d\varsigma} \llangle \phi_{b\beta}(\sigma)|\phi'_{d\delta}(\sigma) \rrangle d\sigma
\end{eqnarray}
With an integration by parts we have
\begin{eqnarray}
\label{devc}
& & c_{b\beta}(s)  =  \nonumber \\
& &  c_{b\beta}(0) - \int_0^s c_{b\beta}(\sigma) \llangle \phi_{b\beta}(\sigma)|\phi'_{b\beta}(\sigma)\rrangle d\sigma \nonumber \\
& & - \sum_{(d\delta) \not= (b\beta)} \left( \left[\frac{c_{d\delta}(\sigma) \llangle \phi_{b\beta}(\sigma)|\phi'_{d\delta}(\sigma) \rrangle}{\ihbar^{-1}T (\lambda_{b\beta}(\sigma)-\lambda_{d\delta}(\sigma))} e^{\ihbar^{-1} T \int_0^\sigma (\lambda_{b\beta}(\varsigma)-\lambda_{d\delta}(\varsigma))d\varsigma} \right]^s_0 \right. \nonumber \\
& &  \left. + \int_0^s \frac{e^{\ihbar^{-1} T \int_0^\sigma (\lambda_{b\beta}(\varsigma)-\lambda_{d\delta}(\varsigma))d\varsigma}}{\ihbar^{-1}T}  \left(\frac{c_{d\delta}(\sigma) \llangle \phi_{b\beta}(\sigma)|\phi'_{d\delta}(\sigma) \rrangle}{(\lambda_{b\beta}(\sigma)-\lambda_{d\delta}(\sigma))} \right)' d\sigma \right)
\end{eqnarray}
By a first order perturbation we have $\forall (d\delta) \not= (a\alpha)$
\begin{equation}
\lambda_{a\alpha} - \lambda_{d\delta} = \mu_a - \mu_d + \nu_\alpha - \nu_\delta + \epsilon(V_{a\alpha,a\alpha}-V_{d\delta,d\delta}) + \mathcal O(\epsilon^2)
\end{equation}
Because of the gap and the no resonance conditions, at least $|\lambda_{a\alpha} - \lambda_{d\delta}| = \mathcal O(\epsilon)$ even if $\alpha = \delta$ or $\exists s_*$ such that $\nu_\delta(s_*) = \nu_\alpha(s_*)$. All the other quantities appearing in equation (\ref{devc})  are bounded. Indeed, the eigenvectors being $\mathcal C^2$, $\phi_{d\delta}'$ and $\phi_{d\delta}''$ are defined and bounded on $[0,1]$ ($\sup_{s \in [0,1]} \|\phi_{d\delta}'\|<+\infty$ and $\sup_{s \in [0,1]} \|\phi_{d\delta}''\|<+\infty$) ; the eigenvalues being $\mathcal C^1$, $\lambda_{d\delta}'$ is defined and is bounded on $[0,1]$ ($\sup_{s \in [0,1]} |\lambda_{d\delta}'| < + \infty$) ; moreover $c_{d\delta}<1$ and is $\mathcal C^1$ (because $c_{d\delta} = e^{\ihbar^{-1} T\int_0^s \lambda_{d\delta}d\sigma} \llangle \phi_{d\delta}|\psi\rrangle$ with $\psi$ which is $\mathcal C^1$ as a solution of the Schr\"odinger equation). It follows that
\begin{equation}
\left| \frac{c_{d\delta} \llangle \phi_{b\beta}|\phi'_{d\delta}\rrangle}{\ihbar^{-1}T} \right| \leq \frac{\sup_{s \in [0,1]} \|\phi'_{d\delta}\|}{\hbar^{-1} T} = \mathcal O(\frac{1}{T})
\end{equation}
implying that the third term of equation (\ref{devc}) is $\mathcal O(\frac{1}{T\epsilon})$.
\begin{eqnarray}
\left(\frac{c_{d\delta} \llangle \phi_{b\beta}|\phi'_{d\delta} \rrangle}{\ihbar^{-1}T(\lambda_{b\beta}-\lambda_{d\delta})} \right)' & = & \frac{c_{d\delta}' \llangle \phi_{b\beta} | \phi'_{d\delta} \rrangle}{\ihbar^{-1}T(\lambda_{b\beta}-\lambda_{d\delta})} \nonumber \\
& & + \frac{c_{d\delta} \llangle \phi_{b\beta}' | \phi'_{d\delta} \rrangle}{\ihbar^{-1}T(\lambda_{b\beta}-\lambda_{d\delta})} \nonumber \\
& &+  \frac{c_{d\delta} \llangle \phi_{b\beta}|\phi''_{d\delta} \rrangle}{\ihbar^{-1}T(\lambda_{b\beta}-\lambda_{d\delta})} \nonumber \\
& & - \frac{(\lambda_{b\beta}'-\lambda_{d\delta}') c_{d\delta} \llangle \phi_{b\beta}|\phi'_{d\delta} \rrangle}{\ihbar^{-1}T(\lambda_{b\beta}-\lambda_{d\delta})^2}
\end{eqnarray}
then
\begin{equation}
\left|\frac{c_{d\delta}' \llangle \phi_{b\beta} | \phi'_{d\delta} \rrangle}{\ihbar^{-1}T}\right| \leq \frac{\sup_{s\in[0,1]} |c'_{d\delta}| \sup_{s\in[0,1]} \| \phi'_{d\delta}\|}{\hbar^{-1}T} = \mathcal O(\frac{1}{T})
\end{equation}
\begin{equation}
\left|\frac{c_{d\delta} \llangle \phi_{b\beta}' | \phi'_{d\delta} \rrangle}{\ihbar^{-1}T} \right| \leq \frac{\sup_{s\in[0,1]} \|\phi_{b\beta}'\| \sup_{s\in[0,1]}\| \phi'_{d\delta}\|}{\hbar^{-1}T} = \mathcal O(\frac{1}{T})
\end{equation}
\begin{equation}
\left|\frac{c_{d\delta} \llangle \phi_{b\beta}|\phi''_{d\delta} \rrangle}{\ihbar^{-1}T} \right| \leq \frac{\sup_{s\in[0,1]} \|\phi''_{d\delta}\|}{\hbar^{-1}T}  = \mathcal O(\frac{1}{T})
\end{equation}
\begin{eqnarray}
\left|\frac{(\lambda_{b\beta}'-\lambda_{d\delta}') c_{d\delta} \llangle \phi_{b\beta}|\phi'_{d\delta} \rrangle}{\ihbar^{-1}T} \right| & \leq &  \frac{\sup_{s\in[0,1]} |\lambda_{b\beta}'-\lambda_{d\delta}'| \sup_{s\in[0,1]} \|\phi'_{d\delta}\|}{\hbar^{-1} T} \nonumber \\
& & \qquad = \mathcal O(\frac{1}{T})
\end{eqnarray}
This implies that the fourth term of equation (\ref{devc}) is $\mathcal O(\frac{1}{T\epsilon})$.\\
Finally for $(b\beta)=(a\alpha)$ we have
\begin{equation}
\label{equc}
c_{a\alpha}(s) = 1 - \int_0^s c_{a\alpha}(\sigma) \llangle \phi_{a\alpha}(\sigma)|\phi'_{a\alpha}(\sigma)\rrangle d\sigma + \mathcal O(\frac{1}{T\epsilon})
\end{equation}
and for $\forall (b\beta) \not=(a\alpha)$ we have
\begin{eqnarray}
& & c_{b\beta}(s)  = \nonumber \\
& & - \sum_{(d\delta)\not=(a\alpha)} \int_0^s c_{d\delta}(\sigma) e^{\ihbar^{-1} T \int_0^\sigma (\lambda_{b\beta}(\varsigma)-\lambda_{d\delta}(\varsigma))d\varsigma} \llangle \phi_{b\beta}(\sigma)|\phi'_{d\delta}(\sigma) \rrangle d\sigma \nonumber \\
& & + \mathcal O(\frac{1}{T\epsilon})
\end{eqnarray}
(note that $c_{a\alpha}(0) = 1$ and $c_{b\beta}(0) = 0$). This last expression being the integral equation of Dyson series for null initial condition, we have $c_{b\beta}(s) = \mathcal O(\frac{1}{T\epsilon})$. Finally we conclude that
\begin{equation}
\psi(s) = c_{a\alpha}(s) e^{-\ihbar^{-1} T\int_0^s \lambda_{a\alpha}(\sigma)d\sigma} \phi_{a\alpha}(s) + \mathcal O(\frac{1}{T\epsilon})
\end{equation}
with $c_{a\alpha}(s) = e^{- \int_0^s \llangle \phi_{a\alpha}(\sigma)|\phi'_{a\alpha}(\sigma) \rrangle d\sigma}$ because equation (\ref{equc}) is the integral equation of an exponential map with the initial condition equal to $1$.
\end{proof}
As the very strong adiabatic regime is obtained with an usual adiabatic theorem applied on the bipartite quantum system, it can be considered as a particular case of the previous theorem where the gap condition is stronger and where the remainder of the adiabatic approximation is smaller ($\mathcal O(\frac{1}{T})$ in place of $\mathcal O(\frac{1}{T\epsilon})$).\\
We can remark that assumption \textit{(iii)} is equivalent to $\tau_{\mathcal S} \sim \theta^\epsilon$ as expressed in section \ref{21}. The condition $T \gg \theta^\epsilon$ is equivalent to require that the remainder $\mathcal O(\frac{1}{T\epsilon})$ must be small.\\
Theorem \ref{strongth} can be viewed as a corollary of the usual adiabatic theorem since their assumptions are very similar. Nevertheless, theorem \ref{strongth} corresponds to an adiabatic theorem with a gap condition which is asymptotically small (i.e. $\mathcal O(\epsilon)$). The remainder of the adiabatic approximation ($\mathcal O(\frac{1}{T \epsilon})$) is then larger that the one of the usual adiabatic approximation ($\mathcal O(\frac{1}{T})$). As stated above, the very strong adiabatic regime is a particular case of the theorem \ref{strongth} where the gap is not chosen asymptotically small and corresponds exactly to the usual adiabatic theorem.

\begin{theo}[Weak adiabatic theorem for bipartite quantum systems] \label{weakth}
Let \\ $[0,1] \ni s \mapsto H(s) = H_\S(s) \otimes 1_\E + 1_\S \otimes H_\E(s) + \epsilon V(s)$ be a family of selfadjoint Hamiltonians of a bipartite quantum system such that $\forall T >0$, $\ihbar \psi'(s) = T H(s) \psi(s)$ has continuous solutions in the norm topology, and such that $V$ is $(H_S \otimes 1_\E + 1_\S \otimes H_\E)$-bounded. Let $\{\mu_b\}_b$ and $\{\nu_\beta\}_\beta$ be the pure point spectra of $H_\S$ and $H_\E$ and $\{\zeta_b\}_b$ and $\{\xi_\beta\}_\beta$ be the associated normalized eigenvectors. Let $\{\phi_{b \beta}\}_{b\beta}$ be the normalized eigenvectors of $H$ continuously linked to $\{\zeta_b \otimes \xi_\beta\}_{b \beta}$ when $\epsilon \to 0$ (in the norm topology). We consider the case where $\psi(0) = \phi_{a\alpha}(0)$. For the sake of simplicity we suppose that each eigenvalue is non degenerate and that $H_\S$ and $H_\E$ do not have continuous spectrum.  We assume the following conditions:
\begin{enumerate}
\item $\forall b,\beta$, $s \mapsto \mu_b(s)$ and $s \mapsto \nu_\beta(s)$ are $\mathcal C^1$;  $s \mapsto \zeta_b(s)$ and $s \mapsto \xi_\beta(s)$ are $\mathcal C^2$ in the norm topology.
\item No quasi-resonance between transitions of $\S$ and $\E$ involving $\xi_\alpha$ occurs, i.e. $\forall s \in [0,1]$, $\forall c$, $\forall (b\beta) \not= (c\alpha)$, $|\mu_b(s) + \nu_\beta(s) - \mu_c(s) - \nu_\alpha(s)| = \mathcal O(1)$.
\item The energies of $\mathcal E$ satisfy a gap condition of order 0 with $\nu_\alpha$:
\begin{equation}
\inf_{s\in[0,1]} \min_{\beta \not= \alpha} |\nu_\beta(s) - \nu_\alpha(s)| = \mathcal O(1)
\end{equation}
\item $[0,1] \times \mathbb C \ni (s,z) \mapsto R(s,z)=(H(s)-z)^{-1}$ is strongly $\mathcal C^1$ with respect to $s$ and for every $\delta>0$, $\exists K_\delta \in \mathbb R^{+}$, such that $\|R(s,z)'\| \leq \frac{K_\delta}{\dist(z,\{\lambda_{b\beta}(s)\}_{b\beta})}$ $\forall z$ satisfying $\dist(z,\{\lambda_{b\beta}(s)\}_{b\beta})>\delta$.
\end{enumerate}
Then we have
\begin{equation}
\forall s \in [0,1], \quad P_{\bullet \alpha}(s) \psi(s) = \psi(s) + \mathcal O(\frac{1}{T})
\end{equation}
with $P_{\bullet \alpha}(s) = \sum_b |\phi_{b\alpha}(s)\rrangle \llangle \phi_{b\alpha}(s)|$.
\end{theo}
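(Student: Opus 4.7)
The plan is to apply a Nenciu-type adiabatic theorem to the whole spectral block $P_{\bullet\alpha}(s)$ of $H(s)$, for which assumptions (ii)--(iv) provide all the required ingredients even though the internal splitting of the block (governed by $H_\S$) is of order only $\epsilon$ and may vanish at crossings of $H_\S$.

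First, I would verify that the spectral group $\Sigma_\alpha(s)=\{\lambda_{b\alpha}(s)\}_b$ is separated from the rest of the spectrum of $H(s)$ by a gap of order $\mathcal O(1)$ uniformly in $s$. By first-order Rayleigh-Schr\"odinger perturbation, $\lambda_{b\beta}(s)=\mu_b(s)+\nu_\beta(s)+\epsilon V_{b\beta,b\beta}(s)+\mathcal O(\epsilon^2)$, hence for $\beta\not=\alpha$ and any $b,c$, $|\lambda_{c\alpha}-\lambda_{b\beta}|=|\mu_c-\mu_b+\nu_\alpha-\nu_\beta|+\mathcal O(\epsilon)=\mathcal O(1)$ by assumption (ii). A Jordan contour $\Gamma(s)$ can then be drawn around $\Sigma_\alpha(s)$ at a distance uniformly bounded below from the remainder of the spectrum, and the Riesz projector $P_{\bullet\alpha}(s)=(2\pi\imath)^{-1}\oint_{\Gamma(s)}R(s,z)\,dz$ is well defined with derivative $P'_{\bullet\alpha}(s)=(2\pi\imath)^{-1}\oint_{\Gamma(s)}R'(s,z)\,dz$ uniformly bounded by assumption (iv), even though individual eigenprojectors inside the block may diverge near crossings of $H_\S$.

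Next, I would parallel the proof of Theorem \ref{strongth}, expanding $\psi(s)=\sum_{d\delta}c_{d\delta}(s)e^{-\ihbar^{-1}T\int_0^s\lambda_{d\delta}(\varsigma)d\varsigma}\phi_{d\delta}(s)$ and deriving the same integral equation (\ref{devc}) for each $c_{b\beta}$. The target is now $c_{b\beta}(s)=\mathcal O(1/T)$ for every $\beta\not=\alpha$, rather than for every $(b\beta)\not=(a\alpha)$. Splitting the right-hand side of (\ref{devc}) into contributions with $\delta=\alpha$ and those with $\delta\not=\alpha$, the first family carries a phase oscillating at rate $T\hbar^{-1}(\lambda_{b\beta}-\lambda_{d\alpha})=T\hbar^{-1}\cdot\mathcal O(1)$ by the block gap above, so one integration by parts produces a prefactor of order $\mathcal O(1/T)$; the boundary and remainder terms are controlled via assumption (iv), since each off-diagonal derivative matrix element can be rewritten as $\llangle\phi_{b\beta}|\phi'_{d\alpha}\rrangle=\llangle\phi_{b\beta}|(1-P_{\bullet\alpha})P'_{\bullet\alpha}|\phi_{d\alpha}\rrangle$, a matrix element of a bounded operator even when individual $\phi'_{d\alpha}$ are singular.

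Finally, the contributions with $\delta\not=\alpha$ (including the diagonal term $(d\delta)=(b\beta)$) are kept as sources in an integral equation for the unknowns $(c_{b\beta})_{\beta\not=\alpha}$; this equation is of Dyson type with null initial data and an inhomogeneous term of size $\mathcal O(1/T)$ inherited from the previous step, so Gr\"onwall or direct iteration yields $|c_{b\beta}(s)|=\mathcal O(1/T)$ for every $\beta\not=\alpha$. Summing by Parseval then gives $\|(1-P_{\bullet\alpha}(s))\psi(s)\|=\mathcal O(1/T)$, which is the announced bound. The main obstacle is precisely that individual vectors $\phi_{d\alpha}(s)$ need not be continuous at crossings of $H_\S$, so the scalar-coefficient scheme of Theorem \ref{strongth} cannot be copied verbatim: replacing individual eigenvectors by the group projector $P_{\bullet\alpha}$ and estimating the required matrix elements through the bounded operator $(1-P_{\bullet\alpha})P'_{\bullet\alpha}$ is the operator-theoretic reformulation (due to Nenciu) that rescues the adiabatic estimate at the block level.
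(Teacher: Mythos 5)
Your proposal follows essentially the same route as the paper: you establish, via first--order perturbation theory and assumptions \textit{(ii)}--\textit{(iii)}, that the spectral block $\sigma_\alpha(s)=\{\lambda_{b\alpha}(s)\}_b$ is uniformly separated from the rest of the spectrum by an $\mathcal O(1)$ gap, and then you rely on the Nenciu block (spectral--subspace) adiabatic theorem, for which condition \textit{(iv)} is exactly the required resolvent hypothesis -- the paper simply cites Nenciu at that point, whereas you sketch his projector-level argument. The sketch is consistent with what the citation covers, so the proof is correct and essentially identical in strategy.
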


\begin{proof}
Let $\sigma_\alpha(s) = \{\lambda_{c \alpha}(s)\}_c$ and $\sigma_\bot(s) = \{\lambda_{b\beta}(s)\}_{b,\beta \not= \alpha}$ be a decomposition of the spectrum of $H(s)$ into the part linked to $\nu_\alpha$ and its complementary. By a first order perturbation we have
\begin{equation}
\lambda_{c\alpha} - \lambda_{b\beta} = \mu_c - \mu_b + \nu_\alpha - \nu_\beta + \epsilon(V_{c\alpha,c\alpha}-V_{b\beta,b\beta}) + \mathcal O(\epsilon^2)
\end{equation}
With the gap and the no quasi-resonance conditions we have then
\begin{equation}
\inf_{s \in [0,1]}\dist(\sigma_\alpha(s),\sigma_\bot(s)) = \inf_{s \in [0,1]} \min_{c,b} \min_{\beta \not=\alpha} |\lambda_{c\alpha} - \lambda_{b\beta}| = \mathcal O(1)
\end{equation}
We are in the conditions of the Nenciu adiabatic theorem \cite{Nenciu2} (condition \textit{(iv)} is a requirement of this theorem) which ensures that during the whole evolution, the system remains projected onto the spectral subspace associated with the isolated part of the spectrum $\sigma_\alpha=\{\lambda_{c\alpha}\}_c$. The application of the Nenciu theorem proves the present one which is just a special version.
\end{proof}
We note that the no quasi-resonance condition \textit{(ii)} can be relaxed as a no resonance condition $\mu_b(s) + \nu_\beta(s) \not= \mu_c(s) + \nu_\alpha(s)$ (permitting that $|\mu_b(s) + \nu_\beta(s) - \mu_c(s) - \nu_\alpha(s)| = \mathcal O(\epsilon)$) and moreover it could be suppressed if $V_{c\alpha,c\alpha}(s_*) \not= V_{b\beta,b\beta}(s_*)$ for $s_*$ such that $\mu_b(s_*) + \nu_\beta(s_*) = \mu_c(s_*) + \nu_\alpha(s_*)$. But with these weaker conditions, the remainder of the adiabatic approximation is larger: $\mathcal O(\frac{1}{T\epsilon})$.  Nevertheless the main interest of this theorem is about the weak adiabatic regime $\frac{1}{T\epsilon} \not\in \mathcal V(0)$ (where we cannot apply the strong adiabatic theorem), where the following optional condition is satisfied: \\
\textit{(v) The energies of $\S$ satisfy a gap condition of order $\epsilon$ with $\mu_a$:}
\begin{equation}
\inf_{s\in[0,1]} \min_{b \not=a} |\mu_b(s) - \mu_a(s)| = \mathcal O(\epsilon)
\end{equation}
Condition \textit{(v)} is compatible with the theorem \ref{weakth} but it is not necessary. Nevertheless it corresponds to the interesting physical situations.\\
The assumption \textit{(iii)} implies that $\E$ evolves adiabatically with regard to the control. It is a natural assumption because it corresponds to require that transitions in $\E$ do not hamper the control of $\mathcal S$, as explained at the end of section \ref{21}. But if in practice, it needs to relax this assumption for a single instant (or for a small number of instants) it is possible to generalize the application of the theorem \ref{weakth}, we discuss this point section \ref{crossing}.\\
We can remark that assumption \textit{(iii)} is equivalent to $T \gg \tau_\E$ as expressed in section \ref{21}, whereas condition \textit{(v)} is equivalent to $\tau_\S \sim \theta^\epsilon$.

\section{Adiabatic transport of the density matrix}
We are now able to find adiabatic transport formulae for the density matrix of $\S$ : $\rho(s) = \tr_\E |\psi(s) \rrangle \llangle \psi(s)|$. 

\subsection{Strong adiabatic regime}
\begin{propo}
In the conditions of the strong adiabatic theorem (theorem \ref{strongth}) we have
\begin{equation}
\forall s\in [0,1] \quad \rho(s) = \rho_{a\alpha}(s) + \mathcal O(\frac{1}{T\epsilon})
\end{equation}
where $\rho_{a\alpha}(s) = \tr_\E |\phi_{a\alpha}(s) \rrangle \llangle \phi_{a\alpha}(s)|$ is the ``density eigenmatrix''.
\end{propo}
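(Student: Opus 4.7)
The plan is to lift the vector-state approximation from Theorem \ref{strongth} directly to the density matrix by expanding the rank-one projector $|\psi\rrangle\llangle\psi|$ and controlling the cross terms in an appropriate operator norm, then invoking contractivity of the partial trace.

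First I would set $\chi(s) = c_{a\alpha}(s) e^{-\ihbar^{-1} T\int_0^s \lambda_{a\alpha}(\sigma)d\sigma} \phi_{a\alpha}(s)$, so that Theorem \ref{strongth} yields $\psi(s) = \chi(s) + r(s)$ with $\|r(s)\| = \mathcal O(\frac{1}{T\epsilon})$. Since $|c_{a\alpha}|=1$ (the geometric phase factor has unit modulus, as it is generated by the antihermitian connection $\llangle \phi_{a\alpha}|\phi'_{a\alpha}\rrangle$), we also have $\|\chi(s)\| = 1$. Expanding,
\begin{equation}
|\psi\rrangle\llangle\psi| = |\chi\rrangle\llangle\chi| + |\chi\rrangle\llangle r| + |r\rrangle\llangle\chi| + |r\rrangle\llangle r|.
\end{equation}

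Next I would estimate the three last terms in trace norm. Each is a rank-one operator, so $\||\chi\rrangle\llangle r|\|_{1} = \|\chi\|\,\|r\| = \mathcal O(\frac{1}{T\epsilon})$, and similarly for the others (the $|r\rrangle\llangle r|$ term is in fact $\mathcal O(\frac{1}{T^2\epsilon^2})$). Then applying $\tr_\E$ and using the fact that the partial trace is a contraction for the trace norm, $\|\tr_\E A\|_{1} \leq \|A\|_{1}$, I get
\begin{equation}
\rho(s) = \tr_\E |\chi(s)\rrangle\llangle\chi(s)| + \mathcal O(\tfrac{1}{T\epsilon}).
\end{equation}

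Finally the phase factor $c_{a\alpha}(s) e^{-\ihbar^{-1} T\int_0^s \lambda_{a\alpha}}$ is a scalar of modulus one, which cancels against its complex conjugate in $|\chi\rrangle\llangle\chi|$, giving $\tr_\E|\chi\rrangle\llangle\chi| = \tr_\E|\phi_{a\alpha}\rrangle\llangle\phi_{a\alpha}| = \rho_{a\alpha}(s)$, which proves the claim. There is no real obstacle here — the only mild subtlety is choosing the right operator norm so that (a) the rank-one cross terms are controlled by the vector-norm remainder of Theorem \ref{strongth}, and (b) the partial trace is non-expansive; the trace norm serves both purposes.
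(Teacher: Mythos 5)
Your proof is correct and follows essentially the same route as the paper: apply Theorem \ref{strongth}, observe that the dynamical and geometric phase factors have unit modulus (since $\llangle \phi_{a\alpha}|\phi'_{a\alpha}\rrangle \in \imath\mathbb{R}$), and take the partial trace. The only difference is that you make explicit the propagation of the $\mathcal O(\frac{1}{T\epsilon})$ remainder through the rank-one expansion and the trace-norm contractivity of $\tr_\E$, details the paper leaves implicit when it simply writes $|\psi\rrangle\llangle\psi| = |\phi_{a\alpha}\rrangle\llangle\phi_{a\alpha}|$.
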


\begin{proof}
By applying theorem \ref{strongth} we have
\begin{equation}
\psi(s) = e^{-\ihbar^{-1} T \int_0^s \lambda_{a\alpha}(\sigma)d\sigma - \int_0^s \llangle \phi_{a\alpha}(\sigma)|\phi'_{a\alpha}(\sigma) \rrangle d\sigma} \phi_{a\alpha}(s) + \mathcal O(\frac{1}{T\epsilon})
\end{equation}
Since $\llangle \phi_{a\alpha}(s)|\phi_{a\alpha}(s) \rrangle = 1 \Rightarrow \llangle \phi_{a\alpha}(s)|\phi'_{a\alpha}(s) \rrangle \in \imath \mathbb R$, we have $|\psi(s) \rrangle \llangle \psi(s)| = |\phi_{a\alpha}(s) \rrangle \llangle \phi_{a\alpha}(s)|$.
\end{proof}
We can approach $\rho_{a\alpha}$ by a perturbative method (using the Wigner-Brillouin approach):
\begin{eqnarray}
\rho(s) & = & |\zeta_a(s)\rangle\langle \zeta_a(s)| + \epsilon \sum_{b\not=a} \frac{V_{b\alpha,a\alpha}(s)}{\mu_a(s)-\mu_b(s)+\epsilon V_{a\alpha,a\alpha}} |\zeta_b(s)\rangle\langle \zeta_a(s)| \nonumber \\
& &  \quad + \epsilon \sum_{b\not=a} \frac{V_{a\alpha,b\alpha}(s)}{\mu_b(s)-\mu_a(s)+\epsilon V_{a\alpha,a\alpha}} |\zeta_a(s)\rangle\langle \zeta_b(s)| \nonumber \\
& & \quad + \mathcal O(\max(\frac{1}{T\epsilon},\epsilon^2))
\end{eqnarray}

\subsection{Zero order weak adiabatic regime}
We denote by $\Teg$ and $\Ted$ the time ordered and the time anti-ordered exponentials, i.e. for $s \mapsto A(s)$ a bounded anti-selfadjoint operator, $\Teg^{- \int_0^s A(\sigma)d\sigma}$ is the unitary operator solution of
\begin{equation}
\left(\Teg^{- \int_0^s A(\sigma)d\sigma} \right)' = - A(s) \Teg^{- \int_0^s A(\sigma)d\sigma} \qquad \Teg^{- \int_0^0 A(\sigma)d\sigma} = 1
\end{equation}
and $\Ted^{- \int_0^s A(\sigma)d\sigma}$ is the unitary operator solution of
\begin{equation}
\left(\Ted^{- \int_0^s A(\sigma)d\sigma} \right)' = - \Ted^{- \int_0^s A(\sigma)d\sigma} A(s) \qquad \Ted^{- \int_0^0 A(\sigma)d\sigma} = 1
\end{equation}
Moreover we denote by $\Ad$ the adjoint action of a transformation $U$ on a density matrix $\rho$:
\begin{equation}
\Ad[U] \rho = U \rho U^\dagger
\end{equation}

\begin{propo}
In the conditions of the weak adiabatic theorem (theorem \ref{weakth}) we have $\forall s \in [0,1]$
\begin{eqnarray}
\rho(s) & = & \Ad \left[\Teg^{-\ihbar^{-1} T \int_0^s E_\alpha^{(0)}(\sigma)d\sigma} \Ted^{- \int_0^s A^{(0)} (\sigma)d\sigma} \right] \rho_{a\alpha}(s) \nonumber \\
& & \qquad + \mathcal O(\max(\frac{1}{T},\epsilon))
\end{eqnarray}
with the zero order dynamical phase generator defined as being
\begin{equation}
E_\alpha^{(0)}(s) = \sum_b \lambda_{b\alpha}(s) |\zeta_b(s) \rangle \langle \zeta_b(s)| \in \mathcal L(\mathcal H_\S)
\end{equation}
and the zero order geometric phase generator defined as being
\begin{equation}
A^{(0)}(s) = \sum_{b,c} \langle \zeta_b(s)|\zeta_c'(s) \rangle |\zeta_b(s) \rangle \langle \zeta_c(s)| \in \mathcal L(\mathcal H_\S)
\end{equation}
\end{propo}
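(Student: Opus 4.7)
The plan is to combine Theorem \ref{weakth} with an expansion of $\psi(s)$ in the adiabatic eigenbasis of the $\alpha$-sector, and then to recognize the induced evolution on $\mathcal H_\S$ as the action of the claimed product of time-ordered exponentials.

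First I would use Theorem \ref{weakth} to write
\begin{equation}
\psi(s) = \sum_b c_{b\alpha}(s)\, e^{-\ihbar^{-1} T \int_0^s \lambda_{b\alpha}(\sigma)d\sigma}\, \phi_{b\alpha}(s) + \mathcal O(1/T),
\end{equation}
with $c_{b\alpha}(0) = \delta_{ba}$. Inserting this in the Schr\"odinger equation and discarding couplings to the complementary sector $\sigma_\bot$ (which, by the argument used in the proof of Theorem \ref{weakth}, only contribute to the $\mathcal O(1/T)$ remainder), the coefficients satisfy an ODE driven by the non-adiabatic couplings $\llangle \phi_{b\alpha}|\phi_{c\alpha}'\rrangle$. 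The first-order perturbation expansion $\phi_{b\alpha} = \zeta_b \otimes \xi_\alpha + \mathcal O(\epsilon)$ splits this coupling as $\langle \zeta_b|\zeta_c'\rangle + \delta_{bc}\langle \xi_\alpha|\xi_\alpha'\rangle + \mathcal O(\epsilon)$, with the $\langle \xi_\alpha|\xi_\alpha'\rangle$ piece acting as a pure scalar phase on all coefficients.

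Next I would set $\tilde c_{b\alpha}(s) = c_{b\alpha}(s)\, e^{-\ihbar^{-1} T \int_0^s \lambda_{b\alpha}d\sigma + \int_0^s \langle \xi_\alpha|\xi_\alpha'\rangle d\sigma}$ and form the vector $w(s) := \sum_b \tilde c_{b\alpha}(s) \zeta_b(s) \in \mathcal H_\S$. The absorbed $\xi_\alpha$ phase is purely imaginary and will conjugate out in $|\psi\rrangle\llangle\psi|$. Differentiating $w$, the moving-basis term $\sum_b \tilde c_{b\alpha} \zeta_b'(s)$ relabels to $\sum_{b,c} \langle \zeta_b|\zeta_c'\rangle \tilde c_{c\alpha}\zeta_b$, which exactly cancels the $-\sum_c \langle \zeta_b|\zeta_c'\rangle \tilde c_{c\alpha}$ contribution coming from the ODE. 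What survives is the closed equation $w'(s) = -\ihbar^{-1} T\, E_\alpha^{(0)}(s)\, w(s) + \mathcal O(\max(1/T,\epsilon))$ with $w(0) = \zeta_a(0)$, whose solution is $w(s) = \Teg^{-\ihbar^{-1} T \int_0^s E_\alpha^{(0)}d\sigma}\zeta_a(0)$.

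To bring $\zeta_a(0)$ into $\zeta_a(s)$, I would introduce the unitary parallel-transport operator $P(s) \in \mathcal L(\mathcal H_\S)$ defined by $P(s)\zeta_b(0) = \zeta_b(s)$. Differentiating directly shows $P'(s) = A^{(0)}(s) P(s)$, hence $(P^{-1})'(s) = -P^{-1}(s) A^{(0)}(s)$, i.e.\ $P(s)^{-1} = \Ted^{-\int_0^s A^{(0)}d\sigma}$. Inserting $\zeta_a(0) = P(s)^{-1}\zeta_a(s)$ yields $w(s) = \Teg^{-\ihbar^{-1} T \int_0^s E_\alpha^{(0)}d\sigma}\Ted^{-\int_0^s A^{(0)}d\sigma}\zeta_a(s) + \mathcal O(\max(1/T,\epsilon))$. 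Finally, substituting back into $\rho(s) = \tr_\E |\psi\rrangle\llangle\psi|$ and using $\rho_{a\alpha}(s) = |\zeta_a(s)\rangle\langle\zeta_a(s)| + \mathcal O(\epsilon)$ at zero order, the $\xi_\alpha$-phases conjugate out and one obtains the claimed $\rho(s) = \Ad[\Teg\cdots \Ted\cdots]\rho_{a\alpha}(s) + \mathcal O(\max(1/T,\epsilon))$.

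The main obstacle is the third step: identifying the anti-time-ordered exponential as the inverse parallel-transport operator and keeping the ordering conventions consistent, since $A^{(0)}$ at different times does not commute with itself. A secondary subtlety is that defining $E_\alpha^{(0)}$ with the exact $\lambda_{b\alpha}$ rather than its zero-order value $\mu_b + \nu_\alpha$ is necessary to keep the remainder at $\mathcal O(\max(1/T,\epsilon))$: the first-order correction to $\lambda_{b\alpha}$ is $\mathcal O(\epsilon)$ but multiplies $T$ in the exponent, which is $\mathcal O(1)$ in the weak regime and so cannot be dropped.
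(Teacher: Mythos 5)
Your proof is correct, but it takes a genuinely different route from the paper's. The paper starts from the multi-eigenvalue adiabatic transport formula of \cite{Viennot4,Viennot5}, written with the matrices $\Lambda_\alpha$ and $K_\alpha$, and then obtains the product structure by invoking the splitting corollary \ref{cor} ($U_{A+B}=U_XU_A$ with $X$ only implicitly defined), identifying each factor with $\Teg^{-\ihbar^{-1}T\int_0^s E_\alpha^{(0)}d\sigma}$ and $e^{-\int_0^s\langle\xi_\alpha|\xi_\alpha'\rangle d\sigma}\Ted^{-\int_0^s A^{(0)}d\sigma}$ by matching the matrix ODEs for $Y$ and $Z$. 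You instead work directly with the coefficient ODE inside the $\alpha$-sector, strip off the scalar phase $\langle\xi_\alpha|\xi_\alpha'\rangle$ (which indeed enters as $\delta_{bc}\langle\xi_\alpha|\xi_\alpha'\rangle$ and conjugates out of the density matrix), cancel the moving-basis term against the geometric coupling to get the closed equation $w'=-\ihbar^{-1}TE_\alpha^{(0)}w+\mathcal O(\max(\frac{1}{T},\epsilon))$, and then reinstate the instantaneous frame through the basis-transport operator $P(s)=\sum_b|\zeta_b(s)\rangle\langle\zeta_b(0)|$, whose inverse you correctly identify (via $P'=A^{(0)}P$ and the paper's ordering conventions, $A^{(0)}$ being anti-self-adjoint) with $\Ted^{-\int_0^s A^{(0)}d\sigma}$. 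This makes the factorization explicit rather than implicit and is arguably cleaner at zero order; what the paper's heavier machinery buys is reusability, since the same corollary-plus-ODE-matching scheme carries over verbatim to the first and second order formulae where the frame $\zeta_{b\alpha}^{(1)}$, $\zeta_{b\alpha}^{(2)}$ is itself $\epsilon$-dependent and your clean cancellation would no longer be immediate. Two minor remarks: attributing the intra-sector transport formula (with its $\mathcal O(\frac{1}{T})$ remainder) to ``the argument used in the proof of Theorem \ref{weakth}'' is slightly loose, since that proof only yields the projection statement via Nenciu's theorem, the transport formula being what the paper cites \cite{Viennot4,Viennot5} for --- though your coefficient ODE essentially re-derives it; and the propagation of the $\mathcal O(\epsilon)$ generator error into an $\mathcal O(\epsilon)$ solution error tacitly uses a Duhamel/unitarity estimate, at the same level of rigour as the paper's own $Y=X+\mathcal O(\epsilon)$ step. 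Your closing observation on why the exact $\lambda_{b\alpha}$ must be kept in $E_\alpha^{(0)}$ (because $T\epsilon$ is not negligible in the weak regime) matches the paper's remark following the proposition.
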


\begin{proof}
By applying the theorem \ref{weakth} and an adiabatic transport formula for several eigenvalues \cite{Viennot4,Viennot5} we have
\begin{equation}
\psi(s) = \sum_b \left[\Teg^{-\ihbar^{-1} T \int_0^s \Lambda_\alpha d\sigma - \int_0^s K_\alpha d\sigma} \right]_{ba} \phi_{b \alpha} + \mathcal O(\frac{1}{T})
\end{equation}
with $\Lambda_\alpha, K_\alpha \in \mathfrak M_{n \times n}(\mathbb C)$ being the square matrices of order $n$ ($n$ is the dimension of $\mathcal H_\S$) such that $[\Lambda_\alpha]_{ab} = \lambda_{a\alpha} \delta_{ab}$ and $[K_\alpha]_{ab} = \llangle \phi_{a\alpha}|\phi_{b\alpha}' \rrangle$ ($[.]_{ab}$ denotes the matrix element at the $a$-th line and the $b$-th column). Since $|\zeta_b \rangle \langle \zeta_c| \phi_{a\alpha} = \phi_{b\alpha} \delta_{ca} + \mathcal O(\epsilon)$ we have
\begin{equation}
\psi(s) = \sum_{b,c} \left[\Teg^{-\ihbar^{-1} T \int_0^s \Lambda_\alpha d\sigma - \int_0^s K_\alpha d\sigma} \right]_{bc}|\zeta_b \rangle \langle \zeta_c| \phi_{a \alpha} + \mathcal O(\max(\frac{1}{T},\epsilon))
\end{equation}
By applying corollary \ref{cor} (\ref{corap}) we have
\begin{eqnarray}
& & \sum_{b,c} \left[\Teg^{-\ihbar^{-1} T \int_0^s \Lambda_\alpha d\sigma - \int_0^s K_\alpha d\sigma} \right]_{bc}|\zeta_b \rangle \langle \zeta_c| \nonumber\\
& & \qquad  = \sum_{bd} \left[\Teg^{-\int_0^s Xd\sigma} \right]_{bd} |\zeta_b \rangle \langle \zeta_d| \sum_{f,c} \left[\Teg^{-\int_0^s K_\alpha d\sigma} \right]_{fc} |\zeta_f \rangle \langle \zeta_c|
\end{eqnarray}
with $X = \ihbar^{-1}T \Lambda_\alpha + K_\alpha - \Teg^{-\int_0^s Xd\sigma} K_\alpha \left(\Teg^{-\int_0^s Xd\sigma}\right)^{-1}$.
\begin{itemize}
\item Let $Y \in \mathfrak M_{n \times n}(\mathbb C)$ be such that $\Teg^{-\ihbar^{-1} T \int_0^s E_\alpha^{(0)} d\sigma} = \sum_{b,d} \left[\Teg^{-\int_0^s Y d\sigma} \right]_{bd} |\zeta_b \rangle \langle \zeta_d|$.
\begin{equation}
\left(\Teg^{-\ihbar^{-1} T\int_0^s E_\alpha^{(0)} d\sigma} \right)' = - \ihbar^{-1}T E_\alpha^{(0)} \Teg^{-\ihbar^{-1} T\int_0^s E_\alpha^{(0)} d\sigma}
\end{equation}
implies that
\begin{eqnarray}
& & - \ihbar^{-1}T E_\alpha^{(0)}  \sum_{b,d} \left[\Teg^{-\int_0^s Y d\sigma} \right]_{bd} |\zeta_b \rangle \langle \zeta_d| \nonumber \\
& & \qquad = \sum_{b,d} \left[-Y\Teg^{-\int_0^s Y d\sigma} \right]_{bd} |\zeta_b \rangle \langle \zeta_d| \nonumber \\
& & \qquad \qquad +  \sum_{b,d} \left[\Teg^{-\int_0^s Y d\sigma} \right]_{bd} |\zeta_b' \rangle \langle \zeta_d| \nonumber \\
& & \qquad \qquad + \sum_{b,d} \left[\Teg^{-\int_0^s Y d\sigma} \right]_{bd} |\zeta_b \rangle \langle \zeta_d'| \label{E0}
\end{eqnarray}
but
\begin{eqnarray}
& & - \ihbar^{-1}T E_\alpha^{(0)}  \sum_{b,d} \left[\Teg^{-\int_0^s Y d\sigma} \right]_{bd} |\zeta_b \rangle \langle \zeta_d| \nonumber \\
&  & \qquad = - \ihbar^{-1}T  \sum_{b,d}\lambda_{b\alpha}  \left[\Teg^{-\int_0^s Y d\sigma} \right]_{bd} |\zeta_b \rangle \langle \zeta_d| \\
& & \qquad = \sum_{b,d} \left[- \ihbar^{-1} T \Lambda_\alpha \Teg^{-\int_0^s Y d\sigma} \right]_{bd} |\zeta_b \rangle \langle \zeta_d|
\end{eqnarray}
and
\begin{eqnarray}
|\zeta_b'\rangle \langle \zeta_d| & = & \sum_f \langle \zeta_f|\zeta_b' \rangle |\zeta_f\rangle \langle \zeta_d| \\
|\zeta_b\rangle \langle \zeta_d'| & = & \sum_f \langle \zeta_d'|\zeta_f \rangle |\zeta_b\rangle \langle \zeta_f|
\end{eqnarray}
Equation (\ref{E0}) becomes
\begin{eqnarray}
-\ihbar^{-1} T \Lambda_\alpha \Teg^{-\int_0^s Yd\sigma} & = & -Y \Teg^{-\int_0^s Yd\sigma} + \mathring K \Teg^{-\int_0^s Yd\sigma} \nonumber \\
& & \qquad - \Teg^{-\int_0^s Yd\sigma} \mathring K \label{Y0}
\end{eqnarray}
because $\langle \zeta_d|\zeta_f \rangle = \delta_{df} \Rightarrow \langle \zeta_d'|\zeta_f \rangle = - \langle \zeta_d|\zeta_f'\rangle$ with $\mathring K \in \mathfrak M_{n\times n}(\mathbb C)$ defined as $[\mathring K]_{df} = \langle \zeta_d|\zeta_f'\rangle$. But
\begin{eqnarray}
[K_\alpha]_{bc} & = & \llangle \phi_{b\alpha}|\phi_{c\alpha}' \rrangle \\
& = & \langle \zeta_b|\zeta_c'\rangle + \langle \xi_\alpha|\xi_\alpha'\rangle + \mathcal O(\epsilon)
\end{eqnarray}
\begin{equation}
\Rightarrow K_\alpha = \mathring K + \langle \xi_\alpha|\xi_\alpha'\rangle + \mathcal O(\epsilon)
\end{equation}
and then $X = \ihbar^{-1} T \Lambda_\alpha + \mathring K - \Teg^{-\int_0^s Xd\sigma} \mathring K \left(\Teg^{-\int_0^s Xd\sigma}\right)^{-1} + \mathcal O(\epsilon)$ ($\langle \xi_\alpha|\xi_\alpha'\rangle \in \imath \mathbb R$). A comparison with equation (\ref{Y0}) shows that $Y = X +\mathcal O(\epsilon)$ and then
\begin{equation}
 \sum_{bd} \left[\Teg^{-\int_0^s Xd\sigma} \right]_{bd} |\zeta_b \rangle \langle \zeta_d| = \Teg^{-\ihbar^{-1}T \int_0^s E_\alpha^{(0)}d\sigma} + \mathcal O(\epsilon)
\end{equation}
\item Let $Z \in \mathfrak M_{n\times n}(\mathbb C)$ be such that $\Ted^{-\int_0^s A^{(0)} d\sigma} = \sum_{f,c} \left[\Teg^{-\int_0^s Zd\sigma}\right]_{f,c} |\zeta_f \rangle \langle \zeta_c|$.
\begin{equation}
\left(\Ted^{-\int_0^s A^{(0)} d\sigma} \right)' = - \Ted^{-\int_0^s A^{(0)} d\sigma} A^{(0)}
\end{equation}
implies that
\begin{eqnarray}
& & - \sum_{f,c} \left[\Teg^{-\int_0^s Z d\sigma} \right]_{fc} |\zeta_f \rangle \langle \zeta_c| A^{(0)} \nonumber \\
& & \qquad = \sum_{f,c} \left[-Z\Teg^{-\int_0^s Z d\sigma} \right]_{fc} |\zeta_f \rangle \langle \zeta_c| \nonumber \\
& & \qquad \qquad +  \sum_{f,c} \left[\Teg^{-\int_0^s Z d\sigma} \right]_{fc} |\zeta_f' \rangle \langle \zeta_c| \nonumber \\
& & \qquad \qquad + \sum_{f,c} \left[\Teg^{-\int_0^s Z d\sigma} \right]_{fc} |\zeta_f \rangle \langle \zeta_c'|
\end{eqnarray}
which becomes (since $A^{(0)} = \sum_{cg} [\mathring K]_{cg} |\zeta_c \rangle \langle \zeta_g|$):
\begin{equation}
- \Teg^{-\int_0^s Z d\sigma} \mathring K  = - Z \Teg^{-\int_0^s Z d\sigma} + \mathring K \Teg^{-\int_0^s Z d\sigma} - \Teg^{-\int_0^s Z d\sigma} \mathring K
\end{equation}
$Z = \mathring K = K_\alpha -\langle \xi_\alpha|\xi_\alpha'\rangle + \mathcal O(\epsilon)$ and then
\begin{equation}
\sum_{f,c} \left[\Teg^{-\int_0^s K_\alpha d\sigma} \right]_{fc} |\zeta_f \rangle \langle \zeta_c| = e^{-\int_0^s \langle \xi_\alpha|\xi_\alpha'\rangle d\sigma} \Ted^{-\int_0^s A^{(0)}d\sigma} + \mathcal O(\epsilon)
\end{equation} 
\end{itemize}
Finally we have
\begin{eqnarray}
\psi(s) & = & e^{-\int_0^s \langle \xi_\alpha|\xi_\alpha'\rangle d\sigma} \Teg^{-\ihbar^{-1} T\int_0^s E_\alpha^{(0)} d\sigma} \Ted^{-\int_0^s A^{(0)}d\sigma} \phi_{a\alpha}(s) \nonumber \\
& & \qquad + \mathcal O(\max(\frac{1}{T},\epsilon))
\end{eqnarray}
and
\begin{eqnarray}
\rho(s) & = & \tr_\E |\psi(s) \rrangle \llangle \psi(s)| \\
& = & \Ad\left[\Teg^{-\ihbar^{-1} T\int_0^s E_\alpha^{(0)} d\sigma} \Ted^{-\int_0^s A^{(0)}d\sigma} \right] \rho_{a\alpha} + \mathcal O(\max(\frac{1}{T},\epsilon))
\end{eqnarray}
since $e^{-\int_0^s \langle \xi_\alpha|\xi_\alpha'\rangle d\sigma} \in U(1)$ ($U(1)$ is the set of unit modulus complex numbers) and $\Teg^{-\ihbar^{-1} T\int_0^s E_\alpha^{(0)} d\sigma}, \Ted^{-\int_0^s A^{(0)}d\sigma} \in \mathcal U(\mathcal H_\S)$.
\end{proof}
In this zero order approximation, the only memory of $\E$ are the elements $T\lambda_{b\alpha}$ in the expression of the operator-valued dynamical phase. We note that we cannot approach $\lambda_{b\alpha}$ at the zero order perturbative approximation in the dynamical phase because $T\epsilon$ is not negligible in the weak adiabatic regime. In the next section we consider higher accuracy approximations.

\subsection{First order weak adiabatic regime}
\begin{propo}
In the conditions of the weak adiabatic theorem (theorem \ref{weakth}) we have $\forall s \in[0,1]$
\begin{eqnarray}
\rho(s) & = & \Ad\left[\Teg^{-\ihbar^{-1} T \int_0^s E_\alpha^{(1)}(\sigma)d\sigma} \Ted^{- \int_0^s A^{(1)}_\alpha (\sigma)d\sigma} \right] \rho_{a\alpha}(s) \nonumber \\
& & \qquad + \mathcal O(\max(\frac{1}{T},\epsilon^2))
\end{eqnarray}
with the first order dynamical phase generator defined as being
\begin{equation}
E_\alpha^{(1)}(s) = \sum_{b,c} \left(\lambda_{b\alpha}(s) \delta_{bc} - \frac{\ihbar}{T} \eta_{\alpha bc}^{(1)}(s)\right) |\zeta_{b\alpha}^{(1)}(s) \rangle \langle \zeta_{c\alpha}^{(1)}(s)| \in \mathcal L(\mathcal H_\S)
\end{equation}
and the first order geometric phase generator defined as being
\begin{equation}
A^{(1)}_\alpha(s) = \sum_{b,c} \langle \zeta_{b\alpha}^{(1)}(s)|\zeta_{c\alpha}^{(1)\prime}(s) \rangle |\zeta_{b\alpha}^{(1)}(s) \rangle \langle \zeta_{c\alpha}^{(1)}(s)| \in \mathcal L(\mathcal H_\S)
\end{equation}
with
\begin{equation}
\zeta_{b\alpha}^{(1)}(s) = \zeta_b(s) + \epsilon \sum_{d\not=b} \frac{V_{d\alpha,b\alpha}(s)}{\mu_b(s)-\mu_d(s)+\epsilon V_{b\alpha,b\alpha}(s)} \zeta_d(s)
\end{equation}
\begin{eqnarray}
\eta^{(1)}_{\alpha bc}(s) & = & \langle \xi_\alpha(s)|\xi_\alpha'(s)\rangle \delta_{bc} \nonumber \\
& & \quad + \epsilon \sum_{\gamma \not=\alpha} \frac{V_{b\gamma,c\alpha}(s) \langle\xi_\alpha(s)|\xi_\gamma'(s)\rangle (1-\delta_{bc})}{\mu_c(s)-\mu_b(s)+\nu_\alpha(s)-\nu_\gamma(s) + \epsilon V_{c\alpha,c\alpha}(s)} \nonumber \\
& & \quad + \epsilon \sum_{\gamma \not=\alpha} \frac{V_{b\alpha,c\gamma}(s) \langle\xi_\gamma(s)|\xi_\alpha'(s)\rangle (1-\delta_{bc})}{\mu_b(s)-\mu_c(s)+\nu_\alpha(s)-\nu_\gamma(s) + \epsilon V_{b\alpha,b\alpha}(s)}
\end{eqnarray}
\end{propo}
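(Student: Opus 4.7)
The plan is to enhance the proof of the zero order proposition by retaining the $\mathcal O(\epsilon)$ corrections that were previously absorbed into the remainder. The starting point is identical: Theorem \ref{weakth} combined with the multi-eigenvalue adiabatic transport formula \cite{Viennot4,Viennot5} gives
\[
\psi(s) = \sum_b \left[\Teg^{-\ihbar^{-1}T\int_0^s \Lambda_\alpha(\sigma) d\sigma - \int_0^s K_\alpha(\sigma) d\sigma}\right]_{ba}\phi_{b\alpha}(s) + \mathcal O\!\left(\frac{1}{T}\right),
\]
with $[\Lambda_\alpha]_{bc}=\lambda_{b\alpha}\delta_{bc}$ and $[K_\alpha]_{bc}=\llangle\phi_{b\alpha}|\phi_{c\alpha}'\rrangle$. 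The goal is then to evaluate this expression modulo $\mathcal O(\epsilon^2)$ rather than modulo $\mathcal O(\epsilon)$, and to reorganise the result in a form adapted to the $\Ad$-action on $\mathcal H_\S$.

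First I would replace the bare Rayleigh--Schr\"odinger expansion of $\phi_{b\alpha}$ by a Wigner--Brillouin one in which the entire contribution living inside the $\xi_\alpha$ sector is absorbed into $\zeta_{b\alpha}^{(1)}\otimes\xi_\alpha$, so that only the off-diagonal terms with $\gamma\not=\alpha$ remain explicit at order $\epsilon$. The Wigner--Brillouin form is essential here because the optional condition $(v)$ allows $\mu_b-\mu_c=\mathcal O(\epsilon)$, which would make a bare Rayleigh--Schr\"odinger denominator singular, while the shifted denominator $\mu_b-\mu_c+\epsilon V_{b\alpha,b\alpha}$ remains bounded below by $\mathcal O(\epsilon)$ thanks to assumption $(ii)$.

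Next I would expand $[K_\alpha]_{bc}$ using this Wigner--Brillouin form and identify three groups of contributions: an $\S$-type connection $\langle\zeta_{b\alpha}^{(1)}|\zeta_{c\alpha}^{(1)\prime}\rangle$, which is exactly the representation of $A^{(1)}_\alpha$ in the dressed basis; a scalar diagonal $\langle\xi_\alpha|\xi_\alpha'\rangle\delta_{bc}$; and two asymmetric cross pairings in which the $\xi_\gamma$ components ($\gamma\not=\alpha$) of $\phi_{b\alpha}$ (resp. $\phi_{c\alpha}'$) are contracted with the $\xi_\alpha'$ (resp. $\xi_\gamma'$) components of $\phi_{c\alpha}'$ (resp. $\phi_{b\alpha}$). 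The last two pieces combine with the scalar diagonal into exactly $\eta^{(1)}_{\alpha bc}$. In the dressed basis $\{\zeta_{b\alpha}^{(1)}\}$ one then has $\ihbar^{-1}T\,\Lambda_\alpha+K_\alpha=\ihbar^{-1}T\,\tilde E_\alpha^{(1)}+\tilde A_\alpha^{(1)}+\mathcal O(\epsilon^2)$, where the tildes denote the matrix representations of $E_\alpha^{(1)}$ and $A_\alpha^{(1)}$.

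Finally I would apply the factorisation identity of Corollary \ref{cor} in the Appendix, exactly as in the zero order proof, to split the time-ordered exponential into a left time-ordered dynamical factor generated by $E_\alpha^{(1)}$ and a right time-anti-ordered geometric factor generated by $A_\alpha^{(1)}$. The identification $|\zeta_{b\alpha}^{(1)}\rangle\langle\zeta_{c\alpha}^{(1)}|\phi_{a\alpha}=\phi_{b\alpha}\delta_{ca}+\mathcal O(\epsilon^2)$ then lets me rewrite $\psi(s)$ as the action of these two operators on $\phi_{a\alpha}(s)$, and taking $\rho(s)=\tr_\E|\psi(s)\rrangle\llangle\psi(s)|$ produces the advertised $\Ad$-formula modulo $\mathcal O(\max(1/T,\epsilon^2))$. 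The main technical obstacle I anticipate is the bookkeeping that distinguishes the two cross summands in $\eta^{(1)}_{\alpha bc}$: one must track how the complex conjugation in the bra $\llangle\phi_{b\alpha}|$ routes the matrix elements $V_{b\gamma,c\alpha}$ versus $V_{b\alpha,c\gamma}$ into the two pieces with denominators $\mu_c-\mu_b+\nu_\alpha-\nu_\gamma+\epsilon V_{c\alpha,c\alpha}$ and $\mu_b-\mu_c+\nu_\alpha-\nu_\gamma+\epsilon V_{b\alpha,b\alpha}$ respectively, and to verify that no $\epsilon$-order contribution is silently dropped when commuting the dressed basis change past the Dyson-series reorganisation of the time-ordered exponential.
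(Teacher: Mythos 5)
Your overall strategy is the same as the paper's: dress the basis by the Wigner--Brillouin first-order vectors $\zeta^{(1)}_{b\alpha}$, split $K_\alpha$ into the dressed connection $[\mathring K_\alpha]_{bc}=\langle\zeta^{(1)}_{b\alpha}|\zeta^{(1)\prime}_{c\alpha}\rangle$ plus $\eta^{(1)}_{\alpha bc}$, and use the corollary of the appendix to factor the ordered exponential into the $E^{(1)}_\alpha$ and $A^{(1)}_\alpha$ factors. Up to that point your outline reproduces the paper's computation.

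However, there is a genuine gap in your final step. The identification you invoke, $|\zeta^{(1)}_{b\alpha}\rangle\langle\zeta^{(1)}_{c\alpha}|\phi_{a\alpha}=\phi_{b\alpha}\delta_{ca}+\mathcal O(\epsilon^2)$, is false at order $\epsilon$: since $\phi_{b\alpha}=\zeta^{(1)}_{b\alpha}\otimes\xi_\alpha+\epsilon\sum_{\gamma\not=\alpha}\sum_{d\not=b}\mathcal V_{d\gamma,b\alpha}\,\zeta_d\otimes\xi_\gamma+\mathcal O(\epsilon^2)$, the difference between the two sides contains order-$\epsilon$ components lying entirely in the $\xi_\gamma$ sectors with $\gamma\not=\alpha$ (this is exactly why, at zero order, the paper only writes $|\zeta_b\rangle\langle\zeta_c|\phi_{a\alpha}=\phi_{b\alpha}\delta_{ca}+\mathcal O(\epsilon)$). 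Consequently your argument, as written, only yields $\psi(s)=\mathcal U_\alpha\phi_{a\alpha}+\mathcal O(\max(\frac{1}{T},\epsilon))$ and hence no improvement over the zero-order remainder. The missing idea is the one the paper makes explicit: keep these residuals, which take the form $\epsilon\sum_{\gamma\not=\alpha}[\mathcal W_{\gamma\alpha},\mathcal U_\alpha]\phi_{a\gamma}$ with $\mathcal W_{\gamma\alpha}=\sum_d\sum_{b\not=d}\mathcal V_{d\gamma,b\alpha}|\zeta^{(1)}_{d\gamma}\rangle\langle\zeta^{(1)}_{b\alpha}|$, and observe that in $|\psi\rrangle\llangle\psi|$ they enter at first order only through cross terms proportional to $|\xi_\gamma\rangle\langle\xi_\alpha|$ ($\gamma\not=\alpha$), whose partial trace over $\E$ vanishes; only then does $\rho(s)=\mathcal U_\alpha\rho_{a\alpha}\mathcal U_\alpha^\dagger+\mathcal O(\max(\frac{1}{T},\epsilon^2))$ follow. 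This cancellation is not automatic bookkeeping: at the next order the analogous terms survive the partial trace and produce the extra $\Ad[\mathcal W_{\delta\alpha}\mathcal U_\alpha]$ contributions of the second-order proposition, so the step must be argued, not assumed. A secondary, repairable imprecision is your claim that in the dressed basis $\ihbar^{-1}T\Lambda_\alpha+K_\alpha$ literally equals the sum of the matrices of $E^{(1)}_\alpha$ and $A^{(1)}_\alpha$: because the two ordered exponentials do not commute, the left factor is generated by the implicitly defined $X=\ihbar^{-1}T\Lambda_\alpha+\eta^{(1)}_\alpha+\mathring K_\alpha-\Teg^{-\int_0^s Xd\sigma}\mathring K_\alpha\bigl(\Teg^{-\int_0^s Xd\sigma}\bigr)^{-1}$, and one must check, by differentiating in the moving dressed basis as the paper does, that this $X$ indeed corresponds to $\Teg^{-\ihbar^{-1}T\int_0^s E^{(1)}_\alpha d\sigma}$.
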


\begin{proof}
As for the zero order formula we start with
\begin{eqnarray}
\psi(s) & = & \sum_b \left[\Teg^{-\ihbar^{-1} T \int_0^s \Lambda_\alpha d\sigma - \int_0^s K_\alpha d\sigma} \right]_{ba} \phi_{b \alpha} + \mathcal O(\frac{1}{T}) \\
& = & \sum_{b,c} U_{\alpha bc} |\phi_{b\alpha} \rrangle \llangle \phi_{c\alpha}| \phi_{a\alpha} + \mathcal O(\frac{1}{T}) \label{psi1}
\end{eqnarray}
Let $\zeta_{b\alpha}^{(1)} = \zeta_b + \epsilon \sum_{d\not=b} \mathcal V_{d\alpha,b\alpha} \zeta_d$ (with $\mathcal V_{d\gamma,b\alpha} = \frac{V_{d\gamma,b\alpha}}{\mu_b-\mu_d + \nu_\alpha-\nu_\gamma+\epsilon V_{b\alpha,b\alpha}}$). By using a first order perturbative expansion based on the Wigner-Brillouin method we have
\begin{eqnarray}
\phi_{b\alpha} & = & \zeta_b \otimes \xi_\alpha+\epsilon \sum_{(d,\gamma)\not=(b,\alpha)} \mathcal V_{d\gamma,b\alpha} \zeta_d \otimes \xi_\gamma + \mathcal O(\epsilon^2) \\
& = & \zeta_{b\alpha}^{(1)} \otimes \xi_\alpha + \epsilon \sum_{\gamma \not=\alpha} \sum_{d\not=b} \mathcal V_{d\gamma,b\alpha} \zeta_d \otimes \xi_\gamma + \mathcal O(\epsilon^2)
\end{eqnarray}
We have then
\begin{eqnarray}
|\phi_{b\alpha}\rrangle \llangle \phi_{c\alpha}| & = & |\zeta_{b\alpha}^{(1)} \rangle \langle \zeta_{c\alpha}^{(1)}| \otimes |\xi_\alpha\rangle \langle \xi_\alpha| \nonumber \\
& &  \quad + \epsilon \sum_{\gamma\not=\alpha} \sum_{d\not=b} \mathcal V_{d\gamma,b\alpha} |\zeta_d \rangle \langle \zeta_c| \otimes |\xi_\gamma\rangle \langle \xi_\alpha| \nonumber \\
& & \quad + \epsilon \sum_{\gamma\not=\alpha} \sum_{d\not=c} \overline{\mathcal V_{d\gamma,c\alpha}} |\zeta_b \rangle \langle \zeta_d| \otimes |\xi_\alpha \rangle \langle \xi_\gamma| + \mathcal O(\epsilon^2)
\end{eqnarray}
\begin{eqnarray}
|\phi_{b\alpha}\rrangle \llangle \phi_{c\alpha}|\phi_{a\alpha} & = & |\zeta_{b\alpha}^{(1)} \rangle \langle \zeta_{c\alpha}^{(1)}| \phi_{a\alpha} \nonumber \\
& &  \quad - \epsilon \sum_{\gamma\not=\alpha} \mathcal V_{c\gamma,a\alpha} \zeta_b \otimes \xi_\gamma (1-\delta_{ac}) \nonumber \\
& & \quad + \epsilon \sum_{\gamma\not=\alpha} \sum_{d\not=b} \mathcal V_{d\gamma,b\alpha} \zeta_d \otimes \xi_\gamma \delta_{ac} + \mathcal O(\epsilon^2) \\
& = & |\zeta_{b\alpha}^{(1)} \rangle \langle \zeta_{c\alpha}^{(1)}| \phi_{a\alpha} \nonumber \\
& &  \quad - \epsilon \sum_{\gamma\not=\alpha} \sum_{d\not=c} \mathcal V_{c\gamma,d\alpha} |\zeta_b\rangle\langle \zeta_d|\phi_{a\gamma} \nonumber \\
& & \quad + \epsilon \sum_{\gamma\not=\alpha} \sum_{d\not=b} \mathcal V_{d\gamma,b\alpha} |\zeta_d\rangle \langle \zeta_c|\phi_{a\gamma} + \mathcal O(\epsilon^2)
\end{eqnarray}
By using this expression with the equation (\ref{psi1}) we find that
\begin{eqnarray}
\psi(s) & = & \sum_{bc} U_{\alpha bc} |\zeta_{b\alpha}^{(1)}\rangle \langle \zeta_{b\alpha}^{(1)}| \phi_{a\alpha} \nonumber \\
& & \quad - \epsilon \sum_{b,c} \sum_{d\not=c} \sum_{\gamma\not=\alpha} U_{\alpha bc} \mathcal V_{c\gamma,d\alpha} |\zeta_b \rangle \langle \zeta_d| \phi_{a\gamma} \nonumber \\
& & \quad + \epsilon \sum_{b,c} \sum_{d\not=b} \sum_{\gamma \not=\alpha} U_{\alpha bc} \mathcal V_{d\gamma,b\alpha} |\zeta_d\rangle \langle \zeta_c| \phi_{a\gamma} + \mathcal O(\max(\frac{1}{T},\epsilon^2)) \\
& = & \mathcal U_\alpha \phi_{a\alpha} + \epsilon \sum_{\gamma\not=\alpha} [\mathcal W_{\gamma\alpha},\mathcal U_\alpha] \phi_{a\gamma} + \mathcal O(\max(\frac{1}{T},\epsilon^2))
\end{eqnarray}
with the operators of $\mathcal H_\S$ : $\mathcal U_\alpha = \sum_{b,c} U_{\alpha bc} |\zeta_{b\alpha}^{(1)}\rangle \langle \zeta_{c\alpha}^{(1)}|$ and $\mathcal W_{\gamma \alpha} = \sum_d \sum_{b\not=d} \mathcal V_{d\gamma,b\alpha}|\zeta_{d\gamma}^{(1)}\rangle \langle \zeta_{b\alpha}^{(1)}|$. We have then
\begin{eqnarray}
|\psi \rrangle \llangle \psi| & = & \mathcal U_\alpha |\phi_{a\alpha} \rrangle \llangle \phi_{a\alpha}| \mathcal U_\alpha^\dagger \nonumber \\
& & \quad + \epsilon \sum_{\gamma\not=\alpha} [\mathcal W_{\gamma \alpha},\mathcal U_\alpha] |\zeta_a \rangle \langle \zeta_a| \otimes |\xi_\gamma\rangle \langle \xi_\alpha| \mathcal U_\alpha^\dagger \nonumber \\
& & \quad + \epsilon \sum_{\gamma\not=\alpha} \mathcal U_\alpha |\zeta_a \rangle \langle \zeta_a| \otimes |\xi_\alpha \rangle \langle \xi_\gamma| [\mathcal U_\alpha^\dagger,\mathcal W_{\gamma \alpha}^\dagger] \nonumber \\
& & \qquad + \mathcal O(\max(\frac{1}{T},\epsilon^2))
\end{eqnarray}
\begin{equation}
\Rightarrow \rho(s) = \tr_\E |\psi\rrangle \llangle \psi| = \mathcal U_\alpha \rho_{a\alpha} \mathcal U_\alpha^\dagger + \mathcal O(\max(\frac{1}{T},\epsilon^2)) \label{rho1}
\end{equation}
\begin{equation}
\mathcal U_\alpha = \sum_{b,c} \left[\Teg^{-\ihbar^{-1} T \int_0^s \Lambda_\alpha d\sigma - \int_0^s K_\alpha d\sigma} \right]_{bc}|\zeta^{(1)}_{b\alpha}\rangle\langle \zeta^{(1)}_{c\alpha}|
\end{equation}
But
\begin{eqnarray}
[K_\alpha]_{bc} & = & \llangle \phi_{b\alpha}|\phi_{c\alpha}'\rrangle \\
& = & \langle \zeta_{b\alpha}^{(1)}|\zeta_{c\alpha}^{(1)\prime}\rangle + \langle \xi_\alpha|\xi_\alpha'\rangle \nonumber \\
& & \quad + \epsilon \sum_{\gamma\not=\alpha} \mathcal V_{b\gamma,c\alpha} \langle \xi_\alpha|\xi_\gamma'\rangle (1-\delta_{bc}) \nonumber \\
& & \quad + \epsilon \sum_{\gamma\not=\alpha} \overline{\mathcal V_{c\gamma,b\alpha}} \langle \xi_\gamma|\xi_\alpha'\rangle (1-\delta_{bc}) +\mathcal O(\epsilon^2) \\
& = & [\mathring K_\alpha]_{bc} + \eta^{(1)}_{\alpha bc} + \mathcal O(\epsilon^2)
\end{eqnarray}
with $[\mathring K_\alpha]_{bc} = \langle \zeta_{b\alpha}^{(1)}|\zeta_{c\alpha}^{(1)\prime}\rangle$ ($\mathring K_\alpha \in \mathfrak M_{n\times n}(\mathbb C)$). By using the corollary \ref{cor} (\ref{corap}) we find
\begin{eqnarray}
& & \mathcal U_\alpha \nonumber \\
& & = \sum_{b,d} \left[\Teg^{-\int_0^s Xd\sigma}\right]_{bd} |\zeta^{(1)}_{b\alpha}\rangle \langle \zeta^{(1)}_{d\alpha}| \sum_{f,c} \left[\Teg^{-\int_0^s \mathring K_\alpha d\sigma} \right]_{fc} |\zeta^{(1)}_{f\alpha}\rangle \langle \zeta^{(1)}_{c\alpha}| \nonumber \\
& & + \mathcal O(\epsilon^2)
\end{eqnarray}
with $X = \ihbar^{-1}T \Lambda_\alpha + \eta^{(1)}_\alpha + \mathring K_\alpha - \Teg^{-\int_0^s X d\sigma} \mathring K_\alpha \left(\Teg^{-\int_0^s X d\sigma} \right)^{-1}$.
\begin{itemize}
\item Let $Y \in \mathfrak M_{n\times n}(\mathbb C)$ be such that $\Teg^{-\ihbar^{-1}T\int_0^s E_\alpha^{(1)} d\sigma} = \sum_{b,d} \left[\Teg^{-\int_0^s Yd\sigma}\right]_{bd} |\zeta^{(1)}_{b\alpha}\rangle \langle \zeta^{(1)}_{d\alpha}|$
\begin{equation}
\left(\Teg^{-\ihbar^{-1}T\int_0^s E_\alpha^{(1)} d\sigma}\right)' = -\ihbar^{-1} T E_\alpha^{(1)} \Teg^{-\ihbar^{-1}T\int_0^s E_\alpha^{(1)} d\sigma}
\end{equation}
implies that
\begin{eqnarray}
& & \sum_{b,d} \left[(-\ihbar^{-1}T\Lambda_\alpha-\eta_\alpha^{(1)})\Teg^{-\int_0^s Yd\sigma} \right]_{bd} |\zeta^{(1)}_{b\alpha}\rangle \langle \zeta^{(1)}_{d\alpha}| \nonumber \\
& & \quad = \sum_{b,d} \left[-Y\Teg^{-\int_0^s Yd\sigma}\right]_{bd} |\zeta^{(1)}_{b\alpha}\rangle \langle \zeta^{(1)}_{d\alpha}| \nonumber \\
& & \qquad + \sum_{b,d,f} \left[\Teg^{-\int_0^s Yd\sigma}\right]_{bd} \langle \zeta_{f\alpha}^{(1)}|\zeta_{b\alpha}^{(1)\prime}\rangle |\zeta^{(1)}_{f\alpha}\rangle \langle \zeta^{(1)}_{d\alpha}| \nonumber \\
& & \qquad + \sum_{b,d,f} \left[\Teg^{-\int_0^s Yd\sigma}\right]_{bd} \langle \zeta_{d\alpha}^{(1) \prime}|\zeta_{f\alpha}^{(1)}\rangle |\zeta^{(1)}_{b\alpha}\rangle \langle \zeta^{(1)}_{f\alpha}|
\end{eqnarray}
\begin{eqnarray}
& \Rightarrow & (-\ihbar^{-1}T\Lambda_\alpha-\eta_\alpha^{(1)}) \Teg^{-\int_0^s Yd\sigma} \nonumber \\
&& \quad = -Y\Teg^{-\int_0^s Yd\sigma}+\mathring K_\alpha \Teg^{-\int_0^s Yd\sigma} - \Teg^{-\int_0^s Yd\sigma} \mathring K_\alpha
\end{eqnarray}
We have then $Y = \ihbar^{-1}T\Lambda_{\alpha}+\eta_\alpha^{(1)} + \mathring K_\alpha - \Teg^{-\int_0^s Yd\sigma}\mathring K_\alpha \left(\Teg^{-\int_0^s Yd\sigma} \right)^{-1}$ and by comparison with the definition of $X$ we have $Y=X$ and then
\begin{equation}
\sum_{b,d} \left[\Teg^{-\int_0^s Xd\sigma}\right]_{bd} |\zeta^{(1)}_{b\alpha}\rangle \langle \zeta^{(1)}_{d\alpha}| = \Teg^{-\ihbar^{-1}T\int_0^s E_\alpha^{(1)} d\sigma}
\end{equation}
\item Let $Z \in \mathfrak M_{n\times n}(\mathbb C)$ be such that $\Ted^{-\int_0^s A_\alpha^{(1)}d\sigma} = \sum_{f,c} \left[\Teg^{-\int_0^s Zd\sigma}\right]_{fc} |\zeta_{f\alpha}^{(1)}\rangle \langle \zeta_{c\alpha}^{(1)}|$.
\begin{equation}
\left(\Ted^{-\int_0^s A_\alpha^{(1)}d\sigma}\right)' = - \Ted^{-\int_0^s A_\alpha^{(1)}d\sigma} A_\alpha^{(1)}
\end{equation}
 implies that
\begin{eqnarray}
- \Teg^{-\int_0^s Zd\sigma} \mathring K_\alpha & = & - Z \Teg^{-\int_0^s Zd\sigma} \nonumber \\
& & \quad + \mathring K_\alpha \Teg^{-\int_0^s Zd\sigma} - \Teg^{-\int_0^s Zd\sigma}  \mathring K_\alpha
\end{eqnarray}
$Z = \mathring K_\alpha$ and then
\begin{equation}
\sum_{f,c} \left[\Teg^{-\int_0^s \mathring K_\alpha d\sigma}\right]_{fc} |\zeta_{f\alpha}^{(1)}\rangle \langle \zeta_{c\alpha}^{(1)}| = \Ted^{-\int_0^s A_\alpha^{(1)}d\sigma}
\end{equation}
\end{itemize}
Finally we have
\begin{equation}
\mathcal U_\alpha = \Teg^{-\ihbar^{-1}T\int_0^s E_\alpha^{(1)} d\sigma}  \Ted^{-\int_0^s A_\alpha^{(1)}d\sigma} + \mathcal O(\epsilon^2)
\end{equation}
This concludes the proof by injecting this expression in equation (\ref{rho1}).
\end{proof}
We note that we have used the Wigner-Brillouin method for the perturbation theory ($\zeta_{b\alpha}^{(1)}(s) = \zeta_b(s) + \epsilon \sum_{d\not=b} \frac{V_{d\alpha,b\alpha}(s)}{\mu_b(s)-\mu_d(s)+\epsilon V_{b\alpha,b\alpha}(s)} \zeta_d(s)$) because the weak adiabatic regime does not need that $\tau_\S \ll T$, and permits crossings of eigenvalues of $\S$. The Wigner-Brillouin method permits to avoid some divergences in the perturbation expansion induced by these possible crossings.\\
We can remark that the off-diagonal part of $E_\alpha^{(1)}$ includes in fact a geometric phase generator associated with $\E$ ($\eta_\alpha^{(1)}$). This is not surprising since $E_\alpha^{(1)}$ takes the role of an effective Hamiltonian of $\S$ dressed by $\E$ (see below). Such a dynamical phase generator is similar with the one generated by quasi-energies in adiabatic Floquet theory (see for example \cite{Viennot6}) where the role of $\S$ is played by an atom or a molecule interacting with a strong laser field described by $L^2(S^1,\frac{d\theta}{2\pi})$ (the space of square integrable functions on the circle $S^1$, $\theta$ being the laser phase) which plays the role of $\mathcal H_\E$.

\subsection{Second order weak adiabatic regime}
The case of the second order perturbative approximation is more difficult. Indeed the second order approximation of the eigenvectors:
\begin{eqnarray}
\phi_{b\beta} & = & \zeta_b \otimes \xi_\beta \nonumber \\
& & + \epsilon \sum_{(c\gamma)\not=(b\beta)} \frac{V_{c\gamma,b\beta}}{\mu_b-\mu_c + \nu_\beta-\nu_\gamma} \zeta_c \otimes \xi_\gamma \nonumber \\
& & + \epsilon^2 \sum_{\begin{array}{c} \scriptstyle (d\delta)\not=(b\beta) \\ \scriptstyle  (c\gamma)\not=(b\beta) \end{array}} \frac{V_{d\delta,c\gamma}V_{c\gamma,b\beta}-V_{c\gamma,b\beta}V_{b\beta,b\beta}}{(\mu_b-\mu_d + \nu_\beta-\nu_\delta)(\mu_b-\mu_c+\nu_\beta-\nu_\gamma)} \zeta_d \otimes \xi_\delta \nonumber \\
& & + \mathcal O(\epsilon^3)
\end{eqnarray}
is not normalized (at the order $\epsilon^3$). This induces some difficulties to define an adiabatic transport formula, especially for the definition of the generator of the geometric phase. A normalization factor could be very complicated and difficult to use. We prefer to use a biorthonormal basis $\{\phi_{b\beta}^*\}_{b\beta}$ defined such that
\begin{equation}
\llangle \phi_{c\gamma}^*|\phi_{b\beta} \rrangle = \delta_{cb} \delta_{\gamma\beta} + \mathcal O(\epsilon^3)
\end{equation}
Such an approach is able to define a correct geometric phase generator \cite{Viennot7,Viennot8}. In the present context, the biorthonormal eigenvectors are
\begin{equation}
\llangle \phi_{b\beta}^*| = \llangle \phi_{b\beta}| - \epsilon^2 \sum_{(c\gamma)} X_{b\beta,c\gamma} \llangle \zeta_c \otimes \xi_\gamma|
\end{equation}
with
\begin{eqnarray}
& & X_{b\beta,c\gamma} \nonumber \\
& & = \sum_{(d\delta)\not=(c\gamma)} \frac{V_{b\beta,d\delta}V_{d\delta,c\gamma}-V_{d\delta,c\gamma}V_{c\gamma,c\gamma}}{(\mu_c-\mu_b+\nu_\gamma-\nu_\beta)(\mu_c-\mu_d+\nu_\gamma-\nu_\delta)}(1-\delta_{bc}\delta_{\beta\gamma}) \nonumber \\
& & + \sum_{(d\delta)\not=(b\beta)} \frac{V_{d\delta,c\gamma}V_{b\beta,d\delta}-V_{b\beta,d\delta}V_{b\beta,b\beta}}{(\mu_b-\mu_c+\nu_\beta-\nu_\gamma)(\mu_b-\mu_d+\nu_\beta-\nu_\delta)}(1-\delta_{bc}\delta_{\beta\gamma}) \nonumber \\
& & + \sum_{(d\delta)\not=(b\beta)} \frac{V_{d\delta,c\gamma}V_{b\beta,d\delta}}{(\mu_c-\mu_d+\nu_\gamma-\nu_\delta)(\mu_b-\mu_d+\nu_\beta-\nu_\delta)}(1-\delta_{dc}\delta_{\delta\gamma})
\end{eqnarray}

\begin{propo}
In the conditions of the weak adiabatic theorem (theorem \ref{weakth}) we have $\forall s \in[0,1]$
\begin{eqnarray}
\rho(s) & = & \Ad\left[\mathcal U_\alpha(s)\right] \rho_{a\alpha}^{(2)}(s) \nonumber \\
& & \qquad + \epsilon^2 \sum_{\delta\not=\alpha} \Ad\left[ \mathcal W_{\delta \alpha}(s) \mathcal U_\alpha(s) \right] |\zeta_a(s)\rangle\langle \zeta_a(s)|
\nonumber \\
& & \qquad + \mathcal O(\max(\frac{1}{T},\epsilon^3)) \label{adiabtransp2}
\end{eqnarray}
\begin{eqnarray}
\mathcal U_\alpha(s) & = & \Teg^{-\ihbar^{-1} T \int_0^s E_\alpha^{(2)}(\sigma)d\sigma} \Ted^{- \int_0^s A^{(2)}_\alpha (\sigma)d\sigma} \\
\mathcal W_{\delta \alpha}(s) & = & \sum_d \sum_{c\not=d} \frac{V_{d\delta,c\alpha}(s)}{\Delta_{c\alpha,d\delta}(s)} |\zeta_d(s)\rangle\langle \zeta_c(s)|
\end{eqnarray}
with $\Delta_{c\alpha,d\delta} = \mu_c-\mu_d +\nu_\alpha-\nu_\delta+\epsilon V_{c\alpha,c\alpha}$. The first order dynamical phase generator is defined as being
\begin{equation}
E_\alpha^{(2)}(s) = \sum_{b,c} \left(\lambda_{b\alpha}(s) \delta_{bc} - \frac{\ihbar}{T} \eta_{\alpha bc}^{(2)}(s)\right) |\zeta_{b\alpha}^{(2)}(s) \rangle \langle \zeta_{c\alpha}^{*(2)}(s)| \in \mathcal L(\mathcal H_\S)
\end{equation}
and the first order geometric phase generator is defined as being
\begin{eqnarray}
A^{(2)}_\alpha(s) & = & \sum_{b,c} \left(\langle \zeta_{b\alpha}^{*(2)}(s)|\zeta_{c\alpha}^{(2)\prime}(s) \rangle \right. \nonumber \\
& & \quad  + \epsilon^2 \sum_{\begin{array}{c} \scriptstyle d\not=c \\ \scriptstyle  f\not=b \\ \scriptstyle \delta\not=\alpha \end{array}} \left. \frac{V_{d\delta,c\alpha}(s)V_{b\alpha,f\delta}(s)}{\Delta_{c\alpha,d\delta}(s)\Delta_{b\alpha,f\delta}(s)} \langle \zeta_f(s)|\zeta'_d(s)\rangle \right) \nonumber \\
& & \qquad \qquad \times |\zeta_{b\alpha}^{(2)}(s) \rangle \langle \zeta_{c\alpha}^{*(2)}(s)| \in \mathcal L(\mathcal H_\S)
\end{eqnarray}
with
\begin{eqnarray}
\zeta_{b\alpha}^{(2)} & = & \zeta_b + \epsilon \sum_{d\not=b} \frac{V_{d\alpha,b\alpha}}{\Delta_{b\alpha,d\alpha}} \zeta_d \nonumber \\
& & \quad + \epsilon^2 \sum_{\begin{array}{c} \scriptstyle d\not=b \\ \scriptstyle  e\not=b \end{array}} \frac{V_{d\alpha,e\alpha}V_{e\alpha,b\alpha}-V_{e\alpha,b\alpha}V_{b\alpha,b\alpha}}{\Delta_{b\alpha,d\alpha}\Delta_{b\alpha,e\alpha}} \zeta_d
\end{eqnarray}
\begin{eqnarray}
\langle \zeta_{c\alpha}^{*(2)}| & = & \langle \zeta_c| \nonumber \\
& & + \epsilon \sum_{f\not=c} \frac{V_{c\alpha,f\alpha}}{\Delta_{c\alpha,f\alpha}} \langle\zeta_f| \nonumber \\
& & + \epsilon^2 \sum_{\begin{array}{c} \scriptstyle d\not=c \\ \scriptstyle  e\not=c \end{array}} \frac{V_{e\alpha,d\alpha}V_{c\alpha,e\alpha}-V_{c\alpha,e\alpha}V_{c\alpha,c\alpha}}{\Delta_{c\alpha,d\alpha}\Delta_{c\alpha,e\alpha}} \langle \zeta_d| \nonumber \\
& & - \epsilon^2 \sum_{\begin{array}{c} \scriptstyle e \\ \scriptstyle  k\not=e \end{array}} \frac{(V_{c\alpha,k\alpha}V_{k\alpha,e\alpha}-V_{k\alpha,e\alpha}V_{e\alpha,e\alpha})(1-\delta_{ce})}{\Delta_{e\alpha,c\alpha}\Delta_{e\alpha,k\alpha}} \langle \zeta_e| \nonumber \\
& &  - \epsilon^2 \sum_{\begin{array}{c} \scriptstyle e \\ \scriptstyle  k\not=c \end{array}} \frac{(V_{k\alpha,e\alpha}V_{c\alpha,k\alpha}-V_{c\alpha,k\alpha}V_{c\alpha,c\alpha})(1-\delta_{ce})}{\Delta_{c\alpha,e\alpha}\Delta_{c\alpha,k\alpha}} \langle \zeta_e| \nonumber \\
& &  - \epsilon^2 \sum_{\begin{array}{c} \scriptstyle e \\ \scriptstyle  k\not=c \\ \scriptstyle  \gamma\not=\alpha \end{array}} \frac{V_{k\gamma,e\alpha}V_{c\alpha,k\gamma}(1-\delta_{ke})}{\Delta_{e\alpha,k\gamma}\Delta_{c\alpha,k\gamma}} \langle \zeta_e|
\end{eqnarray}

\begin{eqnarray}
\eta^{(2)}_{\alpha bc} & = & \langle \xi_\alpha|\xi_\alpha'\rangle \delta_{bc} \nonumber \\
& & + \epsilon \underset{\delta \not=\alpha}{\sum} \frac{V_{b\delta,c\alpha} \langle\xi_\alpha|\xi_\delta'\rangle (1-\delta_{bc})}{\Delta_{c\alpha,b\delta}} \nonumber \\
& & + \epsilon \underset{\gamma \not=\alpha}{\sum} \frac{V_{b\alpha,c\gamma} \langle\xi_\gamma|\xi_\alpha'\rangle (1-\delta_{bc})}{\Delta_{b\alpha,c\gamma}} \nonumber \\
& & + \epsilon^2 \underset{\begin{array}{c}  \scriptstyle d\not=c \\  \scriptstyle  \delta\not=\alpha \\  \scriptstyle  \gamma \end{array}}{\sum} \frac{V_{d\delta,c\alpha} V_{b\alpha,d\gamma}(1-\delta_{db})\langle \xi_\gamma|\xi_\delta'\rangle}{\Delta_{c\alpha,d\delta}\Delta_{b\alpha,d\gamma}} \nonumber \\
& &\ + \epsilon^2 \underset{\begin{array}{c}  \scriptstyle \delta\not=\alpha \\  \scriptstyle  e\not= c \\  \scriptstyle  \phi \not= a \end{array}}{\sum} \frac{(V_{b\delta,e\phi}V_{e\phi,c\alpha}-V_{e\phi,c\alpha}V_{c\alpha,c\alpha})(1-\delta_{bc})\langle \xi_\alpha|\xi_\delta'\rangle}{\Delta_{c\alpha,b\delta}\Delta_{c\alpha,e\phi}} \nonumber \\
& &+ \epsilon^2 \underset{\begin{array}{c} \scriptstyle \gamma\not=\alpha \\ \scriptstyle  d\not= c \end{array}}{\sum} \frac{V_{b\alpha,d\gamma}V_{d\alpha,c\alpha}(1-\delta_{db}) \langle \xi_\gamma|\xi_\alpha'\rangle}{\Delta_{b\alpha,d\gamma}\Delta_{d\alpha,c\alpha}} \nonumber \\
& & +\epsilon^2 \underset{\begin{array}{c}  \scriptstyle \delta\not=\alpha \\  \scriptstyle  e\not= b \\  \scriptstyle  \phi \not= \alpha \end{array}}{\sum} \frac{(V_{e\phi,c\delta}V_{b\alpha,e\phi}-V_{b\alpha,e\phi}V_{b\alpha,b\alpha})(1-\delta_{bc})\langle \xi_\delta|\xi_\alpha'\rangle}{\Delta_{b\alpha,c\delta}\Delta_{b\alpha,e\phi}} \nonumber \\
& & -\epsilon^2 \underset{\begin{array}{c}  \scriptstyle \phi\not=\alpha \\  \scriptstyle  k\not= c \\  \scriptstyle  \gamma \not= \phi,\alpha \end{array}}{\sum} \frac{(V_{b\alpha,k\gamma}V_{k\gamma,c\phi} - V_{k\gamma,c\phi} V_{c\phi,c\phi})(1-\delta_{(b\alpha),(c\phi)}) \langle \xi_\phi|\xi_\alpha'\rangle}{\Delta_{c\phi,b\alpha}\Delta_{c\phi,k\gamma}} \nonumber \\
& & -\epsilon^2 \underset{\begin{array}{c} \scriptstyle \phi\not=\alpha \\  \scriptstyle  k\not= b \\  \scriptstyle  \gamma \not= \alpha \end{array}}{\sum} \frac{(V_{k\gamma,c\phi}V_{b\alpha,k\gamma}-V_{b\alpha,k\gamma}V_{b\alpha,b\alpha})(1-\delta_{(b\alpha),(c\phi)}) \langle \xi_\phi|\xi_\alpha' \rangle}{\Delta_{b\alpha,c\phi}\Delta_{b\alpha,k\gamma}} \nonumber \\
& &-\epsilon^2 \underset{\begin{array}{c} \scriptstyle \phi\not=\alpha \\  \scriptstyle  k\not= b \\  \scriptstyle  \gamma \not= \alpha \end{array}}{\sum} \frac{V_{k\gamma,c\phi} V_{b\alpha,k\gamma}(1-\delta_{(k\gamma),(c\phi)}) \langle \xi_\phi|\xi_\alpha'\rangle}{\Delta_{c\phi,k\gamma}\Delta_{b\alpha,k\gamma}}
\end{eqnarray}
$\rho^{(2)}_{a\alpha}$ is the corrected density matrix defined as
\begin{eqnarray}
\rho^{(2)}_{a\alpha}& = &  \rho_{a\alpha} \nonumber \\
& & \quad + \epsilon^2 \sum_{\begin{array}{c} \scriptstyle c \\ \scriptstyle d\not=c \\ \scriptstyle  \gamma\not=\alpha \end{array}} \frac{V_{c\alpha,d\gamma}V_{d\gamma,a\alpha}(1-\delta_{ad})}{\Delta_{c\alpha,d\gamma}\Delta_{a\alpha,d\gamma}} |\zeta_c \rangle \langle \zeta_a| \nonumber\\
& & \quad + \epsilon^2 \sum_{\begin{array}{c} \scriptstyle f \\ \scriptstyle d\not=c \\ \scriptstyle  \gamma\not=\alpha \end{array}} \frac{V_{d\gamma,f\alpha}V_{a\alpha,d\gamma}(1-\delta_{ad})}{\Delta_{f\alpha,d\gamma}\Delta_{a\alpha,d\gamma}} |\zeta_a \rangle \langle \zeta_f| \nonumber\\
& & \quad - \epsilon^2 \sum_{\begin{array}{c} \scriptstyle c,f \\ \scriptstyle  \delta\not=\alpha \end{array}} \frac{V_{a\alpha,f\delta}V_{c\delta,a\alpha}(1-\delta_{ac})(1-\delta_{af})}{\Delta_{a\alpha,f\delta}\Delta_{a\alpha,c\delta}} |\zeta_c \rangle \langle \zeta_f|
\end{eqnarray}
\end{propo}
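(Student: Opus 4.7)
The plan is to follow the same scheme as the first order proposition but push everything one order further, and handle the non-normalization of the second order Rayleigh--Schr\"odinger eigenvectors by replacing $\llangle \phi_{c\alpha}|$ with the biorthonormal dual $\llangle \phi^*_{c\alpha}|$ introduced just above the proposition. Starting from Theorem \ref{weakth} and the multi-eigenvalue adiabatic transport formula, I would write
\begin{equation}
\psi(s) = \sum_{b,c} U_{\alpha bc}(s)\, |\phi_{b\alpha}\rrangle \llangle \phi^*_{c\alpha}|\phi_{a\alpha} + \mathcal O\!\left(\tfrac{1}{T}\right),
\end{equation}
with $U_{\alpha bc} = \bigl[\Teg^{-\ihbar^{-1} T \int_0^s \Lambda_\alpha d\sigma - \int_0^s K_\alpha d\sigma}\bigr]_{bc}$, where now the $K_\alpha$ matrix must be expanded to order $\epsilon^2$ using the biorthonormal pairing so that its unitarity-type structure is preserved modulo $\mathcal O(\epsilon^3)$.

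Next I would substitute the explicit second-order expansions of $\phi_{b\alpha}$ and $\phi^*_{c\alpha}$ (the one written with the coefficients $X_{b\beta,c\gamma}$ in the excerpt) into $|\phi_{b\alpha}\rrangle\llangle \phi^*_{c\alpha}|\phi_{a\alpha}$ and reorganize the resulting terms in $\mathcal H_\S \otimes \mathcal H_\E$ according to which factor $|\xi_\gamma\rangle$ appears in. Terms carrying the factor $|\xi_\alpha\rangle$ collapse, after acting on $\phi_{a\alpha}$ and grouping, into $\mathcal U_\alpha |\zeta_{a\alpha}^{(2)}\rangle\otimes|\xi_\alpha\rangle$ with $\mathcal U_\alpha = \sum_{b,c} U_{\alpha bc} |\zeta_{b\alpha}^{(2)}\rangle\langle\zeta_{c\alpha}^{*(2)}|$, whereas terms carrying $|\xi_\delta\rangle$ with $\delta\neq\alpha$ package themselves, at order $\epsilon$, into $\epsilon \sum_{\delta\neq\alpha}[\mathcal W_{\delta\alpha},\mathcal U_\alpha]\phi_{a\delta}$ (as in the first order proof) plus genuine $\epsilon^2$ corrections that, after computing $|\psi\rrangle\llangle\psi|$ and applying $\tr_\E$, rearrange as the second line of (\ref{adiabtransp2}) and as the $\epsilon^2$-correction $\rho_{a\alpha}^{(2)}-\rho_{a\alpha}$ on the diagonal sector. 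The $\tr_\E$ kills all cross terms $|\xi_\alpha\rangle\langle\xi_\gamma|$ with $\gamma\neq\alpha$, which is why only the squared norms of the $\mathcal W$-terms survive in the final formula.

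At this stage only the algebraic factorization $\mathcal U_\alpha = \Teg^{-\ihbar^{-1} T\int_0^s E_\alpha^{(2)}d\sigma}\,\Ted^{-\int_0^s A_\alpha^{(2)}d\sigma}$ remains. I would proceed exactly as in the zero and first order proofs: invoke Corollary \ref{cor} (the commutation splitting of a $\Teg$ into a $\Teg\cdot \Teg$, used upstream) with $X = \ihbar^{-1}T\Lambda_\alpha + \eta_\alpha^{(2)} + \mathring K_\alpha - \Teg^{-\int X d\sigma}\mathring K_\alpha(\Teg^{-\int Xd\sigma})^{-1}$, where $\mathring K_\alpha$ collects the biorthonormal Berry connection $\langle\zeta_{b\alpha}^{*(2)}|\zeta_{c\alpha}^{(2)\prime}\rangle$ augmented by the extra $\epsilon^2$ sum appearing inside $A_\alpha^{(2)}$, and $\eta_\alpha^{(2)}$ contains the remaining pieces of $[K_\alpha]_{bc}$ not absorbed by $\mathring K_\alpha$. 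Matching generator-by-generator the differential equations of $\Teg^{-\ihbar^{-1}T\int E_\alpha^{(2)} d\sigma}$ and $\Ted^{-\int A_\alpha^{(2)} d\sigma}$ against those determining $X$ and $\mathring K_\alpha$ gives the identification modulo $\mathcal O(\epsilon^3)$.

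The main obstacle is purely bookkeeping. The honest work is (i) verifying that the expansion of $[K_\alpha]_{bc} = \llangle\phi^*_{b\alpha}|\phi'_{c\alpha}\rrangle$ up to $\epsilon^2$ decomposes cleanly into the $\mathring K_\alpha$ piece (which becomes the geometric generator $A_\alpha^{(2)}$) and the $\eta_\alpha^{(2)}$ piece (whose ten contributions in the statement correspond one-to-one with derivatives of the second-order biorthonormal corrections paired against $|\xi_\gamma'\rangle$, $\gamma\neq\alpha$), and (ii) checking that the $\epsilon^2$ residue left over from the cross sector $\gamma\neq\alpha$, after partial trace, is exactly $\rho_{a\alpha}^{(2)} - \rho_{a\alpha}$ so that the compact form (\ref{adiabtransp2}) is obtained. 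Everything else is the same factorization machinery already used at zero and first order; no new analytic input beyond Theorem \ref{weakth} and Corollary \ref{cor} is needed.
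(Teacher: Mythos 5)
Your plan coincides with the paper's own (only sketched) argument: the paper simply states that the proof follows the first-order development, with the biorthonormal vectors $\phi^*_{b\alpha}$ replacing the ordinary duals and with the order-$\epsilon^2$ contributions along $\xi_\delta$, $\delta\neq\alpha$, surviving the partial trace --- exactly the three ingredients you identify, together with the same use of Corollary \ref{cor} to factorize $\mathcal U_\alpha$ into the $E^{(2)}_\alpha$ and $A^{(2)}_\alpha$ exponentials. Since the paper gives no further detail, your outline is essentially as complete as the published proof, modulo the long bookkeeping you correctly flag as the remaining work.
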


The proof is very long but its development is very similar to the first order case except that we need to take into account the biorthonormality and that some second order extra terms involving indexes $\delta \not= \alpha$ of $\mathcal E$, which are not killed by the partial trace. The significance of these extra terms and of the higher complexity of the adiabatic transport formula, are discussed in the following section.

\subsection{Discussion about the operator-valued phases}
Operator-valued geometric phases have been introduced in \cite{Andersson,Viennot1,Viennot2} for density matrices. We recall rapidly the motivation of such geometric phases. The quantum control problems are characterized by the condition $H(1)=H(0)$ (we start and we end with control system off). This induces that $\phi_{a\alpha}(1) = \phi_{a\alpha}(0)$ and then $\rho_{a\alpha}(1) = \rho_{a\alpha}(0)$. But to solve a quantum control problem, we need that $\|\rho(1) - \rho_{target}\|$ to be minimal (with $\rho_{target}$ the control goal and $\rho(s)$ the density matrix of the dynamics such that $\rho(0) = \rho_{a\alpha}(0)$). But if $\rho(1) = \rho_{a\alpha}(1)$ (strong adiabatic regime) it is impossible to solve a quantum control problem by an adiabatic scheme (unless the initial condition is already the control target). In adiabatic quantum control, it needs that $\rho(1) = U \rho_{a\alpha}(1) U^\dagger$ with $U$ an operator of $\S$ associated with the adiabatic transport of the mixed state $\rho_{a\alpha}$, and transforming $\rho_{a\alpha}(1)$ such that $\rho(1)$ is close to $\rho_{target}$. In comparison with the adiabatic transport of pure states of closed systems ($\psi(1) = e^{\imath \varphi} \phi_a(1)$, where $\phi_a$ is an instantaneous eigenvector and $e^{\imath \varphi}$ is the product of a dynamical and a geometric phases); $U$ plays the role of the product of a dynamical phase and a geometric phase. But these phases are operator-valued since $U$ is an operator. This is well what we find with the adiabatic transport formula of $\rho_{a\alpha}$ in the weak adiabatic regime.\\ 

In \cite{Viennot1} by an analysis based on a generalization of the geometric structure describing the usual adiabatic geometric phases (using a non-commutative Hilbert space -- a $C^*$-module -- and a categorical principal bundle) the generator of an operator-valued geometric phase has been defined by (we use the present notations)
\begin{equation}
\label{Adefgene}
A_\alpha = \tr_\E\left(|P_{\bullet \alpha} \phi_{a\alpha}'\rrangle \llangle \phi_{a\alpha}|\right) \rho_{a\alpha}^{-1}
\end{equation}
where $\rho_{a\alpha}^{-1}$ is the pseudo-inverse of $\rho_{a\alpha}$ ($\rho_{a\alpha} \rho_{a\alpha}^{-1} = 1 - P_{\ker \rho_{a\alpha}}$ where $P_{\ker \rho_{a\alpha}}$ is the orthogonal projection onto the kernel of $\rho_{a\alpha}$). By considering the perturbative expansions  we find that
\begin{eqnarray}
A_\alpha & = & A^{(0)} + \langle \xi_\alpha|\xi_\alpha'\rangle  + \mathcal O(\epsilon) \\
& = & A^{(1)}_{\alpha}  + \langle \xi_\alpha|\xi_\alpha'\rangle + \mathcal O(\epsilon^2)
\end{eqnarray}
Up to a $U(1)$-gauge change leaving invariant the density matrix $\rho(s)$ ($e^{-\int_0^s \langle \xi_\alpha|\xi_\alpha'\rangle d\sigma} \in U(1)$), the operator valued geometric phases found in the present paper coincides with the definition introduced in \cite{Viennot1} which is a generalization of the geometric phases introduced in \cite{Sjoqvist,Tong,Dajka,Andersson}.\\

The role of the operator-valued dynamical phase is interesting. Suppose temporarily that $\zeta_{b\alpha}^{(1)}$ is constant (independent of $s$) but not $\lambda_{b\alpha}(s)$. In this assumption we have
\begin{equation}
\rho(s) \simeq \Ad\left[\Teg^{-\ihbar^{-1} T \int_0^s E_\alpha^{(1)}(\sigma) d\sigma}\right] \rho_{a\alpha}
\end{equation}
This induces that
\begin{equation}
\ihbar \dot \rho \simeq \left[E_\alpha^{(1)},\rho \right]
\end{equation}
This expression is very similar to the Liouville-von Neumann equation of an isolated system (\cite{Breuer}): if $\S$ is isolated and governed by the self-adjoint Hamiltonian $H_\S \in \mathcal L(\mathcal H_\S)$, we have
\begin{equation}
\ihbar \dot \rho = \left[H_\S,\rho \right]
\end{equation}
$E_\alpha^{(1)}$ plays then the role of an effective Hamiltonian of $\S$ taking into account effects induced by $\E$. We can consider $E_\alpha^{(1)}$ as the effective Hamiltonian of $\S$ dressed by $\E$ like the Floquet Hamiltonian of an atom interacting with a strong laser field is the effective Hamiltonian of the atom dressed by the photons \cite{Viennot6,Guerin2}.\\
In the reality $\zeta_{b\alpha}^{(1)}$ depends on the reduced time $s$, and $E_\alpha^{(1)}$ is described by using a moving basis. The operator-valued geometric phase (as all geometric phases) is just a correction to take into account the movement of the basis (like for the simpler example, the inertial forces are corrections in Newtonian mechanics to take into account a description in a non-inertial frame).\\

Concerning the second order adiabatic transport formula, we suppose temporarily again that $\zeta_{b\alpha}^{(2)}$ and $\mathcal W_{\delta \alpha}$ are constant. By using the expression \ref{adiabtransp2}, $\rho(s) \simeq \Ad\left[\mathcal U_\alpha\right] \rho_{a\alpha}^{(2)} + \epsilon^2 \sum_{\delta \not=\alpha} \Ad\left[\mathcal W_{\delta \alpha} \mathcal U_\alpha \right] \rho_{a\alpha}^{(2)}$ satisfies
\begin{eqnarray}
\ihbar \dot \rho & \simeq & E_\alpha^{(2)} \rho - \rho E_\alpha^{(2) \dagger} + \epsilon^2 \sum_{\delta\not=\alpha} \mathcal W_{\delta \alpha} (E_\alpha^{(2)} \rho-\rho E_\alpha^{(2)\dagger}) \mathcal W_{\delta \alpha}^\dagger \\
& \simeq & [E_{\alpha+}^{(2)},\rho] + \imath \{E_{\alpha-}^{(2)},\rho\} \nonumber \\
& & \qquad + \epsilon^2 \sum_{\delta\not=\alpha} \mathcal W_{\delta \alpha} ([E_{\alpha+}^{(2)},\rho] + \imath \{E_{\alpha-}^{(2)},\rho\} ) \mathcal W_{\delta \alpha}^\dagger
\end{eqnarray}
where $E_{\alpha+}^{(2)} = \frac{1}{2}(E_\alpha^{(2)}+E_\alpha^{(2)\dagger})$ and $E_{\alpha-}^{(2)} = \frac{1}{2\imath}(E_\alpha^{(2)}-E_\alpha^{(2)\dagger})$; the braces denote the anticommutator ($\{A,B\} = AB+BA$). The dynamical phase generator has the form $E_\alpha^{(2)} = E_{\alpha 0}^{(2)} - \epsilon^2 \sum_{\delta\not=\alpha} E_{\alpha 0}^{(2)} \mathcal W_{\delta \alpha}^\dagger \mathcal W_{\delta \alpha}$, and then $E_{\alpha+}^{(2)} = E_{\alpha 0}^{(2)} - \frac{\epsilon^2}{2} \sum_{\delta\not= \alpha} \{E_{\alpha 0}^{(2)},\mathcal W_{\delta \alpha}^\dagger \mathcal W_{\delta \alpha}\}$ and $E_{\alpha-}^{(2)} =  - \frac{\epsilon^2}{2 \imath} \sum_{\delta\not= \alpha} [E_{\alpha 0}^{(2)},\mathcal W_{\delta \alpha}^\dagger \mathcal W_{\delta \alpha}]$. This implies that
\begin{eqnarray}
\ihbar \dot \rho & \simeq & [E_{\alpha+}^{(2)},\rho] - \frac{\epsilon^2}{2} \sum_{\delta \not= \alpha} \{[E_{\alpha 0}^{(2)},\mathcal W_{\delta \alpha}^\dagger \mathcal W_{\delta \alpha}],\rho\} \nonumber \\
& & \qquad + \epsilon^2 \sum_{\delta\not= \alpha} \mathcal W_{\delta \alpha} [E_{\alpha 0}^{(2)},\rho ] \mathcal W_{\delta \alpha}^\dagger
\end{eqnarray}
Let $\Gamma_{\alpha \delta 0} = \mathcal W_{\delta \alpha} + \imath \mathcal W_{\delta \alpha} E_{\alpha 0}^{(2)}$, $\Gamma_{\alpha \delta 1} = \mathcal W_{\delta \alpha}$ and $\Gamma_{\alpha \delta 2} =  \mathcal W_{\delta \alpha} E_{\alpha 0}^{(2)}$, we have then
\begin{eqnarray}
\ihbar \dot \rho & \simeq & [E_{\alpha+}^{(2)},\rho] - \frac{\imath \epsilon^2}{2} \sum_{\delta \not= \alpha}  \sum_k \gamma^k \{\Gamma_{\alpha \delta k}^\dagger \Gamma_{\alpha \delta k},\rho\} \nonumber \\
& & \qquad + \imath \epsilon^2 \sum_{\delta\not= \alpha} \sum_k\gamma^k \Gamma_{\alpha \delta k} \rho \Gamma_{\alpha \delta k}^\dagger
\end{eqnarray}
with $\gamma^0 = 1$ and $\gamma^1=\gamma^2=-1$. This last equation is similar to the Lindblad equation of an open quantum system in Markovian approximation \cite{Breuer} (except that in the strict Lindblad theory $\gamma^k>0$ for all $k$). $E_\alpha^{(2)}$ and $\mathcal W_{\alpha \delta}$ generate then an effective Lindblad equation for $\mathcal S$ in contact with $\mathcal E$. The extra terms involving indexes of $\mathcal E$ different from $\alpha$ in equation \ref{adiabtransp2}, are then associated with the ``quantum jumps'' (see \cite{Breuer}). The geometric phase is anew a correction to take into account that the biorthonormal basis $\{\zeta_{b\alpha}^{(2)},\zeta_{b\alpha}^{*(2)}\}_b$ is moving.

\subsection{The thermal bath case}
When $\E$ is a large subsystem, it can be interesting to consider it at $s=0$ as being a thermal bath, i.e. $\E$ is described by the density matrix
\begin{equation}
\rho_B = \frac{e^{-\underline \beta H_\E(0)}}{Z} = \sum_\alpha \frac{e^{-\underline \beta \nu_\alpha(0)}}{Z} |\xi_\alpha(0)\rangle \langle\xi_\alpha(0)|
\end{equation}
where $\underline \beta = \frac{1}{k_B \underline T}$ ($\underline T$ being the temperature of the bath and $k_B$ being the Boltzmann constant, the underline is just a notation to avoid confusions with state indexes or with the duration of the evolution). The partition function is $Z = \tr_\E e^{-\underline \beta H_\E(0)}$. Let $\rho_{\mathcal U} \in \mathcal L(\mathcal H_\S \otimes \mathcal H_\E)$ be the density matrix of the complete bipartite system solution of the Liouville-von Neumann equation:
\begin{equation}
\frac{\ihbar}{T} \rho_{\mathcal U}'(s) = \left[H_\S(s)\otimes 1_\E + 1_\S \otimes H_\E(s) + \epsilon V(s),\rho_{\mathcal U}(s)\right]
\end{equation}
\begin{equation}
\rho_{\mathcal U}(0) = \sum_\alpha \frac{e^{- \underline \beta \nu_\alpha(0)}}{Z} |\phi_{a\alpha}(0) \rrangle \llangle \phi_{a\alpha}(0)|
\end{equation}
We have $\tr_\S \rho_{\mathcal U}(0) = \rho_B + \mathcal O(\epsilon^2)$ implying that $\E$ is well a thermal bath (moreover $\rho_{\mathcal U}$ is a steady state of the $H(0)$). The solution of the Liouville-von Neumann equation  is
\begin{equation}
\rho_{\mathcal U}(s) = \sum_\alpha \frac{e^{- \underline \beta \nu_\alpha(0)}}{Z} |\psi_{(a\alpha)}(s) \rrangle \llangle \psi_{(a\alpha)}(s)|
\end{equation}
where $\psi_{(a\alpha)}$ is the solution of the Schr\"odinger equation $\frac{\ihbar}{T} \psi_{(a\alpha)}' = H\psi_{(a\alpha)}$ with the initial condition $\psi_{(a\alpha)}(0) = \phi_{a\alpha}(0)$. At the weak adiabatic limit we have then
\begin{eqnarray}
\rho(s) & = & \tr_\E \rho_{\mathcal U}(s) \\
& = & \sum_\alpha \frac{e^{- \underline \beta \nu_\alpha(0)}}{Z} \Ad\left[\Teg^{-\ihbar^{-1} T \int_0^s E_\alpha^{(1)}(\sigma)d\sigma} \Ted^{- \int_0^s A^{(1)}_\alpha (\sigma)d\sigma} \right] \rho_{a\alpha}(s) \nonumber \\
& & \qquad + \mathcal O(\max(\frac{1}{T},\epsilon^2))
\end{eqnarray}

\subsection{Eigenvalue crossings of $\E$ in the weak adiabatic regime} \label{crossing}
The weak adiabatic theorem (theorem \ref{weakth}) requires that the eigenvalue of $\E$, $\nu_\alpha(s)$, does not cross another eigenvalue. If this requirement is natural for the control problem, it can be not realized in the practice. Suppose that $\exists s_* \in [0,1]$ such that $\nu_\alpha(s_*) = \nu_\beta(s_*)$ (no other crossings implying $\nu_\alpha$ and $\nu_\beta$ occur). We suppose that the conditions of the weak adiabatic theorem are satisfied except in the neighbourhood of $s_*$. Due to the nonadiabatic transitions induced in the neighbourhood of $s_*$ by this crossing, the density matrix becomes (for $s\gg s_*$):
\begin{eqnarray}
\rho(s) & = & (1-p) \Ad\left[\Teg^{-\ihbar^{-1}T\int_0^s E_\alpha^{(1)} d\sigma} \Ted^{-\int_0^s A_\alpha^{(1)}d\sigma}\right] \rho_{a\alpha}(s) \nonumber \\
& & + p  \Ad\left[\Teg^{-\ihbar^{-1}T\int_0^s E_\beta^{(1)} d\sigma} \Ted^{-\int_0^s A_\beta^{(1)}d\sigma}\right] \rho_{a\beta}(s) \nonumber \\
& & + \sqrt{(1-p)p}e^{\imath \varphi} \underset{\alpha \leftrightarrow \beta}{\Ad} \left[\Teg^{-\ihbar^{-1}T\int_0^s E^{(1)}_\bullet d\sigma} \Ted^{-\int_0^s A^{(1)}_\bullet d\sigma} \right] \tau_{a\alpha\beta}(s)  \nonumber \\
& & + \sqrt{(1-p)p}e^{-\imath \varphi} \underset{\beta \leftrightarrow \alpha}{\Ad} \left[\Teg^{-\ihbar^{-1}T\int_0^s E^{(1)}_\bullet d\sigma} \Ted^{-\int_0^s A^{(1)}_\bullet d\sigma} \right] \tau_{a\beta\alpha}(s)\nonumber \\
& & + \mathcal O(\max(\frac{1}{T},\epsilon^2))
\end{eqnarray}
where $\underset{\alpha \leftrightarrow \beta}{\Ad}[U_\bullet]\tau = U_\alpha \tau U_\beta^\dagger$ and 
\begin{eqnarray}
\tau_{a\alpha\beta} & = & \tr_\E |\phi_{a\alpha} \rrangle \llangle \phi_{a\beta}|\\
& = & \epsilon \sum_{d\not=a} \frac{V_{d\beta,a\alpha}}{\mu_a-\mu_d+\nu_\alpha-\nu_\beta+\epsilon V_{a\alpha,a\alpha}} |\zeta_d\rangle \langle \zeta_a| \nonumber \\
& & \quad + \epsilon \sum_{d\not=a} \frac{V_{a\beta,d\alpha}}{\mu_a-\mu_d+\nu_\beta-\nu_\alpha+\epsilon V_{a\beta,a\beta}} |\zeta_a\rangle \langle \zeta_d| + \mathcal O(\epsilon^2)
\end{eqnarray}
$p$ is the probability of the nonadiabatic transition from $\xi_\alpha$ to $\xi_\beta$ induced by the passage through the crossing, and $\varphi$ is a phase difference accumulated during the nonadiabatic transition. It is clear that the crossing of eigenenergies of $\E$ generates a decoherence effect in the density matrix of $\S$ that we call \textit{kinematic decoherence} since it is induced by the variation of the control system with respect to the time. In practice it can be difficult to compute explicitly $p$, but if we suppose that $\forall s \in \mathcal V(s_*)$, $\nu_\beta(s)-\nu_\alpha(s) = \aleph (s-s_*)$ ($\aleph$ being a constant) and that $V_{a\alpha,a\beta}$ is independent of $s$, then $p$ can be estimated by the Landau-Zener formula \cite{Landau,Zener}, i.e. $p= e^{-2\pi \frac{T\epsilon^2 V_{a\alpha,a\beta}^2}{\hbar |\aleph|}}$.

\section{Examples}
In this section we present two examples of bipartite quantum systems and we study their adiabatic dynamics. We want to compare their real dynamics (numerically computed by using a split operator method without another approximation), to the prediction of the usual adiabatic transport formula for $\S$ alone (by neglecting the influence of $\E$, an approximation currently considered in adiabatic control methods), and to the prediction of the adiabatic transport formulae with operator valued phases which considers $\S$ dressed by states of $\E$ (the operator-valued phases are numerically computed by the same split operator method with the same time discretisation, nevertheless the dimensions of the matrices -- $\dim \mathcal H_\S$ -- is reduced in comparison with the ``exact'' computation -- $\dim \mathcal H_\S \times \dim \mathcal H_\E$ --).

\subsection{Control of atomic qubits}
\subsubsection{The model:}
We consider a two level atom $\S$ interacting with a laser field which is governed in the rotating wave approximation with one photon by the Hamiltonian
\begin{eqnarray}
H_\S(s) & = & \frac{\hbar}{2} \left( \begin{array}{cc} 0 & \Omega(s) e^{\imath \varphi(s)} \\ \Omega(s) e^{-\imath \varphi(s)} & 2 \Delta(s) \end{array} \right) \\
& = & \frac{\hbar}{2} \left( \Omega(s) \cos \varphi(s) \sigma_x + \Omega(s) \sin \varphi(s) \sigma_y + \Delta(s) (\id-\sigma_z) \right)
\end{eqnarray}
where $\Omega(s)$ is the product between the electric field strength and the dipolar moment of the atom, $\varphi$ is the dephasing of the laser field, and $\Delta$ is the detuning (the energy gap between the two atomic states minus the energy of one photon of the laser field). This system can be viewed like a model of one qubit where the laser field is the control system performing a one input/output logic gate on it. $(\sigma_x,\sigma_y,\sigma_z)$ are the Pauli matrices.\\
A second atom (qubit) $\E$ is in contact with the first one and is governed by the following Hamiltonian
\begin{equation}
H_\E = \left(\begin{array}{cc} 0 & 0 \\ 0 & \hbar\omega_e \end{array} \right) = \frac{\hbar\omega_e}{2} (\id-\sigma_z)
\end{equation}
The interaction between the both atoms is chosen as being
\begin{eqnarray}
\epsilon V & = & \epsilon \left(\begin{array}{cccc} V_0 & V_1 & 0 & V_3 \\ V_1 & V_0 & V_3 & 0 \\ 0 & V_3 & 2V_0 & V_2 \\ V_3 & 0 & V_2 & 2V_0 \end{array} \right) \\
& = & \epsilon \left( (\frac{V_0}{2} \id \otimes (\id + \sigma_z) + V_0 \id \otimes (\id - \sigma_z) \right. \nonumber \\
& & \quad \left. + \frac{V_1}{2} \sigma_x \otimes (\id + \sigma_z) +  \frac{V_2}{2} \sigma_x \otimes (\id - \sigma_z) + V_3 \sigma_x \otimes \sigma_x \right)
\end{eqnarray}
in a matrix representation where the two first inputs are associated with the both states of $\S$ and the ground state of $\E$, and the two last inputs are associated with the both states of $\S$ and the excited state of $\E$. $\epsilon \ll 1$ is the perturbative parameter.\\
Let $r(s) = \sqrt{\Omega(s)^2+\Delta(s)^2}$ and $\theta(s) = \arctan \frac{\Omega(s)}{\Delta(s)}$ be variable changes of the control parameters. The eigenvalues and the eigenvectors of the both components of the bipartite system are
\begin{eqnarray}
\mu_0(s) =  \frac{\hbar}{2} r(s) (\cos \theta(s) -1) & \qquad & \zeta_0(s) = \left(\begin{array}{c} - \cos \frac{\theta(s)}{2} \\ e^{-\imath \varphi(s)} \sin \frac{\theta(s)}{2} \end{array} \right) \\
\mu_1(s)  =  \frac{\hbar}{2} r(s) (\cos \theta(s) +1) & \qquad & \zeta_1(s) = \left(\begin{array}{c} e^{\imath \varphi(s)} \sin \frac{\theta(s)}{2} \\  \cos \frac{\theta(s)}{2} \end{array} \right) \\
\nu_0  =  0 & \qquad & \xi_0 = \left(\begin{array}{c} 1 \\ 0 \end{array} \right) \\
\nu_1  =  \hbar\omega_e & \qquad & \xi_1 = \left(\begin{array}{c} 0 \\ 1 \end{array} \right)
\end{eqnarray}
The control is fixed by the following variation of the control parameters:
\begin{eqnarray}
r(s) & = & r_{max} + (r_{min}-r_{max}) e^{-25 (s-0.5)^2} \\
\theta(s) & = & \theta_{max} \sin (\pi s) \\
\varphi(s) & = & 2\pi s
\end{eqnarray}
corresponding to laser pulses and a laser frequency modulation represented figure \ref{lasermod} and with a drifting phase.
\begin{figure}
\begin{center}
\includegraphics[width=8cm]{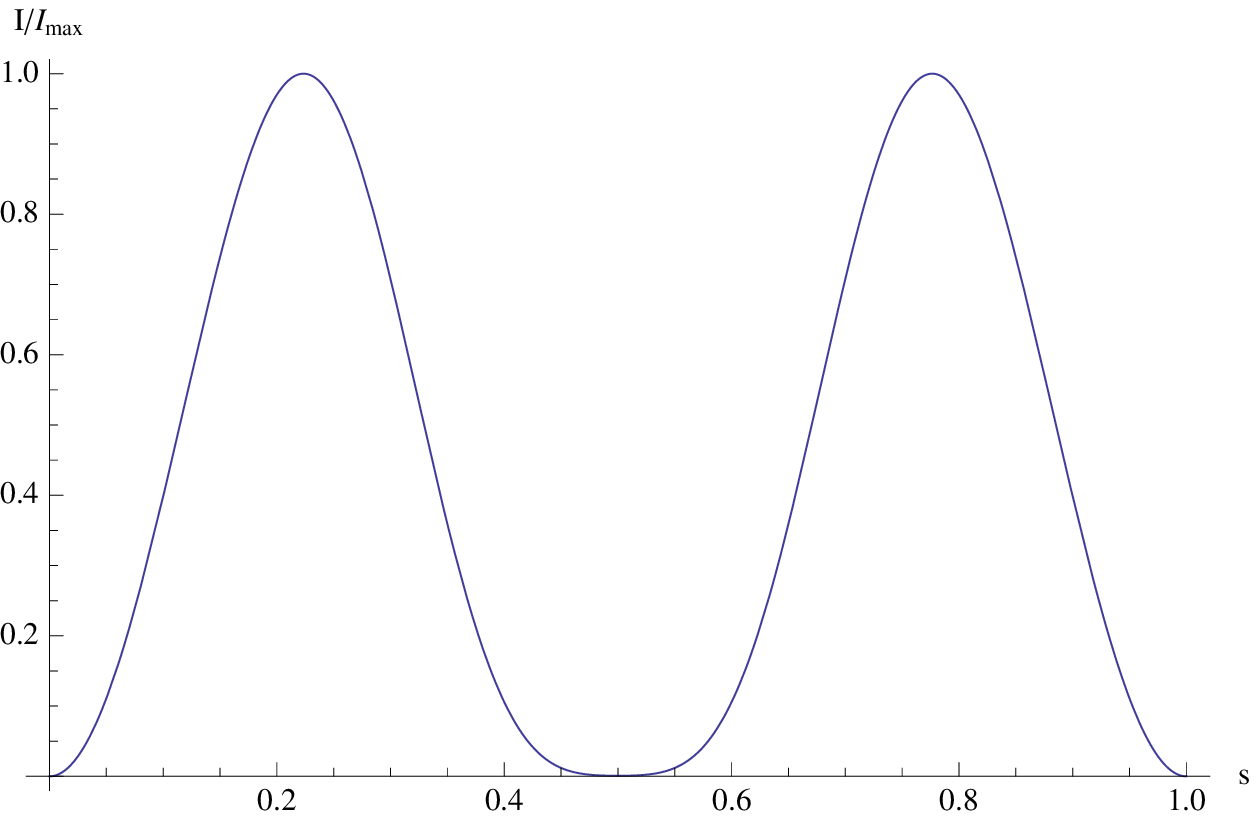} \includegraphics[width=8cm]{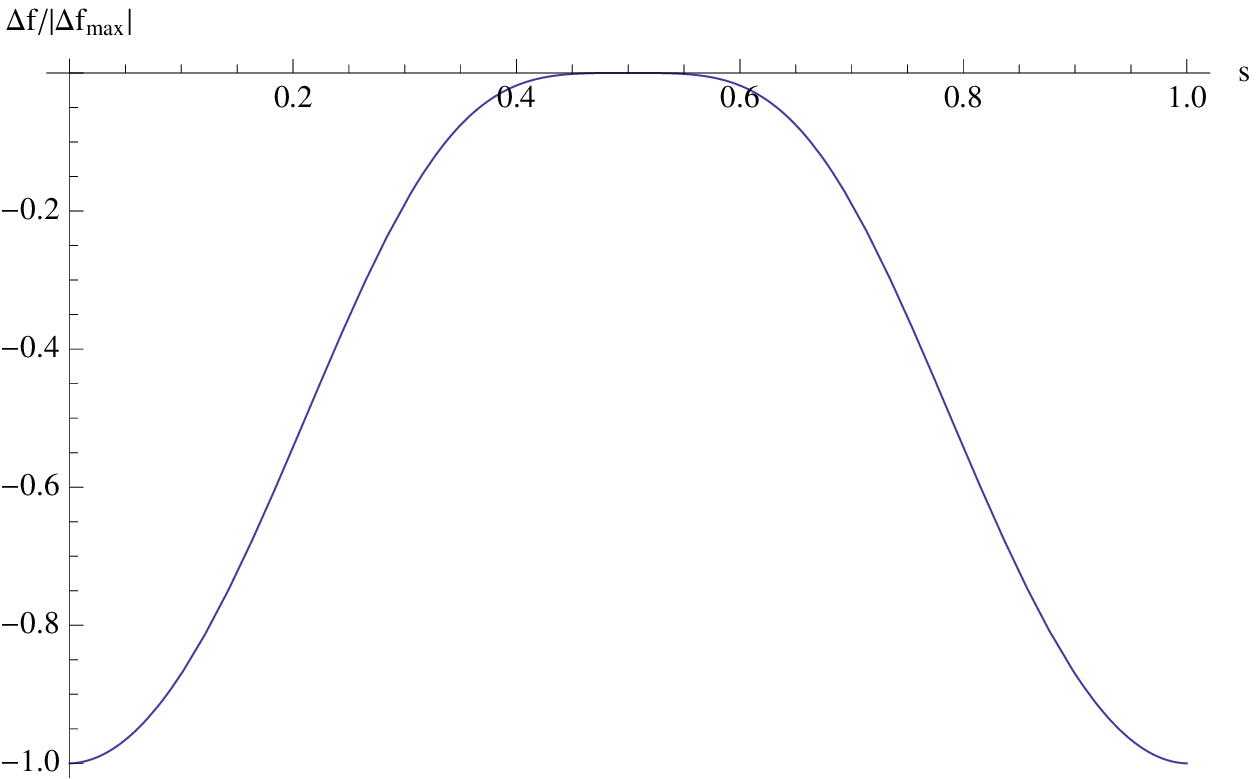}
\caption{\label{lasermod} Intensity of the laser pulses applied on the atom with respect to the reduced time (up) and difference of the modulated laser frequency with the frequency of the atomic transition with respect to the reduced time (down).}
\end{center}
\end{figure}

\subsubsection{Adiabatic transports:}
We start with the both qubits in the ground state, $\phi_{00}(0)$ with $\Omega(0)=0$ (the control laser is off). The adiabatic transport of the density matrix for $\S$ alone is
\begin{equation}
\rho_{alone-ad}(s) = |\zeta_{0}(s) \rangle \langle \zeta_{0}(s)|
\end{equation}
If the dynamics of the both qubits is strongly adiabatic, the adiabatic transport of the density matrix is
\begin{eqnarray}
\rho_{strong-ad}(s) & = & \rho_{00}(s) \\
& = & |\zeta_0(s)\rangle\langle \zeta_0(s)| \nonumber \\
& & +\frac{\epsilon V_1}{-\hbar r(s) + \epsilon(V_0 - V_1 \cos \varphi(s) \sin \theta(s))} \nonumber \\
& & \times\left( (e^{2\imath \varphi(s)} \sin^2 \frac{\theta(s)}{2} - \cos^2 \frac{\theta(s)}{2}) |\zeta_0(s)\rangle \langle \zeta_1(s)| \right.\nonumber \\
& & \left. +(e^{-2\imath \varphi(s)} \sin^2 \frac{\theta(s)}{2} - \cos^2 \frac{\theta(s)}{2}) |\zeta_1(s)\rangle \langle \zeta_0(s)| \right)
\end{eqnarray}
and if the dynamics is weakly adiabatic, the adiabatic transport of the density matrix is
\begin{equation}
\rho_{weak-ad}(s) = \Ad\left[\Teg^{-\ihbar^{-1} T \int_0^s E_0^{(1)}(\sigma)d\sigma} \Ted^{- \int_0^s A^{(1)}_0 (\sigma)d\sigma} \right] \rho_{00}(s)
\end{equation}
with
\begin{equation}
E_0^{(1)} = \lambda_{00} |\zeta_{00}^{(1)}\rangle \langle \zeta_{00}^{(1)}|+\lambda_{10} |\zeta_{10}^{(1)}\rangle \langle \zeta_{10}^{(1)}|
\end{equation}
($\eta^{(1)} = 0$ because $(\xi_\beta)$ are independent of $s$), and
\begin{equation}
A_0^{(1)} = \sum_{b,c=0}^{1}\langle \zeta_{b0}^{(1)}|\zeta_{c0}^{(1)\prime}\rangle |\zeta_{b0}^{(1)}\rangle \langle \zeta_{c0}^{(1)}| 
\end{equation}
where
\begin{eqnarray}
\lambda_{00} & = & \frac{\hbar}{2} r (\cos \theta-1) + \epsilon (V_0 - V_1 \cos \varphi \sin \theta) + \mathcal O(\epsilon^2) \\
\lambda_{10} & = & \frac{\hbar}{2} r (\cos \theta+1) + \epsilon (V_0 + V_1 \cos \varphi \sin \theta) + \mathcal O(\epsilon^2) \\
\lambda_{01} & = & \frac{\hbar}{2} r (\cos \theta-1) + \hbar\omega_e + \epsilon (2V_0 - V_2 \cos \varphi \sin \theta) + \mathcal O(\epsilon^2) \\
\lambda_{11} & = & \frac{\hbar}{2} r (\cos \theta+1) + \hbar\omega_e + \epsilon (2V_0 + V_2 \cos \varphi \sin \theta) + \mathcal O(\epsilon^2) \\
\end{eqnarray}
and
\begin{eqnarray}
\zeta_{00}^{(1)} & = & \zeta_0 + \frac{\epsilon V_1 (e^{-2\imath \varphi} \sin^2 \frac{\theta}{2} - \cos^2 \frac{\theta}{2})}{-\hbar r+\epsilon(V_0-V_1 \cos \varphi \sin \theta)} \zeta_1 \\
\zeta_{10}^{(1)} & = & \zeta_1 + \frac{\epsilon V_1 (e^{2\imath \varphi} \sin^2 \frac{\theta}{2} - \cos^2 \frac{\theta}{2})}{\hbar r+\epsilon(V_0+V_1 \cos \varphi \sin \theta)} \zeta_0 \\
 \zeta_{01}^{(1)} & = & \zeta_0 + \frac{\epsilon V_2 (e^{-2\imath \varphi} \sin^2 \frac{\theta}{2} - \cos^2 \frac{\theta}{2})}{-\hbar r+\epsilon(2V_0-V_1 \cos \varphi \sin \theta)} \zeta_1 \\
\zeta_{11}^{(1)} & = & \zeta_1 + \frac{\epsilon V_2 (e^{2\imath \varphi} \sin^2 \frac{\theta}{2} - \cos^2 \frac{\theta}{2})}{\hbar r+\epsilon(2V_0+V_1 \cos \varphi \sin \theta)} \zeta_0
\end{eqnarray}
The energies of the both qubits are represented figure \ref{lambda}.
\begin{figure}
\begin{center}
\includegraphics[width=9cm]{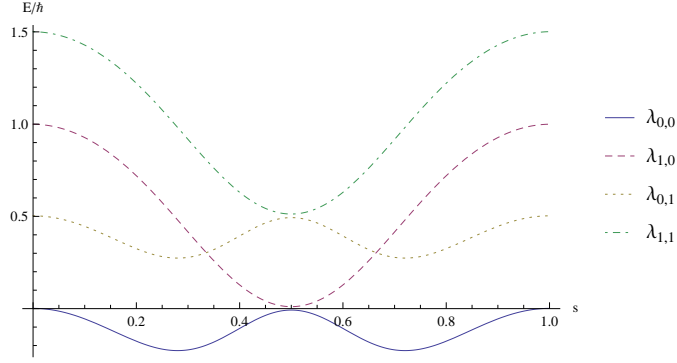}
\caption{\label{lambda} Instantaneous energies of the both atoms during the control with respect to the reduced time (with $\hbar \omega_e = 0.5\ au$, $r_{max}=1\ au$, $r_{min} = 0.02\ au$, $V_0=3\ au$, $V_1=1.5\ au$, $V_2=0.5\ au$, $V_3=2.5\ au$, $\theta_{max}=\frac{\pi}{2}$ and $\epsilon = 5\times10^{-4}$ ( $au$: atomic unit)).}
\end{center}
\end{figure}

\subsubsection{Strong adiabatic regime:}
We study a strong adiabatic regime where $T = 20000\ au$, $\tau_\S = \inf_{s\in[0,1]}\frac{\hbar}{|\mu_1(s)-\mu_0(s)|} = 2\ au$ and $\theta^\epsilon = \frac{\hbar}{\epsilon \|V\|} = 21\ au$ ($au$: atomic unit). We have $T \gg \theta^\epsilon \sim \tau_\S$ and there do not have resonance between transitions of $\S$ and $\E$ involving $\phi_{00}$ as shown figure \ref{lambda}. The assumptions of the theorem \ref{strongth} are then satisfied. The population of the qubit state 0 $\langle \zeta_0(0)|\rho(s)|\zeta_0(0)\rangle$ and the coherence of the controlled atom $|\langle \zeta_0(0)|\rho(s)|\zeta_1(0)\rangle|$ (note that $(\zeta_0(0),\zeta_1(0))$ is the eigenstate of the bare atom $\S$ since the laser is off at $s=0$) are represented figure \ref{figstrongad}.
\begin{figure}
\begin{center}
\includegraphics[width=9cm]{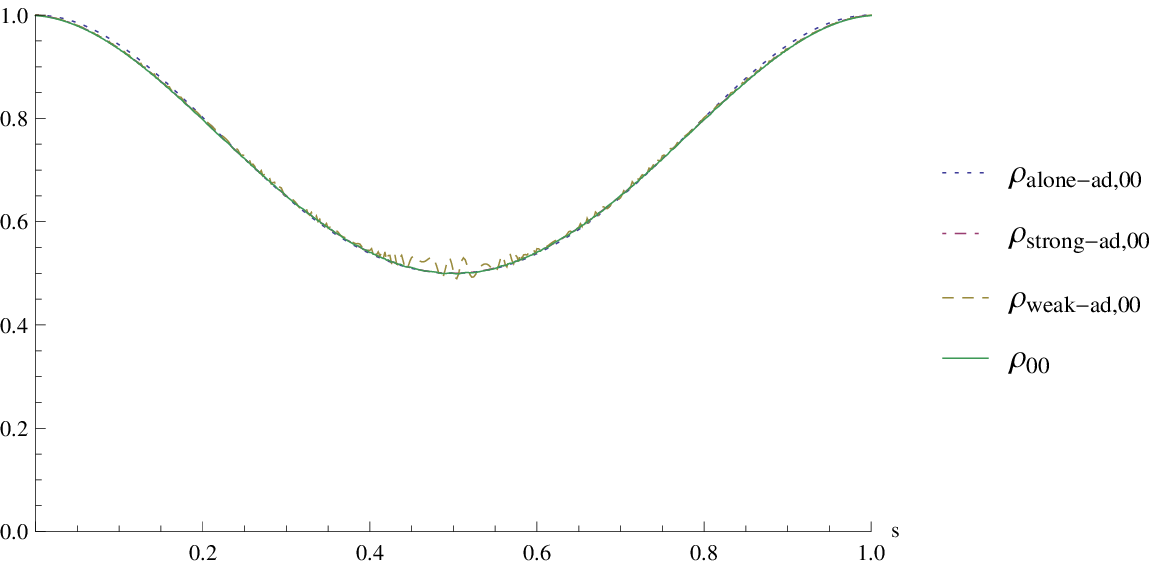} \includegraphics[width=9cm]{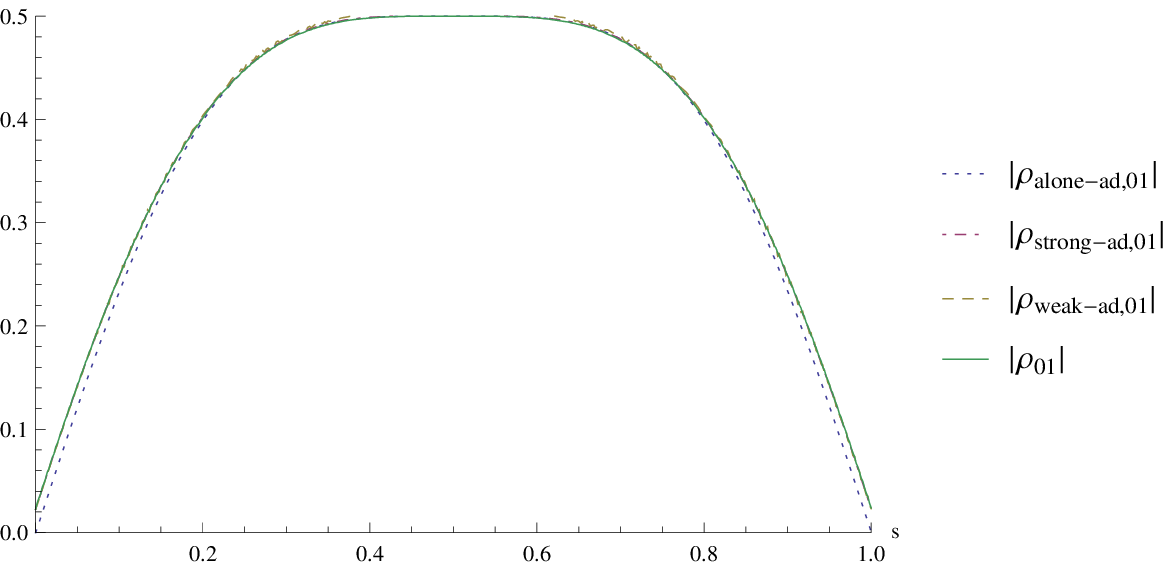}
\caption{\label{figstrongad} Population of the qubit state 0 $\rho_{\bullet,00} = \langle \zeta_0(0)|\rho_\bullet(s)|\zeta_0(0)\rangle$ (up) and coherence $\rho_{\bullet,01} = |\langle \zeta_0(0)|\rho_\bullet(s)|\zeta_1(0)\rangle|$ (down) for the exact dynamics ($\bullet=\varnothing$), the adiabatic transport formula with $\S$ alone ($\bullet = \text{alone-ad}$), the strong adiabatic transport formula ($\bullet=\text{strong-ad}$) and the weak adiabatic transport formula ($\bullet=\text{weak-ad}$); in conditions corresponding to a strong adiabatic regime (with $\hbar \omega_e = 1.5\ au$, $r_{max}=1\ au$, $r_{min} = 0.5\ au$, $V_0=3\ au$, $V_1=1.5\ au$, $V_2=0.5\ au$, $V_3=2.5\ au$, $\theta_{max}=\frac{\pi}{2}$ and $\epsilon = 1.6\times10^{-2}$ ($au$: atomic unit)).}
\end{center}
\end{figure}
The errors between the different adiabatic transport formulae and the exact dynamics are drawn figure \ref{errstrongad}.
\begin{figure}
\begin{center}
\includegraphics[width=10cm]{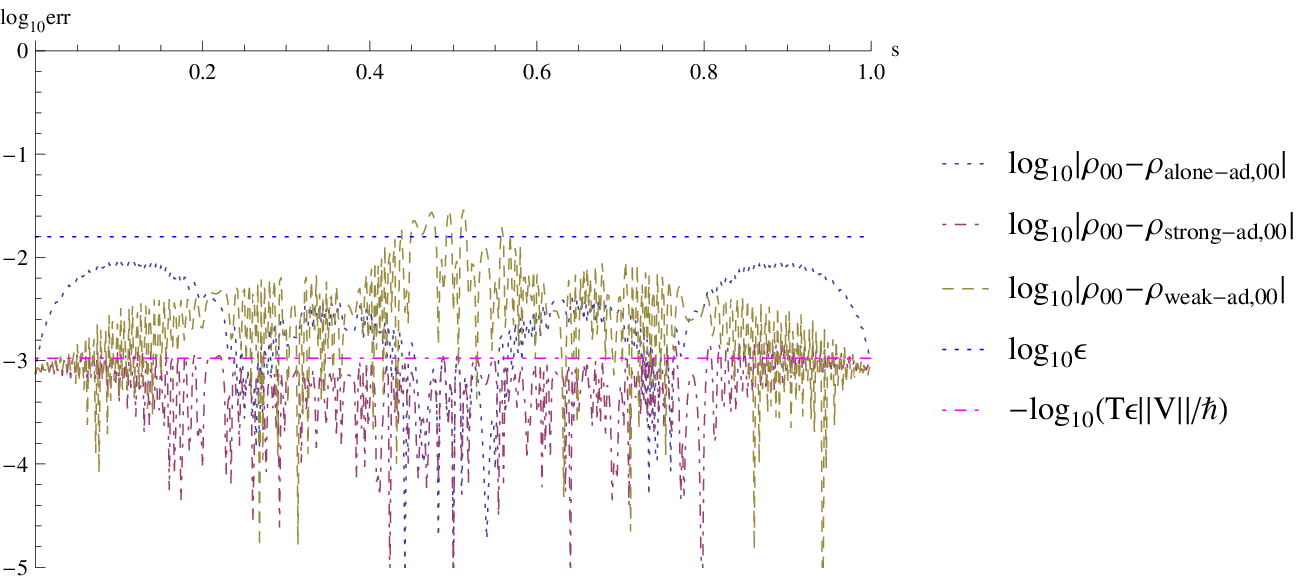} \includegraphics[width=10cm]{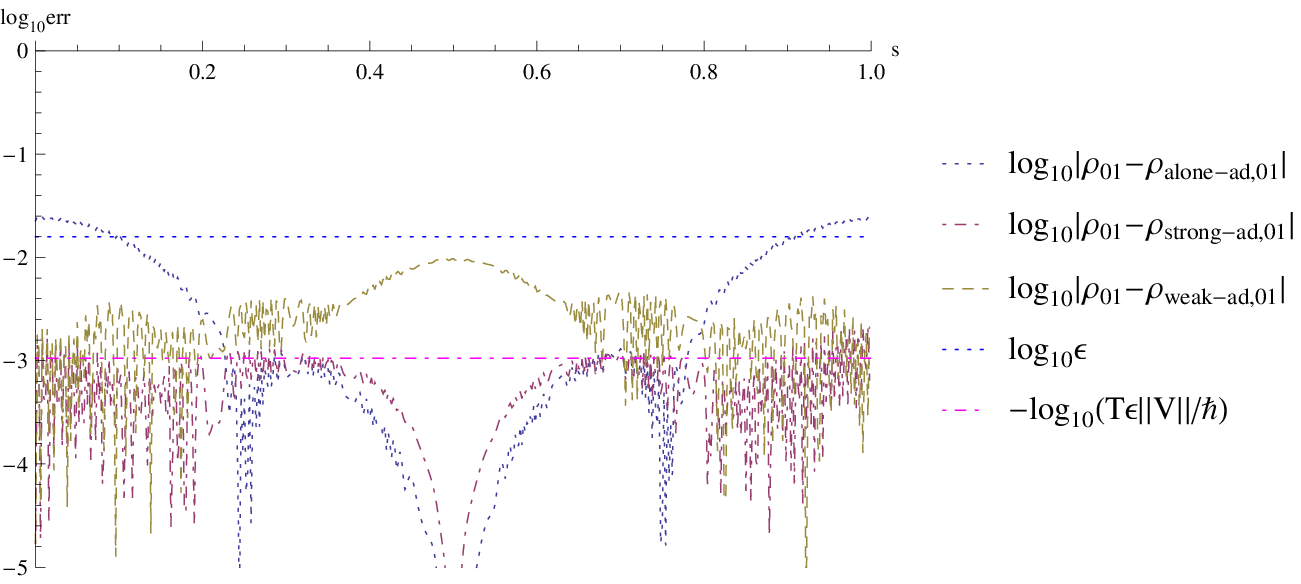}
\caption{\label{errstrongad} Errors in logarithmic scale between the approximations of the adiabatic transport formulae and the exact dynamics for the population of the qubit state 0 (up) and the coherence (down)  in conditions corresponding to a strong adiabatic regime (with $\hbar \omega_e = 1.5\ au$, $r_{max}=1\ au$, $r_{min} = 0.5\ au$, $V_0=3\ au$, $V_1=1.5\ au$, $V_2=0.5\ au$, $V_3=2.5\ au$, $\theta_{max}=\frac{\pi}{2}$ and $\epsilon = 1.6\times10^{-2}$ ($au$: atomic unit)).}
\end{center}
\end{figure}
The errors concerning the population reach $10^{-2}$ with the prediction of adiabatic transport formula for $\S$ alone (in accordance with the fact that the order of the coupling between $\S$ and $\E$ is $\epsilon = 1.6\times 10^{-2}$) while the errors concerning the coherence reach $2.5\times 10^{-2}$. The strong adiabatic transport formula permits to gain more than one order of magnitude on the errors in accordance with the theoretical error $\frac{\theta^\epsilon}{T} = 10^{-3}$.

\subsubsection{Weak adiabatic regime:}
We study a weak adiabatic regime where $T = 200\ au$, $\tau_\S = \inf_{s\in[0,1]}\frac{\hbar}{|\mu_1(s)-\mu_0(s)|} = 50\ au$, $\theta^\epsilon = \frac{\hbar}{\epsilon \|V\|} = 667\ au$ and $\tau_\E = \frac{1}{\omega_e}=2\ au$ ($au$: atomic unit). We have $\theta^\epsilon \sim \tau_\S \sim T$ and $T \gg \tau_\E$ and there are no quasi-resonance between transitions of $\S$ and $\E$. The assumptions of the theorem \ref{weakth} are then satisfied. The population of the qubit state 0 $\langle \zeta_0(0)|\rho(s)|\zeta_0(0)\rangle$ and the coherence of the controlled atom $|\langle \zeta_0(0)|\rho(s)|\zeta_1(0)\rangle|$ are represented figure \ref{figweakad}.
\begin{figure}
\begin{center}
\includegraphics[width=9cm]{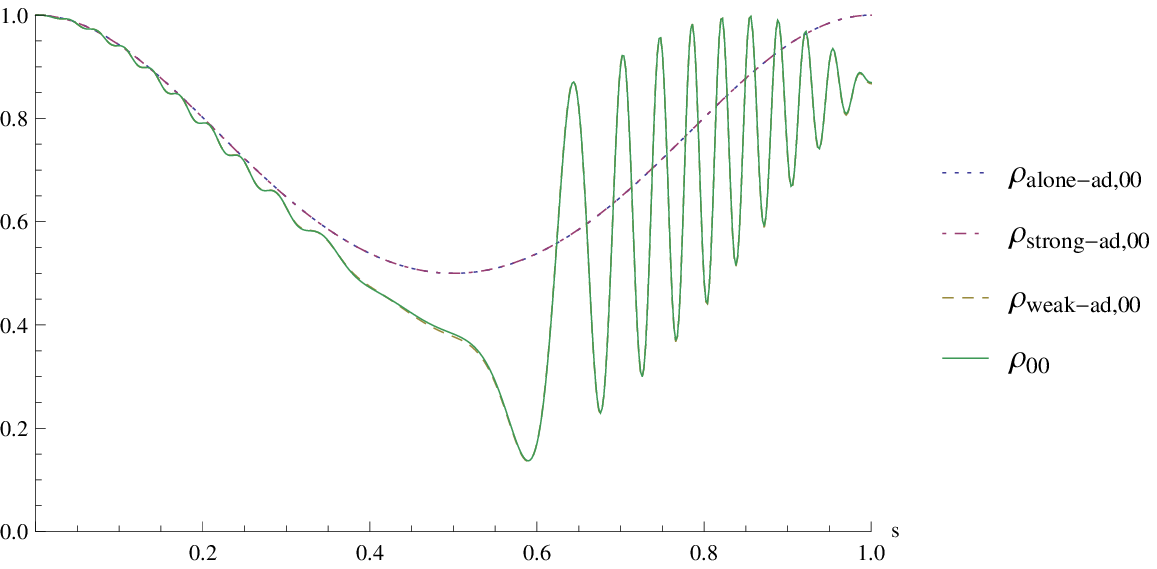} \includegraphics[width=9cm]{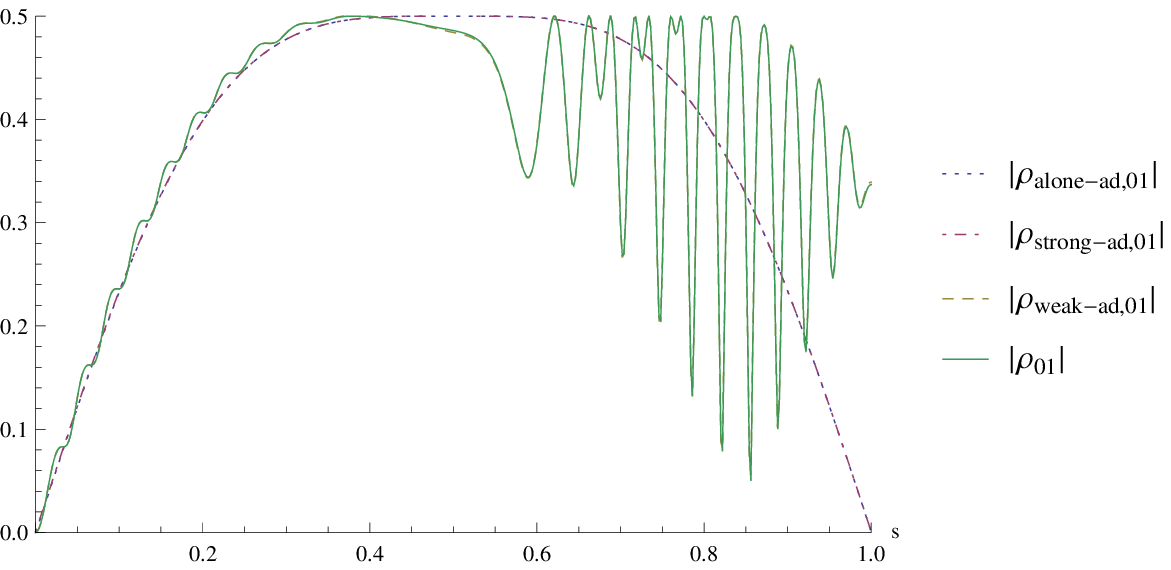}
\caption{\label{figweakad} Population of the qubit state 0 $\rho_{\bullet,00} = \langle \zeta_0(0)|\rho_\bullet(s)|\zeta_0(0)\rangle$ (up) and coherence $\rho_{\bullet,01} = |\langle \zeta_0(0)|\rho_\bullet(s)|\zeta_1(0)\rangle|$ (down) for the exact dynamics ($\bullet=\varnothing$), the adiabatic transport formula with $\S$ alone ($\bullet = \text{alone-ad}$), the strong adiabatic transport formula ($\bullet=\text{strong-ad}$) and the weak adiabatic transport formula ($\bullet=\text{weak-ad}$); in conditions corresponding to a weak adiabatic regime (with $\hbar \omega_e = 0.5\ au$, $r_{max}=1\ au$, $r_{min} = 0.02\ au$, $V_0=3\ au$, $V_1=1.5\ au$, $V_2=0.5\ au$, $V_3=2.5\ au$, $\theta_{max}=\frac{\pi}{2}$ and $\epsilon = 5\times10^{-4}$ ($au$: atomic unit)). Remark: the alone and the strongly adiabatic cases are graphically merged; the weak adiabatic and the exact cases are graphically merged.}
\end{center}
\end{figure}
The errors between the different adiabatic transport formulae and the exact dynamics are drawn figure \ref{errweakad}.
\begin{figure}
\begin{center}
\includegraphics[width=10cm]{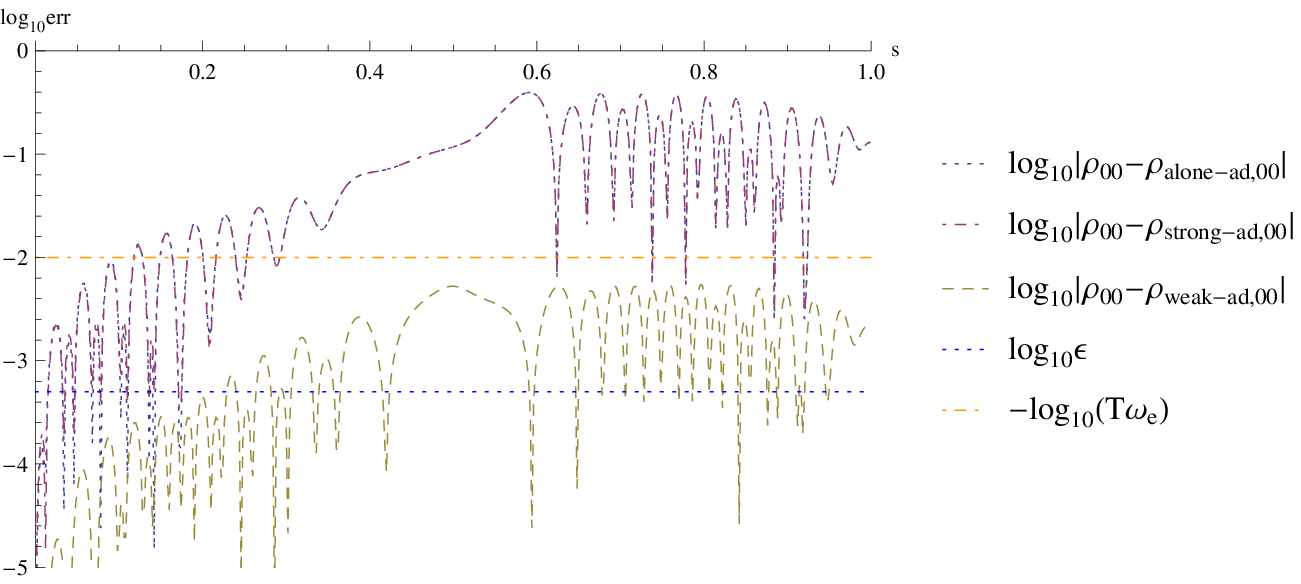} \includegraphics[width=10cm]{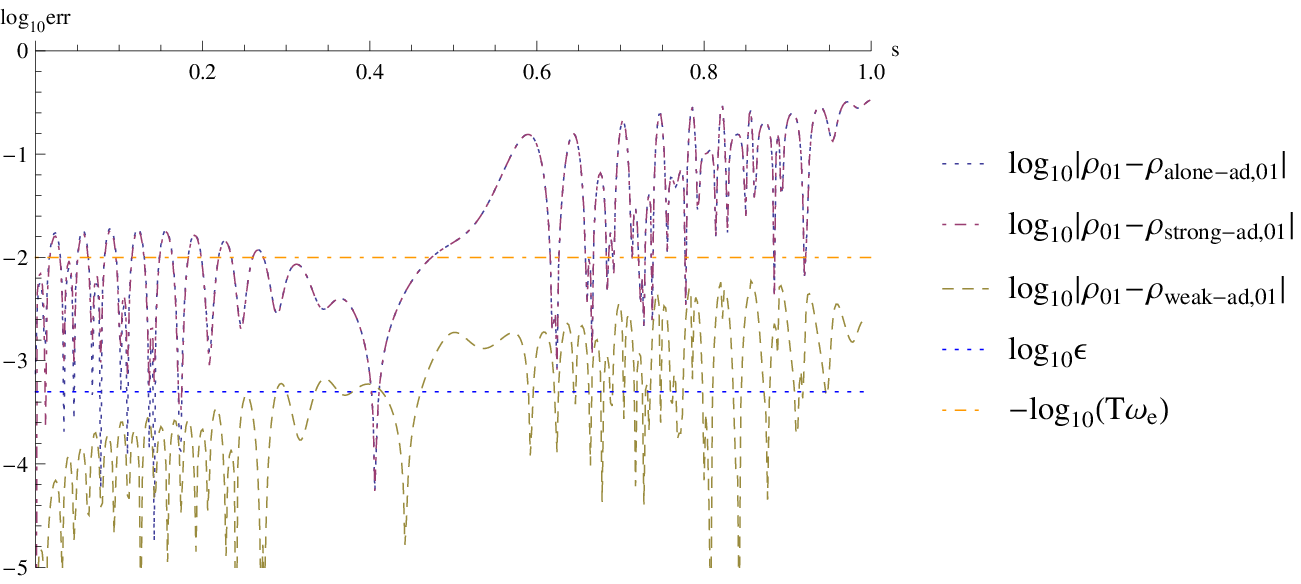}
\caption{\label{errweakad} Errors in logarithmic scale between the approximations of the adiabatic transport formulae and the exact dynamics for the population of the qubit state 0 (up) and the coherence (down)  in conditions corresponding to a weak adiabatic regime (with $\hbar \omega_e = 0.5\ au$, $r_{max}=1\ au$, $r_{min} = 0.02\ au$, $V_0=3\ au$, $V_1=1.5\ au$, $V_2=0.5\ au$, $V_3=2.5\ au$, $\theta_{max}=\frac{\pi}{2}$ and $\epsilon = 5\times10^{-4}$ ($au$: atomic unit)). Remark: the alone and the strongly adiabatic cases are graphically merged.}
\end{center}
\end{figure}
The errors of the prediction of the adiabatic transport formula with $\S$ alone is now very large in accordance with the very small gap between the two eigenvalues of $H_\S(s)$ during the dynamics. The weak adiabatic transport formula provides a very good approximation with an error smaller than $\frac{\tau_\E}{T} = 10^{-3}$ in accordance with the theoretical error $\max(\frac{\tau_\E}{T},\epsilon^2)$.

\subsection{Control of a spin in the middle of a chain}
\subsubsection{The model:}
We consider a Heisenberg line chain of $2N+1$ spins with nearest neighbour interaction. A constant and uniform magnetic field $\vec B_{Zeeman} = -\frac{\omega_e}{2} \vec e_z$ is applied on all the spins of the chain in order to split the energy levels of the spins by a Zeeman effect. A time dependent magnetic field $\vec B_{control}(s)$ is applied only on the middle spin denoted by $\S$ to control it. $\S$ is governed by the Hamiltonian
\begin{eqnarray}
H_\S(s) & = & \vec B(s) \cdot \vec S \\
& = & \frac{\hbar}{2}(B_x(s) \sigma_x + B_y(s) \sigma_y + B_z(s) \sigma_z) 
\end{eqnarray}
$\vec S = \frac{\hbar}{2}(\sigma_x,\sigma_y,\sigma_z)$ is the spin operator ($\{\sigma_i\}_i$ are the Pauli matrices) and $\vec B(s) = \vec B_{control}(s) + \vec B_{Zeeman}$. The rest of the chain is denoted by $\E$ and is described by the Hilbert space $\mathcal H_\E = \mathcal H_{\E l} \otimes \mathcal H_{\E r}$ where $\mathcal H_{\E l/r} = (\mathbb C^2)^{\otimes N}$ are the Hilbert spaces of the half chains on the left and on the right of the controlled spin. $\E$ is governed by the Hamiltonian
\begin{equation}
H_\E = H_{\mathcal C} \otimes \id^{\otimes N} + \id^{\otimes N} \otimes H_{\mathcal C}
\end{equation}
\begin{eqnarray}
H_{\mathcal C} & = & \sum_{n=1}^{N} \id^{\otimes(n-1)} \otimes \vec B_{Zeeman} \cdot \vec S \otimes \id^{\otimes(N-n)} \nonumber \\
& & \quad - J \sum_{n=1}^{N-1} \id^{\otimes(n-1)} \otimes \vec S \odot \vec S \otimes \id^{\otimes(N-n-1)}
\end{eqnarray}
where $\id$ denotes the identity operator for one spin, $\vec S \odot \vec S = \sum_{i=x,y,z} S_i \otimes S_i$, and $J$ is the coupling constant. The interaction between $\S$ and $\E$ is described by
\begin{equation}
V_{\S-\E} = -J \vec S \odot \left(\id^{\otimes (N-1)} \otimes \vec S \otimes \id^{\otimes N} + \id^{\otimes N} \otimes \vec S \otimes \id^{\otimes(N-1)} \right)
\end{equation}
with $V_{\S-\E} \in \mathcal H_\S \otimes \mathcal H_\E$. The coupling constant $J \ll 1$ is the perturbative parameter.\\
Let $B(s) = \|\vec B(s)\|$, $\theta(s) = \arccos \frac{B_z(s)}{B(s)}$ and $\varphi(s) = \arctan \frac{B_y(s)}{B_x(s)}$. The eigenvalues and the eigenvectors of $\S$ are
\begin{eqnarray}
\mu_0(s) = - \frac{\hbar}{2} B(s) & \qquad & \zeta_0(s) = \left(\begin{array}{c} - \sin \frac{\theta(s)}{2} \\ e^{\imath \varphi(s)} \cos \frac{\theta(s)}{2} \end{array} \right) \\
\mu_1(s) = + \frac{\hbar}{2} B(s) & \qquad & \zeta_1(s) = \left(\begin{array}{c} e^{-\imath \varphi(s)} \cos \frac{\theta(s)}{2} \\ \sin \frac{\theta(s)}{2} \end{array} \right)
\end{eqnarray}
The eigenvalues of $\E_l$ or $\E_r$ for $N=3$ are
\begin{eqnarray}
\nu_{(000)} & = & -3 \frac{\hbar \omega_e}{4} - J \frac{\hbar^2}{2} + \mathcal O(J^2) \\
\nu_{(100)-(001)} & = & - \frac{\hbar \omega_e}{4} + \mathcal O(J^2)  \\
\nu_{(001)+(010)+(100)} & = & - \frac{\hbar \omega_e}{4} - J \frac{\hbar^2}{2} + \mathcal O(J^2) \\
\nu_{(001)-2(010)+(100)} & = & - \frac{\hbar \omega_e}{4} + J\hbar^2 + \mathcal O(J^2) \\
\nu_{(110)-(011)} & = & \frac{\hbar \omega_e}{4} + \mathcal O(J^2)  \\
\nu_{(011)+(101)+(110)} & = & \frac{\hbar \omega_e}{4} - J \frac{\hbar^2}{2} + \mathcal O(J^2) \\
\nu_{(011)-2(101)+(110)} & = & \frac{\hbar \omega_e}{4} + J\hbar^2 + \mathcal O(J^2) \\
\nu_{(111)} & = & 3 \frac{\hbar \omega_e}{4} - J \frac{\hbar^2}{2} + \mathcal O(J^2)
\end{eqnarray}
which are associated with the eigenvectors
\begin{eqnarray}
\xi_{(ijk)} & = & |i\rangle \otimes |j\rangle \otimes |k\rangle \\
\xi_{a(ijk)+b(lmn)+c(opq)} & = & \frac{1}{\sqrt{a^2+b^2+c^2}}(a \xi_{ijk} + b \xi_{lmn} + c \xi_{opq})
\end{eqnarray}
$(|i\rangle)_{i=0,1})$ being the eigenstates of an isolated spin.\\
The control is fixed by the following variation of the control parameters:
\begin{eqnarray}
B(s) & = & B_0(1-e^{-(s-0.5)^2/\Delta s^2})+B_{min} \\
\theta(s) & = & \pi(1-\sin(\pi s)) \\
\varphi(s) & = & 2\pi s
\end{eqnarray}

\subsubsection{Adiabatic transports:}
We start with the chain in a state $\phi_{0\alpha_l\alpha_d}$ where $\alpha_l,\alpha_g \in \{(000),(100)-(001),...,(011)-2(101)+(110),(111)\}$ corresponding to the states of left and right half chains. The adiabatic transport of the density matrix for $\S$ alone is
\begin{equation}
\rho_{alone-ad}(s) = |\zeta_0(s)\rangle \langle \zeta_0(s)|
\end{equation}
If the dynamics of the chain is strongly adiabatic, the adiabatic transport of the density matrix is
\begin{eqnarray}
\rho_{strong-ad}(s) & = & \rho_{0\alpha_l \alpha_r}(s) \\
& = & |\zeta_0(s)\rangle \langle \zeta_0(s)| \nonumber \\
& & \quad + \frac{J\hbar}{4} n_{\alpha_l\alpha_d} \frac{\sin \theta(s)}{-B(s)+\frac{J\hbar}{4} n_{\alpha_l\alpha_r} \cos \theta(s)} \nonumber \\
& & \quad \times \left(|\zeta_1(s)\rangle \langle \zeta_0(s)| + |\zeta_0(s)\rangle \langle \zeta_1(s)| \right)
\end{eqnarray}
where $n_{\alpha_g \alpha_d} = n_{\alpha_g} + n_{\alpha_d}$ is a number defined by table \ref{alpha}.
\begin{table}
\begin{center}
\caption{\label{alpha} Values of the number $n_\alpha$ characterizing the coupling in a half chain in the state $\xi_\alpha$.}
\begin{tabular}{r|c}
$\alpha$ & $n_\alpha$ \\
\hline
$(111)$ & $-1$ \\
$(110)-2(101)+(011)$ & $-\frac{2}{3}$ \\
$(110)+(101)+(011)$ & $-\frac{1}{3}$ \\
$(110)-(011)$ & $0$ \\
$(100)-(001)$ & $0$ \\
$(100)+(010)+(001)$ & $\frac{1}{3}$ \\
$(100)-2(010)+(001)$ & $\frac{2}{3}$ \\
$(000)$ & $1$
\end{tabular}
\end{center}
\end{table}

If the dynamics is weakly adiabatic, the adiabatic transport of the density matrix is
\begin{eqnarray}
& & \rho_{weak-ad}(s) \nonumber \\
& & \quad = \Ad\left[\Teg^{-\ihbar^{-1} T \int_0^s E_{\alpha_l\alpha_r}^{(1)}(\sigma)d\sigma} \Ted^{- \int_0^s A^{(1)}_{\alpha_l\alpha_r} (\sigma)d\sigma} \right] \rho_{0\alpha_l \alpha_r}(s)
\end{eqnarray}
with
\begin{equation}
E_{\alpha_l\alpha_r}^{(1)} = \lambda_{0\alpha_l\alpha_r} |\zeta_{0\alpha_l\alpha_r}^{(1)}\rangle \langle \zeta_{0\alpha_l\alpha_r}^{(1)}|+\lambda_{1\alpha_l\alpha_r} |\zeta_{1\alpha_l\alpha_r}^{(1)}\rangle \langle \zeta_{1\alpha_l\alpha_r}^{(1)}|
\end{equation}
($\eta^{(1)} = 0$ because $(\xi_\beta)$ are independent of $s$), and
\begin{equation}
A_{\alpha_l\alpha_r}^{(1)} = \sum_{b,c=0}^{1}\langle \zeta_{b\alpha_l\alpha_r}^{(1)}|\zeta_{c\alpha_l\alpha_r}^{(1)\prime}\rangle |\zeta_{b\alpha_l\alpha_r}^{(1)}\rangle \langle \zeta_{c\alpha_l\alpha_r}^{(1)}| 
\end{equation}
where
\begin{eqnarray}
\lambda_{0\alpha_l\alpha_r} & = & - \frac{\hbar}{2} B + \nu_{\alpha_l}+\nu_{\alpha_r} + \frac{J\hbar^2}{4}n_{\alpha_l \alpha_r} \cos \theta + \mathcal O(J^2)  \\
\lambda_{1\alpha_l\alpha_r} & = & \frac{\hbar}{2} B + \nu_{\alpha_l}+\nu_{\alpha_r} - \frac{J\hbar^2}{4}n_{\alpha_l \alpha_r} \cos \theta + \mathcal O(J^2)
\end{eqnarray}
and
\begin{eqnarray}
\zeta_{0\alpha_l\alpha_r}^{(1)} & = & \zeta_0 -\frac{J\hbar}{4}n_{\alpha_l\alpha_r}\frac{\sin \theta}{B-\frac{J\hbar}{4}n_{\alpha_l\alpha_r}\cos \theta} \zeta_1  \\
\zeta_{1\alpha_l\alpha_r}^{(1)} & = & \zeta_1 +\frac{J\hbar}{4}n_{\alpha_l\alpha_r}\frac{\sin \theta}{B-\frac{J\hbar}{4}n_{\alpha_l\alpha_r}\cos \theta} \zeta_0 
\end{eqnarray}

\subsubsection{Strong adiabatic regime:}
We study a strong adiabatic regime where $T = 5\times 10^3\ au$, $\tau_\S = \inf_{s\in[0,1]}\frac{\hbar}{|\mu_1(s)-\mu_0(s)|} = 1.5\ au$ and $\theta^J = \frac{\hbar}{\|V_{\S-\E}\|} = 2\times 10^2\ au$ ($au$: atomic unit). We have $T \gg \theta^J \gg \tau_\S$ assuring that the assumptions of theorem \ref{strongth} are satisfied. The population of the spin state 0 $\langle \zeta_0(0)|\rho(s)|\zeta_0(0)\rangle$ and the coherence of the controlled spin $|\langle \zeta_0(0)|\rho(s)|\zeta_1(0)\rangle|$ (note that $(\zeta_0(0),\zeta_1(0))$ is the eigenstate of the ``free'' spin $\S$ since the magnetic field of control is off at $s=0$) are represented figure \ref{figstrongchain}.
\begin{figure}
\begin{center}
\includegraphics[width=9cm]{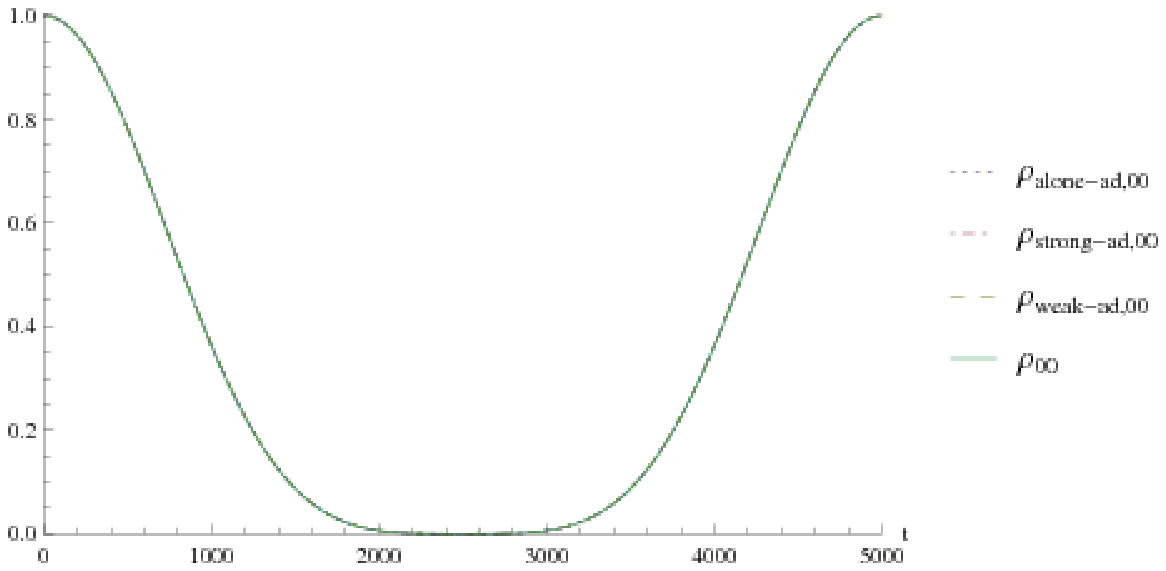} \includegraphics[width=9cm]{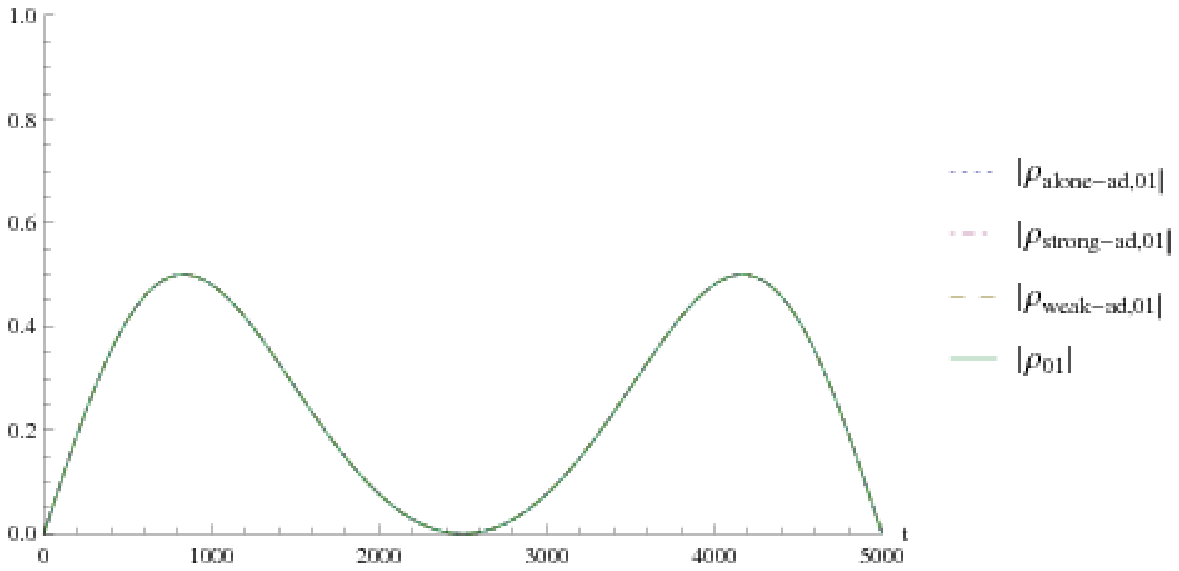}
\caption{\label{figstrongchain} Population of the spin state 0 $\rho_{\bullet,00} = \langle \zeta_0(0)|\rho_\bullet(s)|\zeta_0(0)\rangle$ (up) and coherence $\rho_{\bullet,01} = |\langle \zeta_0(0)|\rho_\bullet(s)|\zeta_1(0)\rangle|$ (down) for the exact dynamics ($\bullet=\varnothing$), the adiabatic transport formula with $\S$ alone ($\bullet = \text{alone-ad}$), the strong adiabatic transport formula ($\bullet=\text{strong-ad}$) and the weak adiabatic transport formula ($\bullet=\text{weak-ad}$); in conditions corresponding to a strong adiabatic regime (with $\hbar \omega_e = 2\ au$, $B_0=1\ au$, $B_{min} = 0.67\ au$, and  $J= 2\times 10^{-2}\ au$ ($au$: atomic unit)).}
\end{center}
\end{figure}
The errors between the different adiabatic transport formulae and the exact dynamics are drawn figure \ref{errstrongchain}.
\begin{figure}
\begin{center}
\includegraphics[width=10cm]{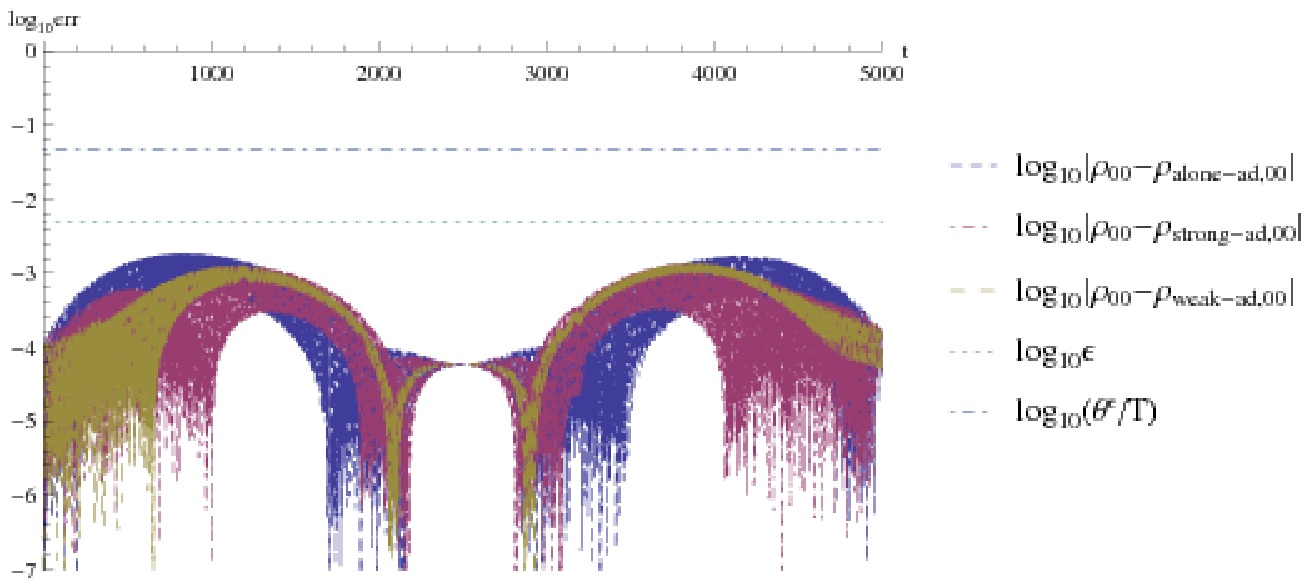} \includegraphics[width=10cm]{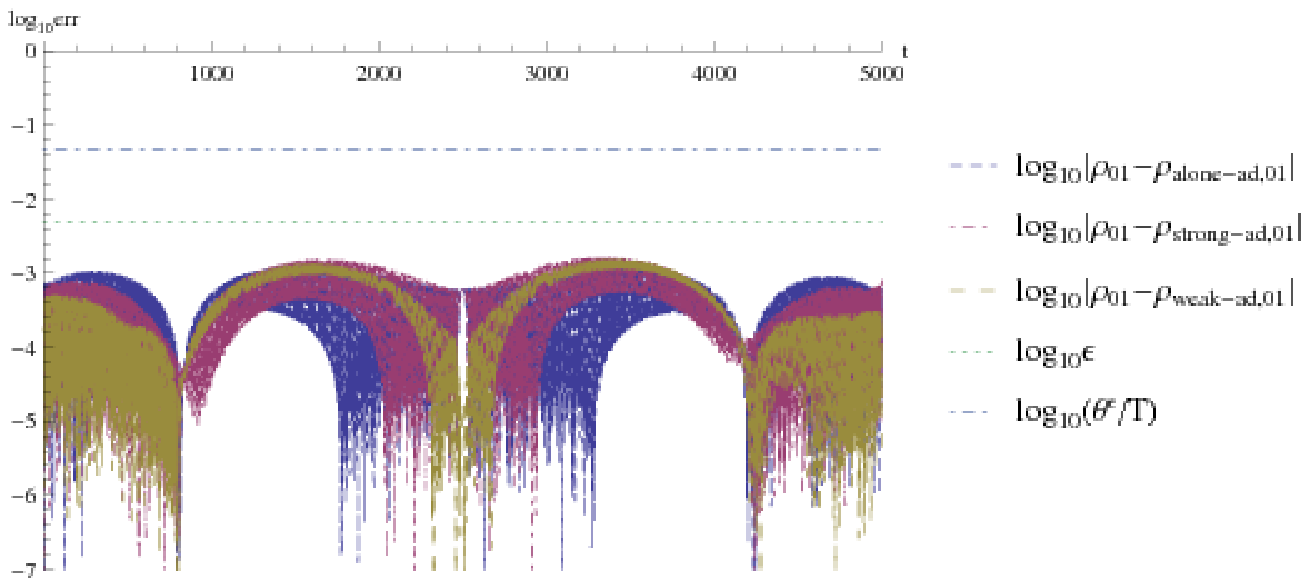}
\caption{\label{errstrongchain} Errors in logarithmic scale between the approximations of the adiabatic transport formulae and the exact dynamics for the population of the qubit state 0 (up) and the coherence (down)  in conditions corresponding to a strong adiabatic regime (with $\hbar \omega_e = 2\ au$, $B_0=1\ au$, $B_{min} = 0.67\ au$, and  $J= 2\times 10^{-2}\ au$ ($au$: atomic unit)).}
\end{center}
\end{figure}
A numerical study shows that a purely strong adiabatic regime seems not to be present for this system which presents rather regimes where the adiabatic approximation without environment, the strong adiabatic approximation and the weak adiabatic approximation are not clearly distinguishable. Nevertheless we see figure \ref{errstrongchain} that the strong adiabatic transport formula induces globally less errors.

\subsubsection{Weak adiabatic adiabatic regime:}
We study a weak adiabatic regime where $T = 50\ au$, $\tau_\S = \inf_{s\in[0,1]}\frac{\hbar}{|\mu_1(s)-\mu_0(s)|} = 10^2\ au$, $\theta^J = \frac{\hbar}{\|V_{\S-\E}\|} = 10^3\ au$ and $\tau_\E = \frac{1}{\omega_e} = 0.5\ au$ ($au$: atomic unit). We have $T \gg \tau_\E$ and $\theta^J \sim \tau_S \not\ll T$ assuring that the assumptions of theorem \ref{weakth} are satisfied. The population of the spin state 0 $\langle \zeta_0(0)|\rho(s)|\zeta_0(0)\rangle$ and the coherence of the controlled spin $|\langle \zeta_0(0)|\rho(s)|\zeta_1(0)\rangle|$ are represented figure \ref{figweakchain}.
\begin{figure}
\begin{center}
\includegraphics[width=9cm]{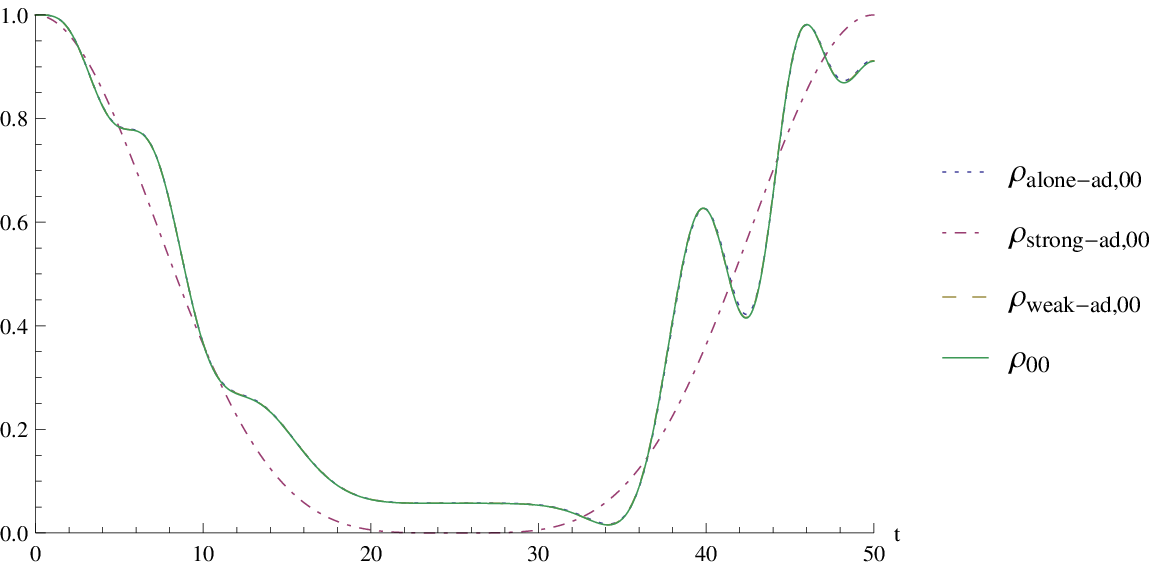} \includegraphics[width=9cm]{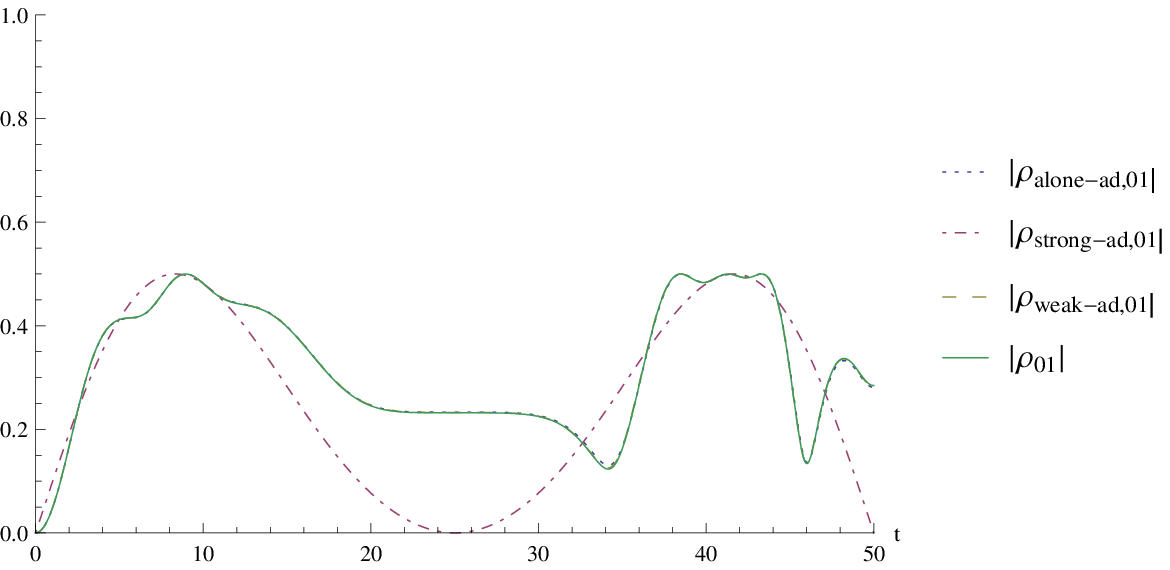}
\caption{\label{figweakchain} Population of the spin state 0 $\rho_{\bullet,00} = \langle \zeta_0(0)|\rho_\bullet(s)|\zeta_0(0)\rangle$ (up) and coherence $\rho_{\bullet,01} = |\langle \zeta_0(0)|\rho_\bullet(s)|\zeta_1(0)\rangle|$ (down) for the exact dynamics ($\bullet=\varnothing$), the adiabatic transport formula with $\S$ alone ($\bullet = \text{alone-ad}$), the strong adiabatic transport formula ($\bullet=\text{strong-ad}$) and the weak adiabatic transport formula ($\bullet=\text{weak-ad}$); in conditions corresponding to a weak adiabatic regime (with $\hbar \omega_e = 2\ au$, $B_0=1\ au$, $B_{min} = 10^{-2}\ au$, and $J = 2\times 10^{-3}$ ($au$: atomic unit)). Remark: the alone and the strongly adiabatic cases are graphically merged; the weak adiabatic and the exact cases are graphically merged.}
\end{center}
\end{figure}
The errors between the different adiabatic transport formulae and the exact dynamics are drawn figure \ref{errweakchain}.
\begin{figure}
\begin{center}
\includegraphics[width=10cm]{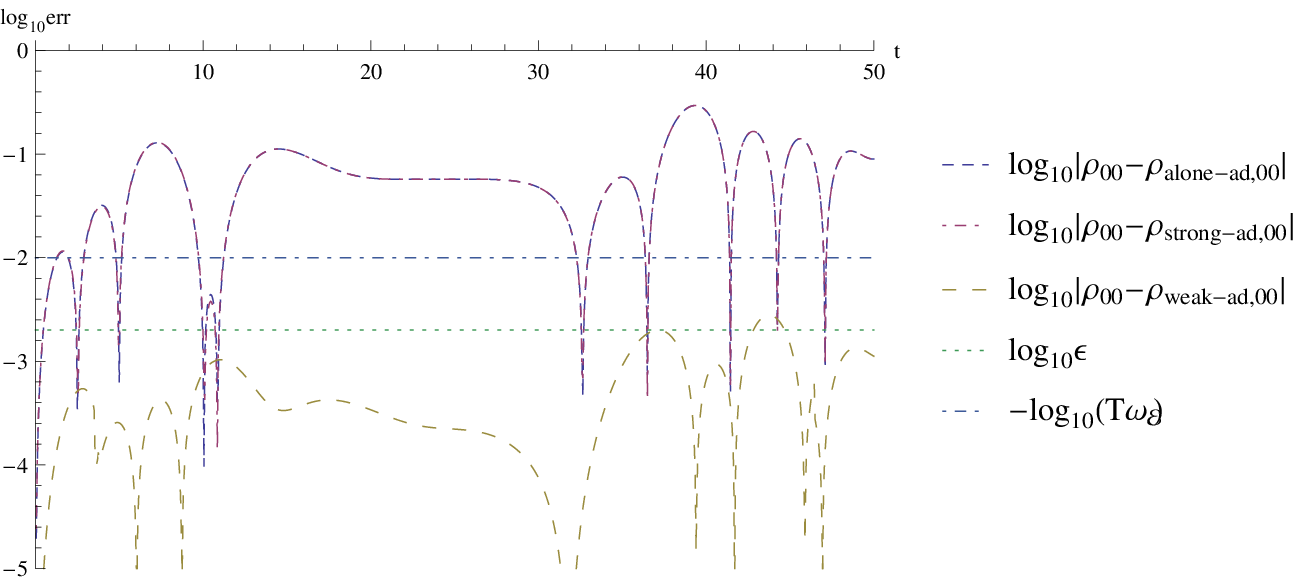} \includegraphics[width=10cm]{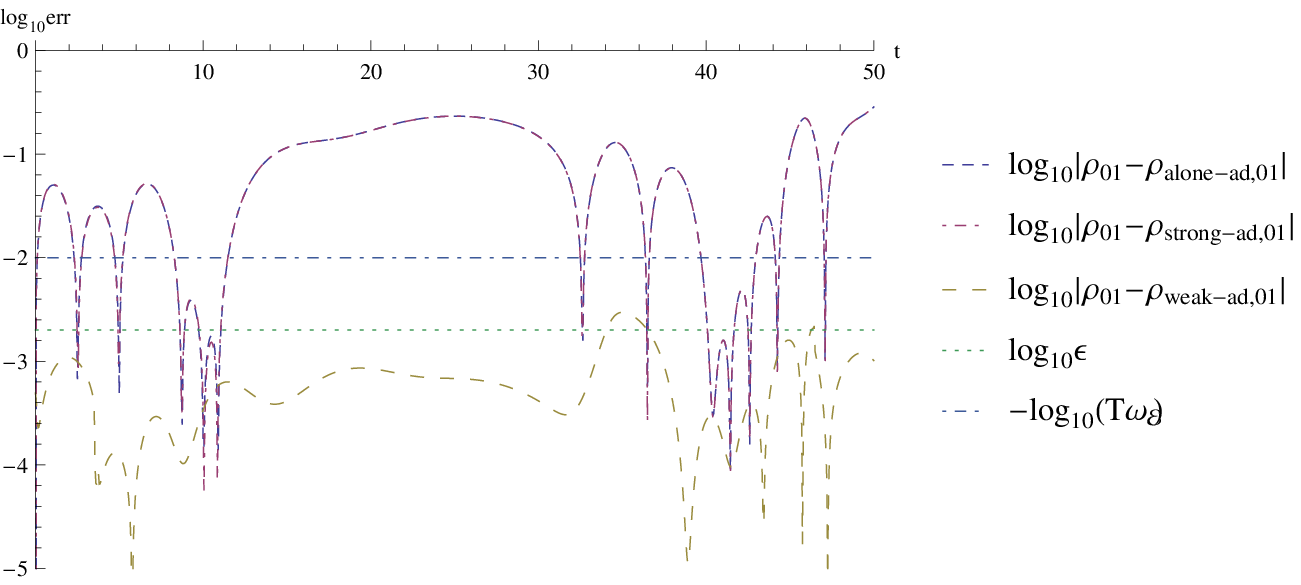}
\caption{\label{errweakchain} Errors in logarithmic scale between the approximations of the adiabatic transport formulae and the exact dynamics for the population of the spin state 0 (up) and the coherence (down)  in conditions corresponding to a weak adiabatic regime (with $\hbar \omega_e = 2\ au$, $B_0=1\ au$, $B_{min} = 10^{-2}\ au$, and $J = 2\times 10^{-3}$ ($au$: atomic unit)). Remark: the alone and the strongly adiabatic cases are graphically merged.}
\end{center}
\end{figure}
The errors of the prediction of the adiabatic transport formula with $\S$ alone is now very large in accordance with the very small gap between the two eigenvalues of $H_\S(s)$ during the dynamics. The weak adiabatic transport formula provides a very good approximation with an error smaller than $\frac{\tau_\E}{T} = 10^{-2}$ in accordance with theoretical error $\max(\frac{\tau_\E}{T},\epsilon^2)$.

\section{Conclusion}
We have shown that operator-valued geometric phases like defined by \cite{Sjoqvist,Tong,Dajka,Andersson,Viennot1,Viennot2} are exhibited by bipartite quantum systems in an adiabatic approximation with a perturbative coupling between the both parts of the system. This result remains valid if the bipartite system is constituted by a small subsystem and a large environment. Nevertheless, for a very large environment (a reservoir) the adiabatic theorem assumptions of no resonance or no quasi-resonance between transitions of $\S$ and $\E$ can be not satisfied since the spectrum of a reservoir is assimilated to a continuum \cite{Breuer}. These adiabatic operator-valued geometric phases arise when the evolution of the environment is strongly adiabatic (the favorable case for a quantum control of the subsystem) but with a subsystem evolution not necessarily adiabatic with respect to the control and to the environment effects. The operator-valued dynamical phase generator arising with the geometric phase generator, is a kind of effective Hamiltonian representing the system dressed by environment states. The second order adiabatic transport satisfies a kind of effective Lindblad equation.\\
The perturbative assumption restricts the field of applications of the present result to special situations. It would be interesting to prove that the adiabatic transport of density matrices exhibits also an operator-valued geometric phase with a strong interaction between the both parts of a bipartite system.

\appendix
\section{A corollary concerning the splitting of the time ordered exponential} \label{corap}
\begin{cor}
\label{cor}
Let $s \mapsto A(s) \in \mathcal L(\mathcal V)$ be a family of bounded anti-self-adjoint operators of an Hilbert space $\mathcal V$. Let $s \mapsto U_A(s) \in \mathcal U(\mathcal V)$ be the unitary operator strongly continuous with respect to $s$ and solution of the equation
\begin{equation}
 U_A' = -A U_A \qquad U_A(0) = 1_{\mathcal V}
\end{equation}
Let $s \mapsto B(s) \in \mathcal L(\mathcal V)$ be another family of bounded anti-self-adjoint operators, with the same notations we have
\begin{equation}
U_{A+B} = U_X U_A
\end{equation}
with 
\begin{equation}
X = A+B-U_X A U_X^{-1}
\end{equation}
\end{cor}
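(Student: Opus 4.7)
The plan is to exhibit $U_X$ explicitly as $W := U_{A+B} U_A^{-1}$ and then read off the ODE that $W$ satisfies in order to identify $X$. I would first check the initial condition $W(0) = 1_\mathcal V$, which is immediate since both $U_{A+B}$ and $U_A$ equal the identity at $s=0$, matching the Cauchy data required of $U_X$.

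The main calculation is to differentiate $W$. Using the product rule together with $U_{A+B}' = -(A+B) U_{A+B}$ and the identity $(U_A^{-1})' = U_A^{-1} A$ (obtained by differentiating $U_A U_A^{-1} = 1_\mathcal V$ and substituting $U_A' = -A U_A$), I would obtain
\begin{equation}
W' = -(A+B) W + W A.
\end{equation}
Since $W$ is unitary as a product of unitaries, $W A$ equals $W A W^{-1} W$, so the right-hand side rewrites as $-\bigl((A+B) - W A W^{-1}\bigr) W$. Setting $X := (A+B) - W A W^{-1}$, this is precisely the equation $W' = -X W$ with $W(0) = 1_\mathcal V$, which is the defining Cauchy problem for $U_X$ with the self-consistent expression for $X$ stated in the corollary. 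By uniqueness of solutions of this linear problem, $W = U_X$, and hence $U_{A+B} = W U_A = U_X U_A$.

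The only conceptual subtlety, which I would flag as the ``main obstacle,'' is that the formula $X = A+B - U_X A U_X^{-1}$ is self-referential, so the statement should be read as the existence of a coupled pair $(X, U_X)$ satisfying both equations simultaneously; the construction $W = U_{A+B} U_A^{-1}$ circumvents this by producing the pair directly, bypassing any fixed-point argument. All remaining manipulations are purely algebraic: boundedness and anti-selfadjointness of $A$ and $B$ ensure that $U_{A+B}$, $U_A$ are well-defined strongly continuous unitaries on $\mathcal V$, so no domain issues arise and no further analytic input is needed.
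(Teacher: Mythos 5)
Your proof is correct and follows essentially the same route as the paper: define the candidate $U_X$ as $U_{A+B}U_A^{-1}$, differentiate, and identify $X = A+B-U_X A U_X^{-1}$ from the resulting equation $U_X' = -X U_X$. Your explicit remarks on the initial condition, uniqueness, and the self-referential nature of the formula for $X$ are welcome clarifications but do not change the argument.
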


\begin{proof}
Let $X(s) \in \mathcal L(\mathcal V)$ be such that $U_X = U_{A+B} U_A^{-1}$.
\begin{eqnarray}
U_X U_A = U_{A+B} & \Rightarrow & U_X' U_A + U_X U_A' = U_{A+B}' \\
& \Rightarrow & -XU_X U_A - U_X A U_A = -(A+B) U_X U_A \\
& \Rightarrow & X = A+B - U_X A U_X^{-1}
\end{eqnarray}
\end{proof}
We note that $X$ is only implicitly defined.

\section*{References}

\end{document}